\documentclass[12pt]{article}
\usepackage{graphicx, amssymb, latexsym, amsfonts, amsmath, lscape, amscd,
amsthm, color, epsfig, mathrsfs, tikz, enumerate, multirow}
\usepackage{hyperref}
\usepackage{cleveref}
\usepackage{bm}
\usepackage{todonotes}
\usetikzlibrary{shapes.misc}
\tikzstyle{noeud}=[circle,inner sep=2, minimum size =3 pt, line width = 1pt, draw=black, fill=white]
\usepackage{subcaption}
\setlength{\topmargin}{-1.5cm}
\setlength{\textheight}{23cm} 
\setlength{\textwidth}{16cm}    
\setlength{\oddsidemargin}{0cm} 
\setlength{\evensidemargin}{0cm} 

\usepackage[utf8]{inputenc}
\usepackage{amsthm}
\usepackage{amsmath}
\usepackage{amsfonts}
\usepackage{amssymb}
\usepackage{graphicx}
\usepackage{epstopdf}
\usepackage{color}
\usepackage{multirow}
\usepackage{mathtools}
\usepackage{xcolor}
\usepackage{a4wide}
\usepackage{bm}
\usepackage{caption}
\usepackage{subcaption}
\usepackage[
  separate-uncertainty = true,
  multi-part-units = repeat
]{siunitx}
\usepackage{rotating}
\usepackage{verbatim}
\usepackage{lineno,hyperref}
\modulolinenumbers[5]
\usepackage{tikz}
\tikzstyle{noeud} = [circle, draw, fill=white, inner sep=2pt]
\newtheorem{theorem}{Theorem}[section]
\newtheorem{lemma}{Lemma}[section]

\newtheorem{observation}{Observation}[section]

\usepackage[noend,linesnumbered,ruled,lined]{algorithm2e}
\SetAlFnt{\scriptsize}
\usepackage{amsfonts}
\usepackage{makecell}

\usepackage[usenames,dvipsnames]{pstricks}
\usepackage{pstricks-add}
\usepackage{epsfig}
\usepackage{pst-grad} 
\usepackage{pst-plot} 
\usepackage[space]{grffile} 
\usepackage{etoolbox} 
\makeatletter 
\patchcmd\Gread@eps{\@inputcheck#1 }{\@inputcheck"#1"\relax}{}{}
\makeatother
\usepackage{mathtools}

\usepackage{graphicx}
\usepackage{tabularx}
\usepackage{textcomp}
\usepackage{xcolor}
\usepackage[labelformat=simple]{subcaption}

\usepackage{booktabs}

\usepackage{pgf,tikz,pgfplots}
\pgfplotsset{compat=1.15}
\usepackage{mathrsfs}
\usepackage{hyperref}
\usepackage{multicol}

\newtheorem{question}{Question}[section]
\newtheorem{definition}{Definition}[section]
\newtheorem{remark}{Remark}[section]
\newtheorem*{answer}{Answer}







\bibliographystyle{plainurl}


\title{\vspace{-1.5cm}Path Connected Dynamic Graphs with a Study of Dispersion and Exploration\footnote{A preliminary version of this work \cite{saxena_2025} is accepted in ICDCN 2025.}}
\author{
{\sc Ashish Saxena}$\,^{(a)}$, {\sc Kaushik Mondal}$\,^{(a)}$\\
\mbox{}\\
{\small $(a)$ Indian Institute of Technology Ropar, India.}\\}

\date{}

\begin{document}
\usetikzlibrary{arrows}
\maketitle
\noindent\textbf{Abstract}

\noindent In dynamic graphs, several edges may get added or deleted in a round. There are different connectivity models based on the constraints on the addition/deletion of edges. One such model is the $T$-Interval Connectivity model, where edges can be added/deleted, keeping the graph nodes connected in each synchronous round. The parameter $T$ depends on the stability of the underlying connected structure across rounds.
There is another connectivity model, namely the Connectivity Time model, where the union of all the edges present in any $T$ consecutive rounds must form a connected graph. This is much weaker than the $T$-Interval Connectivity as the graph may even be disconnected at each round. We, in this work, come up with a new connectivity model, namely $T$-Path Connectivity. In our model, the nodes may not remain connected in each round, but for any pair of nodes $u,v$, there must exist path(s) at least once in any consecutive $T$ rounds. Our model is weaker than $T$-Interval Connectivity but stronger than the Connectivity Time model.

We study the dispersion problem in our connectivity model. Dispersion is already studied in the $1$-Interval Connectivity model. We show that the existing algorithm in $1$-Interval Connected graphs for dispersion with termination does not work in our $T$-Path Connectivity model for obvious reasons. We answer what are the necessary assumptions to solve dispersion in our connectivity model. Then, we provide an algorithm that runs in optimal time with those minimal model assumptions on $T$-Path Connected graphs. Also, we show that solving dispersion is impossible in the Connectivity Time model, even in the presence of several other strong model assumptions. Further, we initiate the study of the exploration problem on these three connectivity models. We provide several impossibility results with different assumptions. In most cases, we establish necessary and sufficient conditions to solve the exploration problem using an optimal number of agents in an asymptomatically optimal time. It is also evident from the studies of dispersion as well as exploration on all the three connectivity models that, Connectivity Time model is indeed the weakest model among these three models.

\noindent \textbf{Keywords:} 
Mobile agents,
Dispersion,
Exploration,
Dynamic graphs,
Deterministic algorithm.

\section{Introduction}\label{sec:intro}

The graph exploration problem, introduced by Shannon \texorpdfstring{\cite{shannon1993}}{[Shannon, 1993]}, is a fundamental issue in theoretical computer science, particularly in the field of distributed computing involving mobile entities. This problem requires that each node in the graph be visited by one or more mobile computational entities, called agents. Visits can occur a finite number of times (known as exploration with termination) or infinitely often (termed perpetual exploration). In addition to its theoretical significance, the exploration problem is practically relevant in networks that support mobile entities, such as software agents, vehicles, or robots. By visiting all nodes, agents can identify nodes that may have issues within the network, disseminate data throughout the system, or gather specific information from the entire network.

The dispersion problem, introduced by Augustine \cite{Augustine_2018}, involves the coordination of $n$ agents on an $n$ node graph to reach a configuration where one agent is present on each node. This problem is generally applicable to real-world scenarios in which $n$ agents must coordinate and share $n$ resources located at various places. The goal is to minimize the total cost of solving the problem in such applications where the cost of agents moving around on the graph is much lower than the cost of sharing a resource by multiple agents. The dispersion problem is also connected to the exploration problem \cite{Pelc_2004}. Any solution to the dispersion of $n$ agents can be applied as a solution to the exploration with $n$ agents as long as the assumptions and model parameters remain the same. Therefore, the exploration is closely related to dispersion.

The dynamic nature of modern networks introduces new challenges in solving different algorithmic problems in the field of mobile computing and beyond as these networks keep changing over time. From the perspective of mobile agents, agents may need to carry out their tasks while their surroundings evolve. Such a dynamic graph model was introduced by Kuhn et al. \cite{Kuhn_2010} in 2010. In this dynamic graph model, Kuhn et al. \cite{Kuhn_2010} introduce a stability property called $T$-Interval Connectivity (for $T \geq 1$), which stipulates that for each $T$ consecutive rounds, there exists a stable, connected spanning sub-graph, while several other edges may get added or deleted in each and every round. The formal description of their model is as follows. Let $V$ be a set of static vertices, $S=\{(u,\,v)\,|\, u,v\in V\}$, where $(u,\,v)$ denotes an edge between $u$ $\&\;v$, and $\mathscr{P}(S)$ be the power set of the set $S$. A synchronous dynamic network is modeled as a dynamic graph $G = (V, \,E)$, where $V$ is a static set of nodes and $E : \mathbb{N } \rightarrow \mathscr{P}(S)$ is a function that maps a round number $r \in \mathbb{N} \cup \{0\}$ to a set $E(r) \in \mathscr{P}(S)$ of undirected edges. For any round $r \geq 0$, we denote the graph by $\mathcal{G}_r=(V, \, E(r))$. Kuhn et al. \cite{Kuhn_2010} provide the following definition of connectivity in the network.


\begin{definition}
    \cite{Kuhn_2010} ($T$-Interval Connectivity) A dynamic graph $G = (V, \,E)$ is $T$-Interval Connected for $T \geq 1$ if for all $r \in \mathbb{N} \cup \{0\}$, the static graph $G_{r,\,T} := (V,\, \bigcap_{i=r}^{r+T-1} E(i))$ is connected. The graph is said to be $\infty$-Interval Connected if there is a connected static graph $G'=(V, \,E')$ such that for all $r \in \mathbb{N} \cup \{0\}$, $E' \subseteq E(r)$.
\end{definition}
For $T>1$, the graph can not change arbitrarily as it needs to maintain a stable spanning sub-graph. However, for $T = 1$, this means that the graph is connected in every round, but may change arbitrarily between rounds. Therefore, we have the following observation.

\begin{observation}\label{obs:T-Interval}
A dynamic graph $G$ with $T$-Interval Connectivity also holds 1-Interval Connectivity, but the converse may not be true.
\end{observation}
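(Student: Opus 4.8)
The plan is to prove the two halves separately; both are short. For the forward implication I would invoke the monotonicity of (static) connectivity under edge insertion. Fix any $r \in \mathbb{N} \cup \{0\}$. Since $T \geq 1$ we have $\bigcap_{i=r}^{r+T-1} E(i) \subseteq E(r)$, so the static graph $\mathcal{G}_r = (V,\, E(r))$ is obtained from $G_{r,T} = (V,\, \bigcap_{i=r}^{r+T-1} E(i))$ by adding a (possibly empty) set of edges on the same vertex set $V$. By the $T$-Interval Connectivity hypothesis $G_{r,T}$ is connected, and adding edges to a connected spanning subgraph preserves connectivity, so $\mathcal{G}_r$ is connected. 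As $r$ was arbitrary, $G$ is $1$-Interval Connected.

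For the converse I would exhibit an explicit counterexample: a dynamic graph that is $1$-Interval Connected but fails $T$-Interval Connectivity for every $T \geq 2$. Take $V = \{a,b,c\}$ and let $E(r)$ alternate, round by round, between the two spanning trees $\{(a,b),(b,c)\}$ (for even $r$) and $\{(b,c),(c,a)\}$ (for odd $r$) of the triangle on $\{a,b,c\}$. Each $\mathcal{G}_r$ is a path on three vertices, hence connected, so $G$ is $1$-Interval Connected. However, for any $r$ and any $T \geq 2$, the rounds $r$ and $r+1$ have opposite parity, so $\bigcap_{i=r}^{r+T-1} E(i) \subseteq \{(a,b),(b,c)\} \cap \{(b,c),(c,a)\} = \{(b,c)\}$, and the static graph $(V,\, \{(b,c)\})$ leaves $a$ isolated; thus $G_{r,T}$ is disconnected and $G$ is not $T$-Interval Connected.

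There is essentially no obstacle here; the only points requiring a little care are the edge case $T = 1$ in the first part (where the claim is immediate because $G_{r,1} = \mathcal{G}_r$), and checking in the second part that the construction is simultaneously connected in \emph{every single} round while having non-spanning two-round intersections --- which the ``rotating spanning tree'' construction on a triangle handles cleanly.
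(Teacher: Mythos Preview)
Your proposal is correct and matches the paper's reasoning, which is given only as an informal remark preceding the observation (the intersection over $T$ rounds is a connected spanning subgraph of each $\mathcal{G}_r$, while for $T=1$ the graph may change arbitrarily between rounds). The paper does not spell out an explicit counterexample for the failure of the converse, so your rotating-spanning-tree construction on three vertices is a welcome addition rather than a deviation.
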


Later in 2014, Michail et al. \cite{Michail_2014} introduced another natural and practical definition of connectivity of a possibly disconnected dynamic network that they call Connectivity Time. The authors provide the following definition of connectivity.

\begin{definition}
    \cite{Michail_2014} (Connectivity Time)  The connectivity time of a dynamic network $G = (V, \,E)$ is the minimum $T \in \mathbb{N}$ s.t. for all times $r \in \mathbb{N} \cup \{0\}$ the static graph $G_{r,\,T}:= (V,\, \bigcup_{i=r}^{r+T-1} E(i))$ is connected.
\end{definition}

It is important to note that in this dynamic graph model, if $T=1$, then it is nothing but 1-Interval Connectivity. The basic difference between the connectivity model of \cite{Kuhn_2010} and \cite{Michail_2014} is as follows. For $T>1$, in $T$-Interval Connected model, $\mathcal{G}_r$ is always connected for each round $r$, and in Connectivity Time model, $\mathcal{G}_r$ may be disconnected for each round $r$. Therefore, we can say a graph which holds $T$-Interval Connectivity also holds the Connectivity Time property, but the converse may not be true. Therefore, $T$-Interval Connectivity \cite{Kuhn_2010} is a stronger connectivity model than the Connectivity Time model \cite{Michail_2014}. In this work, we introduce a new connectivity property which we call $T$-Path Connectivity, and this connectivity property lies between the aforementioned two connectivity properties. The formal description of $T$-Path Connectivity is as follows.

\begin{definition}
($T$-Path Connectivity) A dynamic graph $G = (V, \,E)$ is $T$-Path Connected for $T \geq  1$, if for all $r \in \mathbb{N} \cup \{0\}$ and for any pair of nodes $u$, $v \in V$, there exists at least one round $i \in [r,\, r+T-1] $ such that $u$, $v$ are in the same connected component of $\mathcal{G}_i$.
\end{definition}

\begin{figure}[ht]
    \centering
    \includegraphics[width=0.75\linewidth]{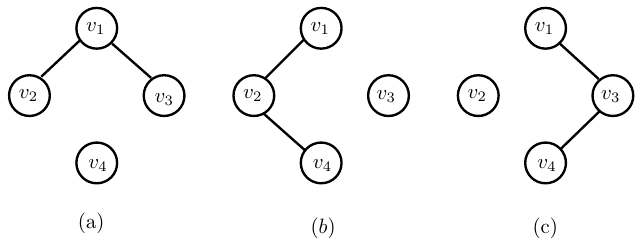}
     \caption{(a) Graph $\mathcal{G}_r$, where $r$(mod 3)=0, (b) Graph $\mathcal{G}_{r}$, where $r$(mod 3)=1, (c) Graph $\mathcal{G}_{r}$, where $r$(mod 3)=2. This figure is an example of $T$-Path Connectivity for $T=3$.}
     \label{fig:exp}
\end{figure}

\begin{figure}[ht]
    \centering
    \includegraphics[width=0.75\linewidth]{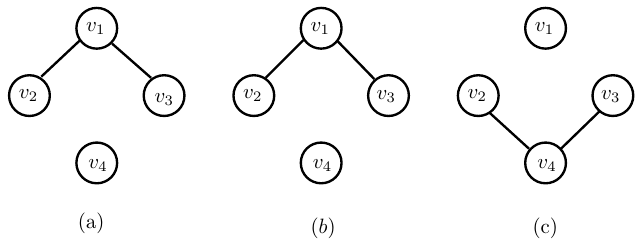}
    \caption{(a) Graph $\mathcal{G}_r$, where $r$(mod 3)=0, (b) Graph $\mathcal{G}_{r}$, where $r$(mod 3)=1, (c) Graph $\mathcal{G}_{r}$, where $r$(mod 3)=2. This figure is an example of Connectivity Time for $T=3$.}
     \label{fig:exp1}   
\end{figure}

 Note that all three connectivity definitions are equivalent in the case of $T=1$. Now we discuss the differences between our connectivity model and the existing connectivity models. For $T>1$, the basic difference between $T$-Interval Connected graph and $T$-Path Connected graph is as follows. In $T$-Interval Connected graph $\mathcal{G}_r$ is connected for all $r \in \mathbb{N} \cup \{0\}$, and in $T$-Path Connected graph $\mathcal{G}_r$ may even be disconnected for all $r \in \mathbb{N} \cup \{0\}$. Following is one such example. Let $V=\{v_1,\,v_2,\,v_3,\,v_4\}$ and $T=3$. At the round $r$, if $r$ (mod 3)=0, $\mathcal{G}_r=(V,\, E(r))$, where $E(r)=\{(v_1, \,v_2), \, (v_1, \, v_3)\})$(refer Fig. \ref{fig:exp}(a)). At the round $r$, if $r$ (mod 3)=1, $\mathcal{G}_r=(V,\, E(r))$, where $E(r)=\{(v_1, \,v_2), \, (v_2, \, v_4)\}$ (refer Fig. \ref{fig:exp}(b)). At the round $r$, if $r$(mod 3)=2, $\mathcal{G}_r=(V,\, E(r))$, where $E(r)=\{(v_1, \,v_3), \, (v_3, \, v_4)\}$ (refer Fig. \ref{fig:exp}(c)). Therefore, $G=<\mathcal{G}_0, \, \mathcal{G}_1, \ldots>$. In this example, we can see that if there is no path between $v_i$ and $v_j$ in some round $r$, then within the next two rounds, there is a path between $v_i$ and $v_j$. Hence $G=<\mathcal{G}_0, \, \mathcal{G}_1, \ldots>$ maintains $T$-Path Connectivity but not $T$-Interval Connectivity.

For $T>1$, the basic difference between the Connectivity Time graph and the $T$-Path Connected graph is as follows. In $T$-Path Connected graph, if there is no path in $\mathcal{G}_r$ at round $r$ between two nodes (say $u$, $v$), then within next consecutive $T$ rounds, there exist $t'$ ($r<t'<r+T$) such that there is at least one path in $\mathcal{G}_{t'}$. In Connectivity Time graphs, it is not necessarily true. We can understand this through the following example. Let $V=\{v_1,\,v_2,\,v_3,\,v_4\}$ and $T=3$. At the round $r$, if $r$ (mod 3)=0 or 1, $\mathcal{G}_r=(V,\, E(r))$, where $E(r)=\{(v_1, \,v_2), \, (v_1, \, v_3)\})$(refer Fig. \ref{fig:exp1}(a) and \ref{fig:exp1}(b)). At the round $r$, if $r$ (mod 3)=2, $\mathcal{G}_r=(V,\, E(r))$, where $E(r)=\{(v_4, \,v_2), \, (v_4, \, v_3)\})$(refer Fig. \ref{fig:exp1}(c)). Since, for any round $r$, $G_{r,\,3}= (V,\, \bigcup_{i=r}^{r+2} E(i))$ is connected, therefore, this example holds the Connectivity Time property but does not hold $T$-Path Connectivity as there is no path between $v_1$ and $v_4$ in any round $r$. We have the following observation based on these models.

 \begin{observation}\label{obs:connectivity}
    Any dynamic graph $G$ with $T$-Interval Connectivity is also a dynamic graph $G$ with $T$-Path Connectivity. And, any dynamic graph $G$ with $T$-Path Connectivity is also a dynamic graph $G$ with Connectivity Time. 
 \end{observation}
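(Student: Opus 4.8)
The plan is to establish the two implications separately, directly from the definitions, and in each case to reuse the same value of the parameter $T$, so that no blow-up of the connectivity parameter occurs.

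For the first implication, I would fix an arbitrary $r \in \mathbb{N}\cup\{0\}$ and start from the hypothesis that $G_{r,T}=(V,\bigcap_{i=r}^{r+T-1}E(i))$ is connected. Since $\bigcap_{i=r}^{r+T-1}E(i)\subseteq E(i)$ for every $i\in[r,r+T-1]$, each $\mathcal{G}_i=(V,E(i))$ contains $G_{r,T}$ as a connected spanning subgraph and is therefore itself connected; in particular $\mathcal{G}_r$ is connected. Hence for every pair $u,v\in V$ the nodes $u$ and $v$ lie in the same connected component of $\mathcal{G}_r$, and taking $i=r\in[r,r+T-1]$ witnesses the $T$-Path Connectivity condition. (The $\infty$-Interval Connected case goes the same way, using the fixed connected subgraph $G'=(V,E')$ in place of $G_{r,T}$.)

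For the second implication, I would show that the connectivity time of $G$ is at most $T$, which suffices since $T<\infty$. Fix $r$ and a pair $u,v\in V$. By $T$-Path Connectivity there is a round $i\in[r,r+T-1]$ in which $u$ and $v$ belong to the same connected component of $\mathcal{G}_i$, i.e.\ there is a $u$--$v$ path all of whose edges lie in $E(i)$. Because $E(i)\subseteq\bigcup_{j=r}^{r+T-1}E(j)$, this path survives in $G_{r,T}:=(V,\bigcup_{j=r}^{r+T-1}E(j))$, so $u$ and $v$ are connected there; since $u,v$ and $r$ were arbitrary, every such union graph is connected, which is exactly the Connectivity Time property.

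I do not expect a substantial obstacle: both directions are short once the definitions are unpacked. The only point I would state explicitly — the closest thing to a subtlety — is that ``$u$ and $v$ are in the same connected component of $\mathcal{G}_i$'' is a single-round statement, so the connecting path uses only edges of $E(i)$ and therefore lies inside the union $\bigcup_{j=r}^{r+T-1}E(j)$; this is precisely what lets the implication to the Connectivity Time model go through with no loss in the parameter.
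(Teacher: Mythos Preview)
Your proposal is correct and is essentially what the paper does: the paper states this as an observation without a formal proof, treating both implications as immediate from unpacking the three definitions, which is precisely the route you take. Your argument is a clean explicit write-up of that reasoning, including the one point worth making precise---that the $u$--$v$ path guaranteed by $T$-Path Connectivity lives entirely in a single $E(i)$ and hence in the union $\bigcup_{j=r}^{r+T-1}E(j)$.
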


 \vspace{0.2cm}
\noindent\textbf{Motivation for $T$-Path Connectivity:} Let us motivate our model through a practical example. Consider a scenario where some application being run over the network that uses global communication. In Figure \ref{fig:exp1}, the adversary ensures that the network maintains the Connectivity Time Property at all the times.  Consider node \( v_1 \), which holds essential data that must be shared across the network. A critical observation is that \( v_1 \) and \( v_4 \) are never part of the same connected component in \( \mathcal{G}_r \) for any round \( r \geq 0 \). Consequently, \( v_1 \) has no direct way to communicate with \( v_4 \) and must rely on intermediate nodes specifically \( v_2 \) or \( v_3 \) as nodes \( v_2, v_3, \) and \( v_4 \) are in a connected component in \( \mathcal{G}_r \) whenever \( r \equiv 2 \pmod{3} \). However, the presence of faults introduces a critical vulnerability. If both \( v_2 \) and \( v_3 \) fall under the adversary’s control, they can deliberately refuse to relay information during the rounds when they are connected to \( v_4 \). As a result, \( v_4 \) remains perpetually deprived of the crucial data from \( v_1 \).  However, our proposed \( T \)-Path Connectivity model gives more flexibility. In $T$-Path Connectivity, within three consecutive rounds, there is at least one round where \( v_1 \) and \( v_4 \) are part of the same connected component, enabling successful information transfer. This flexibility makes \( T \)-Path Connectivity a more resilient and practical model for dynamic networks, particularly in adversarial environments.

In this study, we investigate both the dispersion and exploration problems across three different connectivity models. In the following section, we begin by discussing the current state of dispersion in dynamic graphs, followed by an overview of the exploration problem in the dynamic graphs.

\section{Related Work}
In this section, we examine existing research on the dispersion and exploration problems within dynamic graphs. Although both problems involve agents moving through a changing environment, they have different objectives and constraints. Dispersion emphasizes ensuring that agents occupy distinct nodes, while exploration focuses on visiting all nodes in the graph. We will first discuss previous work on dispersion in dynamic graphs, followed by a review of existing approaches to exploration in these settings.
\subsection{Status of Dispersion on Dynamic Graphs}\label{sec:related}
Dispersion of the 1-Interval Connected dynamic graphs is studied by Agarwalla et al. \cite{Agarwalla_2018} and Kshemkalyani et al. \cite{Ajay_dynamicdisp}. In \cite{Agarwalla_2018}, the authors develop several deterministic algorithms for agents to achieve dispersion on a dynamic ring. In \cite{Ajay_dynamicdisp}, the authors use a weaker model where the adversary can add or remove edges. The authors show that it is impossible to solve dispersion on a $1$-Interval Connected dynamic graph $G$ in the face-to-face communication model, even if $1$-neighbourhood knowledge is available to the agents (i.e., an agent located at a particular node (say $w$) can see the set of nodes that are neighbours of $w$, as well as the edges connecting these nodes to node $w$) and each agent has unlimited memory. Also, it is impossible to solve dispersion on a $1$-Interval Connected dynamic graph in the global communication model without $1$-neighborhood knowledge, even with unlimited memory at each agent. They provided an asymptotically optimal $\Theta(k)$ rounds algorithm in the global communication model with $1$-neighborhood knowledge by each agent. In \cite{Ajay_dynamicdisp}, $\mathcal{G}_r$ is always connected for each round $r$ but $\mathcal{G}_r$ may not be connected for any round $r$ in our model. In their work, agents fill at least one vacant node at each round as they can use the global communication model to the best effect using the connectivity of the graph nodes in each round. In \cite{Ajay_dynamicdisp}, the agents do not need any extra information to solve the dispersion problem with termination (except global communication and 1-neighbourhood knowledge as these two are necessary to solve dispersion) i.e., each agent knows the dispersion has been achieved and terminates (explicit dispersion). It is because, with 1-Interval Connectivity, $\mathcal{G}_r$ is connected. If there is any node with more than one agent, then while communicating with each other using global communication, they can understand this. If, in some rounds, they do not get such information, all agents understand the dispersion has been achieved and terminated.

In \cite{Ajay_dynamicdisp}, the discussion of $T$-Interval Connected graphs is left as future work. For $T>1$, the $T$-Interval Connectivity model is stronger than the $1$-Interval Connectivity model due to the following reasons. For $T>1$, the graph cannot change arbitrarily as it needs to maintain a stable spanning sub-graph for each $T$ consecutive round. However, for $T = 1$, the graph is connected in every round but may change arbitrarily between rounds. Therefore, it is unknown whether global communication and $1$-neighbourhood knowledge are necessary assumptions to solve dispersion in $T$-Interval Connected graphs or not. Since every $T$-Interval Connected dynamic graph is 1-Interval Connected dynamic graph, the algorithm mentioned in \cite{Ajay_dynamicdisp} also solves dispersion in $T$-Interval Connected graphs if agents are equipped with 1-hop visibility and global communication.

To the best of our knowledge, the dispersion problem has not been studied previously in the Connectivity Time model and $T$-Path Connectivity model. In this work, we study the dispersion problem on Connectivity Time graphs and $T$-Path Connected graphs. In \emph{Connectivity Time} model, we will show in Section \ref{sec:existing} that it is impossible to solve the dispersion problem even if agents are equipped with infinite memory, full visibility, global communication, and know the size of the team of agents and number of nodes. Then we study the dispersion problem on the $T$-Path Connectivity model and provide an algorithm in optimal time with minimal model assumptions.

\subsection{Status of Exploration on Dynamic Graphs}\label{sec:related_exp}
Many studies on the exploration of dynamic graphs are centralized (or offline), meaning they assume that exploring agents have complete prior knowledge of the topological changes and their occurrence times. These studies include the analysis of the complexity of computing an optimal exploration schedule under the 1-Interval Connectivity assumption \cite{flocchini2012searching}, which was generalized and extended in \cite{Erlebach_2015} and later in \cite{erlebach2018faster, erlebach2019two}. Additionally, they cover the computation of exploration schedules for rings under the more stringent $T$-Interval Connectivity assumption \cite{ilcinkas2018exploration}, as well as for cactuses under the 1-Interval Connectivity assumption \cite{ilcinkas2014exploration}.

Fewer studies have employed a distributed approach to exploration. On the probabilistic side, there is a seminal work on random walks \cite{avin2008explore}. On the deterministic side, exploration has been studied under specific constraints related to network connectivity and its underlying topology. Research on exploration with termination by a single agent in periodic temporal networks, including carrier networks, has been conducted in \cite{flocchini2012searching, flocchini2013exploration, ilcinkas2011power, ilcinkas2018exploration}. The topic of perpetual exploration by three agents in temporally connected rings has been studied in \cite{bournat2016self,bournat2017computability}. Additionally, the exploration with termination in 1-Interval Connected rings by two and three agents has been addressed in \cite{di2020distributed}. This study considered not only the traditional fully-synchronous (Fsync) scheduler, where all agents are active in each round but also a semi-synchronous (Ssync) scheduler, where only a subset of agents is active in each round. Further research in \cite{gotoh2018group} investigated exploration with termination by $O(n)$ agents in dynamic tori of size $m \times n$ (with $m \leq n$), where each column and row is structured as a 1-Interval Connected ring. The exploration with termination by a single agent, possessing partial information about dynamic changes, has been studied in \cite{gotoh2019exploration} for 1-Interval Connected rings.

The most recent work on the perpetual exploration in the time-varying graphs is given by Gotoh et al.in \cite{GOTOH2021}. The authors focus on the solvability of the exploration of such dynamic graphs, and specifically on the number of agents that are necessary and sufficient for exploration under the Fully synchronous and Semi-synchronous activation schedulers. In the time-varying graph, there is a footprint $G$ from where the adversary deletes edges, and the agents understand these deleted edges when they try to move, and the movement becomes unsuccessful. In our model, there is no footprint i.e., the adversary can delete or add edges arbitrarily while maintaining the connectivity property. More importantly, the port numbers in the time-varying graphs are fixed, while in the dynamic graphs, the port numbers are not fixed due to the unavailability of the footprint, and it depends on the degree of nodes in $\mathcal{G}_r$. Therefore, in our model, movements are always successful, and agents can not sense any missing edge. This makes our model much weaker than the time-varying graph model used in \cite{GOTOH2021}.
To the best of our knowledge, no one has previously investigated the exploration problem in such dynamic graphs. In this work, we study the exploration problem across all three connectivity models and determine the lower bounds regarding time and the number of agents.

\section{Model and Problem Definitions}\label{sec:model}
\noindent \textbf{Dynamic graph model:} Let $G = (V, \,E)$ be a dynamic graph where $V$ is a static set of nodes with $|V | = n$. Let $S=\{(u,\,v)\,|\, u,v\in V\}$, where $(u,\,v)$ denotes an edge between $u$ $\&\;v$, and $\mathscr{P}(S)$ be the power set of the set $S$. The map $E: \mathbb{N } \rightarrow \mathscr{P}(S)$ is a function mapping a round number $r \in \mathbb{N} \cup \{0\}$ to a set of undirected edges $E(r)$. For any round $r \geq 0$, we denote the graph by $\mathcal{G}_r=(V, \, E(r))$. Let $|E(r)| = m_r$ be the number of edges in round $r$. The dynamic network $G$ is given by a sequence of undirected graphs $<\mathcal{G}_0, \, \mathcal{G}_1, \, \mathcal{G}_2, \, \ldots >$. We assume that there is an adversary which can add or remove edges arbitrarily at the beginning of round $r$. We denote the degree of $v \in \mathcal{G}_r$ at round $r$ by $deg_r(v)$. Similarly, the maximum degree of the graph $\mathcal{G}_r$ is the maximum among the degrees $deg_r(v)$ of the nodes in $\mathcal{G}_r$. The diameter $D_r$ of $\mathcal{G}_r$ is the longest shortest path between any two nodes in $\mathcal{G}_r$. The dynamic diameter $\hat{D}$ of $G$ is $max \, D_r$, where $1\leq r< \infty$. The graph $\mathcal{G}_r$ is an unweighted and undirected graph. In addition, $\mathcal{G}_r$ is anonymous, i.e., nodes have no (unique) IDs and hence are indistinguishable from each other. The graph $\mathcal{G}_r$ is a port-labelled graph i.e., the ports of any node $v \in \mathcal{G}_r$ have unique labels in the range $[0,\, deg_r(v)-1]$. Any edge $e(u, \,v)$ connecting two nodes $u,\, v \in \mathcal{G}_r$, has two port numbers associated with it, one at $u$ and one at $v$, and there is no relation between these two port numbers. There is no relation between the port labels of $\mathcal{G}_r$ and $\mathcal{G}_{r'}$ when $r \neq r'$. There is no storage at nodes of $G$. If there is no agent at node $v\in \mathcal{G}_r$, then we call such node a \underline{\emph{hole}} at round $r$. If there are two or more agents at node $v \in \mathcal{G}_r$, then we call such a node a \underline{\emph{multinode}} at round $r$.

\vspace{0.2cm}
\noindent \textbf{Agent model:} We consider $k \leq n$ agents placed arbitrarily on the nodes of the graph $G$. Each agent has a unique identifier assigned from the range $[1, k]$. Each agent knows its ID but has no information of other agents' IDs. The agents are equipped with memory. In any round $r$, the agent can see the degree of the node where it is currently located in the underlying graph $\mathcal{G}_r$ and also the port numbers corresponding to each edge incident to that node. The algorithm runs in synchronous rounds. In each round $t$, an agent $a_i$ performs one \textit{Communicate-Compute-Move} (CCM) cycle as follows:
\begin{itemize}
    \item \textbf{Communicate:} Agent $a_i$ can communicate with other agents $a_j$ (present on the same node $v_i$ of $a_i$ or on a different node) depending on the communication model used. 
    \item \textbf{Compute:} The agent does some computation including computing the port number it will move through at the end of the current round or decides not to move at all.
    \item \textbf{Move:} It moves through the computed port, if any.
\end{itemize}

\noindent Time complexity is measured by the number of rounds starting from the round when all agents become active till the round when the last active agent(s) terminates.

\vspace{0.2cm}
\noindent \textbf{Visibility model:} There are two types of visibility models: zero-hop visibility and  $l$-hop visibility. In the zero-hop visibility model \cite{Augustine_2018, Shibata_2016, Molla_2019, Molla_2020}, an agent located at a particular graph node (say $v \in \mathcal{G}_r$) can see agents present at node $v$ at round $r$, and port numbers associated with node $v$. In $l$-hop visibility \cite{Agarwalla_2018, Avery_2020}, an agent located at a particular node (say $w \in \mathcal{G}_r$) can see the subgraph induced by the set of nodes $S_l$ that are within distance $l$ from $w$ in round $r$. It can also see the presence/absence of agents in all the nodes of this sub-graph. In round $r$, if $l=D_r$, we call it full visibility. We consider 1-hop visibility for our algorithms.

\vspace{0.2cm}
\noindent \textbf{Communication model:} There are two communication models in literature, face-to-face (f-2-f) and global. In the f-2-f communication model \cite{Augustine_2018, Ajay_2019, Molla_2019}, an agent located at a particular graph node is only able to communicate with other agents present at that same node. In contrast, the global communication model \cite{Pelc_2006, Das_2018, Ortolf_2012, Molla_2020, Ajay_2020} allows an agent at a graph node to communicate with any other agent present in the same connected component of the graph, and this type of communication happens through the links of the graph. If there are two different connected components, say $G_1$, $G_2$ of $\mathcal{G}_r$, then there is no communication in round $r$ between the agents present in $G_1$ and the agents present in $G_2$ as per the existing global communication model just because there are no links between $G_1$ and $G_2$ in round $r$. We consider global communication for our algorithm. We study the following two variants of dispersion.

\begin{definition}
    \textbf{(Implicit Dispersion)} 
A set of  $k \leq n$ mobile agents is initially placed arbitrarily on the nodes of a graph $G$ of size $n$. The agents need to reposition themselves such that each node of $G$ contains at most one agent.
\end{definition}

\begin{definition}
    (\textbf{Explicit Dispersion}) 
A set of  $k \leq n$ mobile agents is initially placed arbitrarily on the nodes of a graph $G$ of size $n$. The agents need to reposition themselves such that each node of $G$ contains at most one agent. Further, each agent must terminate when such a configuration is achieved \footnote{It is important to note that an algorithm that solves explicit dispersion also solves implicit dispersion, but the converse may not be true.}.
\end{definition}

We also study the following two variants of the exploration problem.

\begin{definition}
    (\textbf{Exploration})
Exploration by a set of mobile agents is the problem where each node of the underlying graph is visited by at least one mobile agent before the agents terminate. 
\end{definition}

\begin{definition}
    (\textbf{Perpetual Exploration})
Perpetual exploration is the problem where a set of mobile agents continuously visits all the nodes of a given graph infinitely often.
\end{definition}

\begin{table}[ht]
\centering
\caption{Necessary and sufficient assumptions to solve dispersion in different connectivity models considering agents do not know $n$, $k$. In this table, existing work is in italic $\&$ blue colour text form, and our work is in normal text form.}
\tiny
\begin{tabular}{|c|c|c|c|c|c|c|}
 \hline
 \raisebox{0pt}[1em][2em]{\textbf{\makecell{Dynamic\\ graph model}}}& \multicolumn{2}{|c|}{\raisebox{0pt}[2.5em][1.2em]{\textbf{\makecell{Implicit\\ Dispersion}}}} & \multicolumn{2}{|c|}{\raisebox{0pt}[2.5em][1.2em]{\textbf{\makecell{Explicit \\Dispersion}}}} & \multicolumn{2}{|c|}{\raisebox{0pt}[2.5em][1.2em]{{\textbf{Algorithm}}}}\\
\hline
\raisebox{0pt}[2.5em][1.2em]{} & \raisebox{0pt}[2.5em][1.2em]{\textbf{\makecell{Necessary \\Assumptions}}} & \raisebox{0pt}[2.5em][1.2em]{\textbf{\makecell{Sufficient\\ Assumptions}}} & \raisebox{0pt}[2.5em][1.2em]{\textbf{\makecell{Necessary \\Assumptions}}} & \raisebox{0pt}[2.5em][1.2em]{\textbf{\makecell{Sufficient\\ Assumptions}}} & \textbf{Implicit} & \textbf{\makecell{Explicit}}\\
\hline\hline
\raisebox{0pt}[2.5em][2em]{\makecell{ Connectivity\\ Time \\(This Work)}} & \raisebox{0pt}[2.5em][1.2em]{\makecell{Impossible\\ to solve}} & \raisebox{0pt}[2.5em][1.2em]{\makecell{Impossible\\ to solve}} & \raisebox{0pt}[2.5em][1.2em]{\makecell{Impossible\\ to solve}} & \raisebox{0pt}[2.5em][1.2em]{\makecell{Impossible\\ to solve}} & \textbf{} & \textbf{\makecell{}}\\
\hline
\raisebox{0pt}[2.5em][1.2em]{\makecell{\cite{Ajay_dynamicdisp} 1-Interval\\ Connectivity}} & \raisebox{0pt}[2.5em][1.2em]{\textcolor{blue}{\emph{\makecell{Global comm,\\ 1-hop visibility}}}} & \raisebox{0pt}[2.5em][1.2em]{\textcolor{blue}{\emph{\makecell{Global comm,\\ 1-hop visibility}}}} & \raisebox{0pt}[2.5em][1.2em]{\textcolor{blue}{\emph{\makecell{Global comm,\\ 1-hop visibility}}}} & \raisebox{0pt}[2.5em][1.2em]{\textcolor{blue}{\emph{\makecell{Global comm,\\ 1-hop visibility}}}} & \textcolor{blue}{\emph{\makecell{$\Theta(k)$-time, \\ $\Theta(\log k)$-memory\\ per agent }}} & \textcolor{blue}{\emph{\makecell{$\Theta(k)$-time, \\ $\Theta(\log k)$-memory\\ per agent }}}\\
\hline
\raisebox{0pt}[2.5em][2em]{\makecell{$T$-Path \\Connectivity, $T>1$\\ (This Work)}} & \raisebox{0pt}[2.5em][1.2em]{\makecell{Global comm,\\ 1-hop visibility}} & \raisebox{0pt}[2.5em][1.2em]{\makecell{Global comm\\ 1-hop visibility}} & \raisebox{0pt}[2.5em][1.2em]{\makecell{Global comm,\\ 1-hop visibility, \\at least one of $n$, $k$, $T$}} & \raisebox{0pt}[2.5em][1.2em]{\makecell{Global comm,\\ 1-hop visibility,\\ knowledge of $T$}} & \makecell{$\Theta(k\cdot T)$-time, \\ $\Theta(\log k)$-memory\\ per agent } & \makecell{$\Theta(k\cdot T)$-time, \\ $O(\log max(k, \, T)$)-\\memory per agent }\\
\hline
\hline
\end{tabular}
\label{tab:minimal} 
\end{table}
\section{Our Contribution}
In this work, we introduce a new variant of the connectivity model, which we call $T$-Path Connectivity (refer to Section \ref{sec:intro}). We study the dispersion and exploration problems related to different connectivity models. The details are provided below.

\subsection{Dispersion Problem}
In this work, we present impossibility results for the $T$-Path Connectivity and Connectivity Time models. Additionally, we outline the necessary and sufficient conditions for solving the explicit dispersion problem in $T$-Path Connected graphs. The details of our contributions are summarized as follows.

\begin{itemize}

\item We show that if the adversary maintains the Connectivity Time property, then it is impossible to solve the implicit dispersion problem in the Connectivity Time graphs (refer Theorem \ref{thm:imp-connec}). This result is valid even if agents are equipped with infinite memory, full visibility, and global communication, and know the parameters $k$, $n$, $T$.
    \item If agents are equipped with global
communication, full visibility, and do not know $k$, $n$, $T$, then it is impossible
to solve explicit dispersion in $T$-Path Connected graphs (refer Theorem \ref{thm:imp:knowledge_T}). 

    \item We provide a time lower bound $\Omega(k\cdot T)$ to solve implicit dispersion in $T$-Path Connected graphs. This result also holds even if the agents have infinite memory, can do global communication and have full visibility (refer Theorem \ref{thm:lower_bound_T}).
    \item We provide $O(k \cdot T)$-time and $O(\log$ max$(T, \, k))$ memory per agent algorithm to solve explicit dispersion in $T$-Path Connected graphs when agents are equipped with 1-hop visibility, global communication, and knows parameter $T$ (refer Section \ref{sec:algorithm}).
    \item To solve implicit dispersion in $T$-Path Connected graphs, agents do not need the knowledge of $T$ when agents are equipped with 1-hop visibility and global communication (refer Remark 2 of Section \ref{sec:analysis-T-Path}).
\end{itemize}

Refer to Table \ref{tab:minimal} for necessary and sufficient assumptions to solve implicit/explicit dispersion in different dynamic graph models. 

\begin{table}
\centering
\caption{Necessary and sufficient assumptions to solve exploration in different connectivity models considering agents do not know $n$, $k$, or $T$.}
\renewcommand{\arraystretch}{2}
\tiny
\begin{tabular}{|c|c|c|c|c|c|c|c|}
\hline
\begin{tabular}[c]{@{}c@{}}\textbf{Dynamic} \\ \textbf{Graph Model}\end{tabular} &
\begin{tabular}[c]{@{}c@{}}\textbf{Initial} \\ \textbf{Config}\end{tabular} &
\textbf{$k$} &
\begin{tabular}[c]{@{}c@{}}\textbf{Necessary} \\ \textbf{Assumption}\end{tabular} &
\begin{tabular}[c]{@{}c@{}}\textbf{Sufficient} \\ \textbf{Assumption}\end{tabular} &
\multicolumn{3}{c|}{\textbf{Algorithm}} \\ \hline \hline

\begin{tabular}[c]{@{}c@{}}1-Interval \\ Connected\end{tabular} &
Scattered &
$n-1$ &
\begin{tabular}[c]{@{}c@{}}Global Comm, \\ 1-hop visibility\end{tabular} &
\begin{tabular}[c]{@{}c@{}}Global Comm, \\ 1-hop visibility\end{tabular} &
\multicolumn{3}{c|}{\begin{tabular}[c]{@{}c@{}}$\Theta(n)$ time, \\ $O(\log n)$ memory per agent\end{tabular}} \\ \hline

\begin{tabular}[c]{@{}c@{}}$T$-Path \\ Connected\end{tabular} &
Scattered &
$n-1$ &
\begin{tabular}[c]{@{}c@{}}Global Comm, \\ 1-hop visibility\end{tabular} &
\begin{tabular}[c]{@{}c@{}}Global Comm, \\ 1-hop visibility\end{tabular} &
\multicolumn{3}{c|}{\begin{tabular}[c]{@{}c@{}}$\Theta(n \cdot T)$ time, \\ $O(\log n)$ memory per agent\end{tabular}} \\ \hline

\begin{tabular}[c]{@{}c@{}}Connectivity \\ Time\end{tabular} &
Scattered &
$>n$ &
Open Question &
Open Question &
\multicolumn{3}{c|}{Open Question} \\ \hline \hline

\end{tabular}
\label{tb:algo-exp}
\end{table}

\subsection{Exploration Problem}
In this work, we present lower bounds on the number of agents required to solve the exploration problem across all three connectivity models. Additionally, we provide the necessary and sufficient conditions for solving the exploration problem in both 1-Interval Connected graphs and $T$-Path Connected graphs. The details of our key contributions are outlined as follows.

\begin{itemize}
    \item In Theorem \ref{thm:imp_Exp_1-Interval}, we have shown that a set of $k \leq n-2$ agents can't solve the exploration problem in the dynamic graphs, which hold the 1-Interval Connectivity property. This impossibility holds even if agents have infinite memory, full visibility, global communication, and know the parameters $k$, $n$.
    \item According to Theorem \ref{thm:exp-1-hop} and \ref{thm:exp-global}, solving the exploration problem with $n-1$ agents in 1-Interval Connected dynamic graphs requires 1-hop visibility and global communication unless they are in dispersed configuration. 
    \item Let $n-1$ agents be dispersed initially. It is impossible to solve the exploration of $n-1$ mobile agents on a 1-Interval Connected dynamic graph when they are equipped with global communication and unlimited memory but without 1-hop visibility (refer to Theorem \ref{thm:disp-Exp}).
    \item Any algorithm solving exploration problem in 1-Interval Connected Dynamic graph of $n$ nodes requires $\Omega(n)$ rounds (refer to Theorem \ref{thm:time_lower_exp}). And, any algorithm solving exploration problem in $T$-Path Connected graph of $n$ nodes requires $\Omega(n\cdot T)$ rounds (refer to Theorem \ref{thm:time_lower_exp1}). These results hold even if agents have infinite memory, full visibility, global communication, and know the parameters $k$, $n$, $T$.

    \item If the initial configuration contains at least two holes, then a group of \( k \leq n \) agents cannot solve the exploration problem in dynamic graphs that maintain the Connectivity Time property (refer to Theorem \ref{thm:connectivity_exp}). This impossibility holds even if agents have infinite memory, full visibility, global communication, and know the parameters $k$, $n$, $T$.

    \item We provide an algorithm which solves exploration with termination in 1-Interval connected dynamic graphs with $n-1$ agents in $\Theta(n)$ rounds using $O(\log n)$ bits of memory per agent in the synchronous setting with global communication and 1-hop visibility (refer to Section \ref{sec:EXP-1-Int}).

    \item  We provide an algorithm which solves perpetual exploration in $T$-Path Connected graphs with $n-1$ agents in $\Theta(n\cdot T)$ rounds using $O(\log n)$ bits of memory per agent in the synchronous setting with global communication and 1-hop visibility (refer to Section \ref{sec:EXP-T-Int}).
\end{itemize}

Refer to Table \ref{tb:algo-exp} for necessary and sufficient assumptions to solve the exploration problem in different dynamic graph models.

\section{Preliminaries}\label{sec:pre}
In this section, we recall the results from the existing literature which we use in our algorithm. In \cite{Ajay_dynamicdisp}, Kshemkalyani et al. study the dispersion problem on dynamic graphs with 1-Interval Connectivity, i.e., the adversary can add or delete edges arbitrarily maintaining $1$-Interval Connectivity. In their paper, the authors provide two impossibility results, one lower-bound along with an optimal algorithm. We denote their algorithm by $\mathcal{DISP}$. We use $\mathcal{DISP}$ as a sub-routine in our algorithm. Their results are as follows.

\begin{theorem}\label{thm:imp_1_nbhd}
     \cite{Ajay_dynamicdisp} It is impossible to solve the dispersion of $k \geq 5$ mobile agents on a dynamic graph deterministically with the agents having 1-hop visibility and unlimited memory but without global communication. 
\end{theorem}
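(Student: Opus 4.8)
The plan is a classical adversary argument. Suppose, for contradiction, that a deterministic algorithm $\mathcal{A}$ disperses every team of $k\ge 5$ agents on a 1-Interval Connected dynamic graph using only 1-hop visibility and face-to-face communication (and unbounded memory). Since $\mathcal{A}$ is deterministic and is known to the adversary, I would build an adaptive adversary that, at every round $r$, picks a \emph{connected} graph $\mathcal{G}_r$ together with a fresh, arbitrary port-labelling, in such a way that the agent configuration always contains a multinode. This already contradicts even the weakest (implicit) notion of dispersion, so it suffices to exhibit such an adversary against $\mathcal{A}$ on a single carefully chosen instance.

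First I would reduce to $k=5$: for $k>5$ take $n\ge k$, run the $5$-agent construction on a small gadget $H$, and attach the remaining $k-5$ agents one per private pendant vertex of $H$, where the adversary only ever offers each parked agent an edge to an occupied vertex; from its local view such an agent is a permanently isolated singleton, so any multinode surviving inside $H$ survives in the whole graph. For the core $k=n=5$ instance the invariant to maintain is simply ``some node holds $\ge 2$ agents''. The mechanism that makes this possible is that, because ports are re-labelled every round, an agent whose entire $1$-hop view consists of occupied vertices that all look the same (same degree, all occupied) cannot distinguish its incident ports; such a ``blind'' agent is told by $\mathcal{A}$ to move through some port number, and the adversary — having fixed the labelling after simulating $\mathcal{A}$ — is then free to route it to whichever neighbour keeps a multinode alive, and agents at distinct nodes have no way (no global communication) to pool information and defeat this. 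Concretely, each round I would place a chosen multinode $v^{*}$ at the end of a path $v^{*}-u_1-u_2-\dots-u_t-h$ whose only hole is $h$ and whose internal vertices $u_1,\dots,u_t$ are occupied singletons; then $v^{*}$ and $u_1,\dots,u_{t-1}$ are blind, so any ``collapsing wave'' of moves from $v^{*}$ toward $h$ is mis-routed by the adversary and re-creates a multinode, while moving $u_t$'s agent into $h$ merely relocates the hole and the invariant is re-established. A short case analysis over the possible simultaneous reactions of the (at most) five agents shows that in every case some connected $\mathcal{G}_r$ re-establishing the invariant exists; it is precisely the bookkeeping of this step that forces the hypothesis $k\ge 5$, since with fewer agents the path $v^{*}$--$h$ is too short to contain a blind internal vertex while still hiding $h$ from $v^{*}$.

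Iterating forever produces a 1-Interval Connected dynamic graph on which the configuration always has a multinode, so $\mathcal{A}$ never reaches a dispersed configuration, a contradiction. The main obstacle is exactly the inductive step: one must prove that against \emph{every} deterministic behaviour of the agents the adversary still has a legal (connected, arbitrarily re-labelled) response that preserves a multinode, which requires enumerating the ways $\mathcal{A}$ can simultaneously move the two agents of $v^{*}$, the agent adjacent to the hole, and the blind internal agents, and verifying that ``blindness plus absence of global communication'' always leaves the adversary a safe routing — and carrying out this count is also what pins the threshold at $k\ge 5$.
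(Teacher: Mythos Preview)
This theorem is not proved in the present paper. It appears in Section~\ref{sec:pre} (Preliminaries) as a result \emph{recalled} from Kshemkalyani et al.~\cite{Ajay_dynamicdisp}; the paper simply states it with a citation and uses it as a black box. There is therefore no ``paper's own proof'' to compare your attempt against.

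That said, a few remarks on your sketch. The overall strategy---an adaptive adversary that re-labels ports each round so as to maintain a multinode forever---is the right genre, and is indeed the style of argument used in the analogous impossibility results that \emph{are} proved in this paper (see Theorems~\ref{thm:exp-1-hop} and~\ref{thm:exp-global}). Those proofs work by fixing a canonical path $\mathcal{P}_r$, simulating the algorithm on it, and if the algorithm would succeed, swapping a small block of nodes ($w_2,w_3,w_4$) while preserving the $1$-hop view of the pivot vertex $w_3$; the agent at $w_3$ then moves the ``wrong'' way and a multinode reappears. Your plan is morally the same but less concrete: you have not exhibited the actual invariant configuration, and your reduction from $k>5$ to $k=5$ has a slip (a pendant agent attached by an edge to an occupied vertex does \emph{not} see itself as an ``isolated singleton''; it sees one occupied neighbour and may well move there, interfering with the gadget). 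You also correctly identify that the crux is the per-round case analysis showing the adversary always has a safe response, and you do not carry it out. So what you have is a plausible outline in the spirit of the original, not a proof; to complete it you would need a construction at the level of specificity of Theorems~\ref{thm:exp-1-hop}/\ref{thm:exp-global}, or simply consult~\cite{Ajay_dynamicdisp}.
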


\begin{theorem}\label{thm:imp_global}
    \cite{Ajay_dynamicdisp} It is impossible to solve the dispersion of $k \geq 3$ mobile agents on a dynamic graph deterministically with the agents having global communication and unlimited memory but without 1-hop visibility.
\end{theorem}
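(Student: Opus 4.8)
The plan is to show that an adversary respecting $1$-Interval Connectivity can forever keep the configuration non-dispersed; since any algorithm that solves dispersion (implicit or explicit) must at some round reach a configuration with at most one agent per node, this already yields the impossibility regardless of how much memory the agents have and regardless of global communication. Concretely, I would place all $k\ge 3$ agents on a single node of an $n$-node graph ($n\ge k$) and have the adversary maintain the invariant that \emph{in every round some node carries at least two agents}. The lever is the fact that, without $1$-hop visibility, an agent's entire input in a round consists of the degree and the abstract port-set of its current node together with the messages delivered by global communication, and those messages carry only the other agents' (memory, local view, ID) states, each of which is itself independent of which incident edge receives which port number. Hence permuting the port labels at a single node changes no agent's stay/move decision that round; it changes only where the agents leaving that node land. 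Call this the relabelling invariance.

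Given the invariant at the start of a round, let $u$ be a node carrying at least two agents. If $u$ carries at least three agents, the adversary makes $u$ a leaf attached to an arbitrary node $v$ and connects the remaining $n-2$ nodes into a path hanging off $v$, so $\mathcal{G}_r$ is connected; every agent leaving $u$ is funnelled through $u$'s unique port to $v$, so either at least two of $u$'s agents stay (and $u$ still carries $\ge 2$ next round) or at least two arrive at $v$ — either way a multinode survives, whatever the algorithm does. If $u$ carries exactly two agents, then since $k\ge 3$ there is another occupied node $v$; the adversary again makes $u$ a leaf attached to $v$, with $v$ of degree $\ge 2$ so that the rest of the graph stays connected. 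If both agents of $u$ stay, or both move (necessarily to $v$), we are done as before. The only delicate sub-case is that exactly one agent of $u$ moves, landing at $v$. Here the adversary first simulates the deterministic algorithm on its tentative round graph; by relabelling invariance the agents' decisions do not depend on the port labelling at $v$, so it may fix that labelling after seeing the decisions. If some agent of $v$ stays, then $v$ holds that agent together with the arrival from $u$, hence $\ge 2$. If instead every agent of $v$ leaves $v$, the adversary labels the edge $uv$ at $v$ with one of the ports that $v$'s departing agents were computed to use; then at least one agent of $v$ is routed onto $u$, which therefore holds it together with $u$'s stayer, hence $\ge 2$. In every case the invariant is restored, and the graph produced is connected, i.e.\ $1$-Interval Connected.

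Iterating this round by round, the configuration is non-dispersed in every round against any deterministic algorithm, so no such algorithm — even with unlimited memory and global communication — can solve dispersion. I expect the main obstacle to be making the relabelling-invariance step fully rigorous in the presence of global communication: one must check carefully that the Communicate phase of the CCM cycle transmits nothing that distinguishes the two port labellings, and then that the ``simulate, then fix the labels at $v$, then commit $\mathcal{G}_r$'' order yields a single legal graph that in the real execution reproduces exactly the simulated decisions. A secondary point, already handled above, is why no separate small-$k$ analysis is needed: whenever all agents sit on one node that node carries $\ge k\ge 3$ of them, falling into the first case; and whenever the witness node carries exactly two agents a second occupied node is guaranteed to exist to serve as $v$.
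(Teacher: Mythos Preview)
The paper does not give its own proof of this statement: Theorem~\ref{thm:imp_global} sits in the Preliminaries section and is quoted verbatim from \cite{Ajay_dynamicdisp} without argument. So there is nothing in the present paper to compare your proposal against.

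On its own merits, your proof is correct. The two ingredients are sound in the paper's model: (i) with zero-hop visibility an agent's local input is just the degree (the port set $\{0,\dots,\deg-1\}$ is determined by it), and global messages are functions of the senders' states and local inputs, none of which depend on how edges are matched to port numbers at any fixed node; hence your relabelling invariance holds; (ii) the adversary is adaptive and the algorithm deterministic, so simulating the round on a tentative graph, reading off the port numbers chosen at $v$, and only then fixing the labelling at $v$ is a legitimate order of play that produces a single well-defined $\mathcal{G}_r$. The case split is clean: when the witness multinode $u$ has $\ge 3$ agents, the leaf pigeonhole forces a multinode at $u$ or its unique neighbour; when $u$ has exactly two and a third agent sits elsewhere at $v$, either both agents of $u$ behave the same (done), or one moves and one stays, and then either a $v$-stayer pairs with the $u$-arrival, or the relabelling routes a $v$-leaver onto $u$ to pair with the $u$-stayer. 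Your own caveat about the Communicate phase is the right place to be careful, and the paragraph above is exactly the check you need; nothing further is required.
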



\begin{theorem}\label{thm:lower_bound}
   \cite{Ajay_dynamicdisp} Any algorithm solving the dispersion of $k \leq n$ agents on 1-Interval Connected dynamic graphs of $n$ nodes requires $\Omega(k)$ rounds. The lower bound holds even if the agents have unlimited memory.
\end{theorem}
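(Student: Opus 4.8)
The plan is to prove the bound adversarially: exhibit a single concrete $1$-Interval Connected dynamic graph, together with one initial placement of the agents, on which no deterministic algorithm can reach a dispersed configuration in fewer than $k-1$ rounds — no matter how much memory, visibility, or communication power the agents are granted. A lower bound against a single instance suffices, since a correct algorithm must succeed on every instance and every initial configuration.

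Concretely, I would take $G$ to be the \emph{static} path $P_n$ on vertices $v_0, v_1, \dots, v_{n-1}$, with $v_i$ adjacent to $v_{i+1}$; that is, $E(r)$ equals this fixed connected edge set for every round $r$. This graph is connected in every round and hence satisfies $1$-Interval Connectivity (indeed $\infty$-Interval Connectivity, so it is a legitimate instance of the model). I then place all $k$ agents on the endpoint $v_0$ in round $0$.

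The crux is a reachability estimate that is completely oblivious to the agents' computational model: in one CCM cycle an agent traverses at most one edge, so after $t$ rounds every agent sits at a node of graph-distance at most $t$ from its start. Since all agents start at $v_0$, after round $t$ every agent lies in $N_t := \{\,v_0, v_1, \dots, v_{\min(t,\,n-1)}\,\}$, a set of at most $t+1$ vertices. If the configuration at the end of round $t$ is dispersed, the $k$ agents occupy $k$ pairwise distinct vertices, all inside $N_t$; therefore $t+1 \ge k$, i.e. $t \ge k-1 = \Omega(k)$. This inequality holds verbatim regardless of unlimited memory, global communication, or $l$-hop visibility, because none of these capabilities lets an agent cross more than one edge per round.

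I do not expect a genuinely hard step; the points deserving a line of care are (i) confirming that $P_n$ meets the $1$-Interval Connectivity definition — immediate, since $E$ is constant with connected image — and (ii) the degenerate range $t \ge n-1$, where $N_t = V$ and $t+1 \ge n \ge k$ holds automatically. If a truly time-varying witness is preferred, the same counting argument applies to any dynamic graph for which, for every $r$, the ball of radius $r$ around the agents' common start vertex in $(V,\bigcup_{i=0}^{r} E(i))$ spans at most $r+1$ vertices; the static path is merely the simplest such choice.
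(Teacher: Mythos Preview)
The paper does not supply its own proof of this theorem: it appears in the Preliminaries section as a result quoted verbatim from \cite{Ajay_dynamicdisp}, so there is no in-paper argument to compare against directly. Your proof is correct and self-contained; the reachability count on the static path $P_n$ with all agents starting at $v_0$ gives the $\Omega(k)$ bound cleanly, and you are right that memory, visibility and communication are irrelevant to a one-edge-per-round distance argument.

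For context, the paper \emph{does} prove the analogous (and stronger) bound $\Omega(k\cdot T)$ for $T$-Path Connected graphs in Theorem~\ref{thm:lower_bound_T}, and the $T=1$ case of that construction would yield the present statement. The route there is genuinely different from yours: rather than a static path, the adversary maintains a \emph{dynamic} instance in which the nodes currently holding agents form one star $S^1_r$, the remaining nodes form another star $S^2_r$, and the two stars are joined by a single bridge edge. This forces at most one new node to acquire an agent per round. Compared with your argument, this buys one extra feature the authors care about: the dynamic diameter stays $\hat D = O(1)$, so the lower bound is not an artefact of large diameter. Your static-path witness has diameter $n-1$, which is perfectly acceptable for the theorem as stated, but would not support the stronger ``even when $\hat D = O(1)$'' clause that the paper attaches to its own lower bounds.
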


\begin{theorem}\label{thm:main_result}
   \cite{Ajay_dynamicdisp} Given $k \leq n$ agents placed arbitrarily on the nodes of any $n$-node graph $\mathcal{G}_r$ that dynamically changes in every round $r \geq 0$ following the 1-Interval Connected dynamic graph model, $\mathcal{DISP}$ solves the dispersion in $\Theta(k)$ rounds with $\Theta(\log k)$ bits at each agent in the synchronous setting with global communication and 1-hop visibility. Also, all agents understand that dispersion has been achieved and terminate. 
\end{theorem}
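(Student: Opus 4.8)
The plan is to separate the argument into (i) progress toward a dispersed configuration, (ii) correct distributed detection of termination, and (iii) the resource bounds. For (i) I would track the potential $\Phi_r = k - |\{v : v \text{ is occupied in round } r\}|$, which counts the ``excess'' agents and equals $0$ exactly when every node holds at most one agent, i.e. exactly when dispersion is achieved (for $k\le n$ this is equivalent to ``no multinode exists''). It then suffices to show that, as long as $\Phi_r>0$, one round of $\mathcal{DISP}$ strictly decreases $\Phi$; since $\Phi_0\le k-1$ and $\Phi$ is a nonnegative integer, a dispersed configuration is reached after at most $k-1$ such rounds, and by Theorem~\ref{thm:lower_bound} this is optimal, so the round count is $\Theta(k)$.

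The engine of the progress step is a single-round \emph{cascade}. When $\Phi_r>0$ there is at least one multinode, and because $\Phi_r>0$ forces $|\{\text{occupied nodes}\}|<k\le n$ there is at least one hole; since $\mathcal{G}_r$ is connected ($1$-Interval Connectivity), there is a simple path $u_0,u_1,\dots,u_t$ in $\mathcal{G}_r$ from a multinode $u_0$ to a hole $u_t$. Using global communication, which by connectivity of $\mathcal{G}_r$ lets every agent exchange information with every other agent during the round, the agents pool their $1$-hop views, deterministically agree on one such path (identifiers break all ties), and execute the cascade: for $i=t-1,t-2,\dots,0$ an agent on $u_i$ moves to $u_{i+1}$. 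Because the path is simple, $u_0$ retains at least one agent, each $u_i$ with $1\le i\le t$ remains occupied, the hole $u_t$ becomes occupied, and no other node changes, so $\Phi_{r+1}=\Phi_r-1$. For (ii) I would have every agent broadcast each round whether it is alone on its node; by connectivity every agent receives every such report, so all agents simultaneously learn whether a multinode still exists and all terminate in the first round in which none does, which is correct since ``no multinode'' is precisely dispersion for $k\le n$ and costs only $O(1)$ extra rounds. For (iii), each agent need only store its identifier (for deterministic tie-breaking), a constant number of flags, and $O(1)$ scratch values recomputed each round for the coordination and for the port it moves through; since identifiers lie in $[1,k]$ this is $O(\log k)$ bits, and $\Omega(\log k)$ bits are unavoidable because a deterministic algorithm must be able to distinguish the $k$ agents.

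The step I expect to be the main obstacle is making the cascade rigorous under anonymity: nodes have no identifiers, and the port labels at the two ends of an edge (as well as across rounds) are unrelated, so it is not obvious that agents can, from pooled $1$-hop views alone, consistently name the edges of a multinode-to-hole path, agree on the moves along it, and resolve the case where several candidate cascades compete for the same hole without two agents making conflicting moves. I would handle this inside the communication phase with a one-hop-at-a-time handshake: an agent adjacent to a hole announces the local port leading to that hole and invites the uniquely (by identifier) chosen occupied neighbour to take its place; that neighbour confirms by matching the degrees and agent counts it observes against the announcement; and the invitation propagates back toward a multinode, the whole negotiation completing within the single round's global-communication step, with identifiers guaranteeing that the resulting move assignment is one simple-path cascade that leaves the rest of the configuration untouched. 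Verifying that this primitive is realizable in the stated model, and that it never disturbs the already-correct part of the configuration except along the chosen path, is the crux; given it, the potential and termination arguments above are routine.
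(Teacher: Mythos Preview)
Your proposal is essentially the same approach as the paper's. Note that this theorem is \emph{cited} from \cite{Ajay_dynamicdisp}; the present paper does not give its own proof but only a high-level sketch of $\mathcal{DISP}$ (the ``sliding'' mechanism from a multinode to a hole along a path of occupied nodes, coordinated via global communication and 1-hop visibility). Your cascade is exactly this sliding, your potential $\Phi_r$ is the natural formalization of ``at least one hole gets filled per round,'' and your termination test (no agent reports a multinode) is the one the paper relies on. One small point to tighten: in your first description you pick an arbitrary simple path $u_0,\dots,u_t$ from a multinode to a hole and then assert that an agent on each $u_i$ moves, but this requires every intermediate $u_i$ to be occupied; you should take the path to the \emph{first} hole encountered (equivalently, require $u_1,\dots,u_{t-1}$ to each host an agent), which is what the paper's sketch does and what your handshake mechanism in the second paragraph implicitly enforces. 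A second, cosmetic difference: the paper's $\mathcal{DISP}$ allows several slides in parallel (one per multinode, with ties among competing multinodes for the same hole broken by ID), whereas you commit to a single agreed-upon cascade per round; both yield the $\Theta(k)$ bound.
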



\section{Impossibilities and Lower Bounds on Dispersion}\label{sec:existing}
In this section, we present lower bounds and impossibility results on $T$-Path Connected graphs and Connectivity Time graphs.

\begin{theorem}\label{thm:imp-connec}
     $(k\geq 3)$ A set of $k \leq n$ agents can't solve implicit dispersion in the dynamic graphs which hold the Connectivity Time property. This impossibility holds even if agents have infinite memory, full visibility, global communication, and know $k$, $n$, $T$. 
\end{theorem}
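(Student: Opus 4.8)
The plan is to exhibit, for every deterministic algorithm $\mathcal{A}$, a dynamic graph satisfying the Connectivity Time property (for a value $T$ that the agents are even allowed to know in advance), together with an initial placement of the $k\ge 3$ agents, on which $\mathcal{A}$ never reaches a dispersed configuration. Since the impossibility is to survive full visibility and global communication, the whole construction must make the obstruction \emph{topological}: two agents that lie in different connected components of $\mathcal{G}_r$ cannot affect each other in round $r$ no matter how much they see or communicate, so the adversary's task is to keep "the work that still needs doing" permanently split off, in the current round's graph, from "the agents that could do it."

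Concretely, I would have the adversary maintain the following invariant at the start of every round: the configuration is not dispersed, i.e. some node carries at least two agents (a multinode) and --- since $k\le n$ --- some node is empty (a hole). The key move is: given the current non-dispersed configuration, the adversary picks $\mathcal{G}_r$ so that \emph{no connected component of $\mathcal{G}_r$ that contains a multinode contains a hole}. This single property preserves the invariant: in one round every agent moves at most one hop and cannot leave its component, so if a component $K$ has $|K|$ nodes, all occupied (no hole in $K$) and at least one multinode, then $K$ holds at least $|K|+1$ agents, and after the round $K$ still holds at least $|K|+1$ agents on $|K|$ nodes, hence still a multinode. Thus however the agents move, a multinode survives, the configuration is never dispersed, and implicit dispersion is never solved. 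This is exactly where Connectivity Time is weaker than $1$-Interval Connectivity: in a $1$-Interval Connected graph the whole vertex set is one component in every round, so a multinode and a hole always share a component, and a "bucket-brigade" shift along a path between them (coordinable precisely because of full visibility and global communication) forces progress --- which is why $\mathcal{DISP}$ works there; the adversary here is denying exactly that interaction.

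The remaining step, and the genuinely delicate one, is to show the adversary can keep doing this forever while the resulting sequence really is a Connectivity Time graph for a fixed finite $T$, which it then announces. For this I would (i) verify that, for every non-dispersed configuration on $n\ge 3$ nodes, a graph with the "multinode-components are hole-free" property exists at all (in the worst case: isolate the multinodes and wire up the rest), and (ii) choose among such graphs so that over any window of $T$ consecutive rounds the union of the chosen edge sets spans $V$. Step (ii) is a scheduling/rotation argument: because $n\ge 3$, no hole can have \emph{all} of its $n-1$ potential neighbours be multinodes (that would require more than $n$ agents, since $2(n-1)>n$), so there is always a legal edge incident to every hole; by cycling through the legal edges --- keeping the multinode(s) quarantined inside hole-free components while "shuttling" window-connectivity through the singly-occupied nodes and the holes in turn --- one gets a connected union inside a bounded window. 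The hypothesis $k\ge 3$ is used to exclude the degenerate small instances ($n=2$, or all agents stacked on one node with every other node empty) where quarantining the multinode would itself have to last forever and thereby break Connectivity Time.

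I expect step (ii) --- and in particular ruling out the situation where the agents deliberately consolidate agents so as to "corner" the adversary into either merging a multinode with a hole or violating Connectivity Time --- to be the main obstacle. The clean way around it is to have the adversary never supply the edges that would let a multinode grow past two agents, and to always retain a separate hole-free component of singly-occupied nodes through which it routes the window-connectivity, so that the multinode's own component is never the one forced to touch a hole; one then checks that against any algorithm this is sustainable, because an algorithm that did force consolidation would, in those same rounds, have failed to be in (or heading to) a dispersed configuration in any case.
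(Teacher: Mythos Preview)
Your high-level strategy --- keep the agents trapped so that a multinode survives every round, and argue that full visibility and global communication cannot help across components --- is exactly right and is what the paper does. The gap is in your choice of invariant and in step~(ii). The invariant ``no component containing a multinode contains a hole'' is too strong to sustain under Connectivity Time, and your sketch does not close the problem you yourself flag. Concretely: against your adversary the algorithm simply \emph{gathers}. Each time the adversary attaches the multinode's component to a singly-occupied node $w$ (which it must eventually do, since your invariant forbids attaching it to a hole and isolating it forever breaks Connectivity Time), the algorithm moves the agent at $w$ onto the multinode. This strictly shrinks your ``hole-free component of singly-occupied nodes,'' and after finitely many such steps all $k$ agents sit on one vertex $v$ while every other vertex is a hole. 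Your invariant then forces $\{v\}$ to be isolated in every subsequent round --- impossible for any finite $T$. You mention this all-stacked configuration as a degenerate case to ``exclude'' via $k\ge 3$, but $k\ge 3$ only rules it out as an \emph{initial} configuration; it does not stop the algorithm from \emph{creating} it.

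The paper avoids this by using the weaker invariant ``some component has strictly more agents than nodes,'' which \emph{allows} that component to contain holes. It confines all $k$ agents to a star $S_2$ on $k-1$ nodes, disconnected from a star $S_1$ on $n-k+1$ holes; pigeonhole guarantees a multinode in $S_2$ no matter how the agents shuffle inside it, so gathering is harmless rather than fatal. To obtain Connectivity Time, the paper does not argue abstractly about rotations but gives a concrete periodic schedule: it alternates the two-star picture with a phase in which one current multinode is isolated while the remaining $n-1$ vertices form a single star, and the alternation supplies the cross-edges needed for the $T$-window unions. The lesson is that you should not try to stop the multinode from touching holes or from growing; you should trap an \emph{excess} of agents over nodes in one component and certify Connectivity Time via an explicit schedule rather than a scheduling-existence argument.
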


\begin{proof}
    Let $T\geq 2$, $n \geq 3$ and $k\leq n$. We show that there is no deterministic algorithm solving dispersion of $k \geq 3$ agents. Let algorithm $\mathcal{A}$ solve dispersion on dynamic graphs. Since agents are not dispersed, there are at least $n-k+1$ nodes which are holes initially. Define $p=n-k+1$. Let $v_1$, $v_2$, \ldots, $v_n$ be the nodes of $\mathcal{G}_0$. Consider the following configuration for each round $t\geq 0$.
    \begin{itemize}
        \item $\bm{t \in [0, \, T-2]}$: Without loss of generality, let $v_1$, $v_2$, \ldots, $v_p$ be holes initially. Nodes $v_1$, $v_2$, \ldots, $v_p$ form a star graph $S_1$, and the remaining nodes (i.e., $v_{p+1}$, $v_{p+2}$, \ldots, $v_n$) form a star graph $S_2$ (there may be holes in $S_2$). And, there is no edge between $S_1$ $\&$ $S_2$.
        \item $\bm{t \in [iT-1, \, (i+1) T-2]}$, \textbf{where} $\bm{i \geq 1\, \& \,i(}$\textbf{mod} $\bm{2) \neq 0}$: At the end of round $iT-2$, there are two star graphs $S_1$ $\&$ $S_2$. The star graph $S_1$ contains holes. Since there are $k$ agents in the star graph $S_2$ of size $k-1$ at the round $iT-2$, there is a multinode (say $v$) at the end of round $iT-2$ in $S_2$. In round $t$, the adversary makes node $v$ as an isolated node and forms a star of the remaining vertices (i.e. star graph $S$ of $\{v_1, v_2, \ldots, v_n\}-\{v\}$). 
        \item $\bm{t \in [i T-1, \, (i+1)T-2]}$, \textbf{where} $\bm{i \geq 1\, \& \,i(}$\textbf{mod} $\bm{2) = 0}$: At the end of round $iT-2$, there is one star graph $S$ and one isolated vertex $v$. Since at most $k-2$ agents are in $S$ at the end of round $iT-2$, there are at least $p$ vertices which are holes. Let $w_1$, $w_2$, \ldots, $w_p$ be holes in the star graph $S$ at the end of round $iT-2$. At the beginning of round $t$, the adversary forms a star graph $S_1$ of the holes $w_i$s, and forms the star graph $S_2$ of nodes $\{v_1, v_2, \ldots,v_n\}\backslash\{w_1, w_2, \ldots, w_p\}$. And, there is no edge between $S_1$ $\&$ $S_2$.
    \end{itemize}

    It is not hard to observe that for each round $t$, the above construction holds the Connectivity Time property. In the construction described above, solving the dispersion problem is impossible for the following reason: If \( t \in [0, T-2] \), the dispersion problem cannot be solved because \( k \) agents are arranged in a star graph \( S_2 \) with \( k-1 \) nodes. Consequently, there exists a multinode in the graph \( \mathcal{G}_t \) for all \( t \in [0, T-2] \). 

Furthermore, for any integer \( i \geq 2 \) where \( i \mod 2 \neq 0 \), if \( t \in [iT-1, (i+1)T-2] \), solving the dispersion is also impossible since a multinode becomes an isolated node during round \( t \). For any integer \(i \geq 2\) that is even (i.e., \(i \mod 2 = 0\)), and for any time \(t\) within the interval \([i T - 1, \; (i + 1) T - 2]\), it becomes impossible to solve the dispersion problem at the end of round \(t\) for the following reason. At the end of round $iT-2$, at most \(k - 2\) agents are present in the star graph \(S\) of size \(n - 1\). Therefore, at least \(p+1\) holes within the star graph \(S\) at the end of round $iT-2$. At the start of round \(iT-1\), the adversary creates a star graph \(S_1\) with \(p\) holes, while the remaining nodes form another star graph \(S_2\). In star graph \(S_2\), there are \(k - 1\) nodes and \(k\) agents, which makes dispersion impossible to solve at the end round $iT-1$. Since, the adversary maintains this configuration for every round \(t\in[i T - 2, \; (i + 1) T - 2]\), where $i$ is even, therefore due to the same reason, dispersion is impossible to solve at round $t$. No additional advantages, such as infinite memory, full visibility, global communication, and knowledge of the parameters $k$ and $n$, help the agents to solve dispersion. This completes the proof. 
\end{proof}


It is impossible to solve the implicit dispersion in 1-Interval Connected graphs using either only 1-hop visibility or only global communication due to Theorem \ref{thm:imp_1_nbhd}, \ref{thm:imp_global}. Also, $1$-Interval connected graphs are $T$-Path Connected graphs (refer Observation \ref{obs:connectivity}). Therefore, we have the following observation.

\begin{observation}
   Without 1-hop visibility and global communication, it is impossible to solve the dispersion problem (Implicit/Explicit) in $T$-Path Connected graphs.
\end{observation}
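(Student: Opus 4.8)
The statement to prove is the Observation immediately following Theorem~\ref{thm:imp-connec}: without 1-hop visibility and global communication, it is impossible to solve the dispersion problem (implicit or explicit) in $T$-Path Connected graphs.

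\textbf{Proposed approach.} The plan is to derive this purely as a corollary of the two cited impossibility results for $1$-Interval Connected graphs, Theorem~\ref{thm:imp_1_nbhd} and Theorem~\ref{thm:imp_global}, combined with the containment of connectivity models recorded in Observation~\ref{obs:connectivity}. The key conceptual point is that any lower bound or impossibility result that holds on a \emph{smaller} (more restrictive) class of adversaries automatically holds on a \emph{larger} class: if the adversary already wins when confined to producing $1$-Interval Connected sequences, it certainly wins when it is also allowed to produce the strictly more general $T$-Path Connected sequences. So I would not construct any new adversarial schedule at all.

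\textbf{Key steps in order.} First, I would fix $T>1$ (for $T=1$ all three models coincide and the claim is literally Theorems~\ref{thm:imp_1_nbhd} and~\ref{thm:imp_global}) and recall from Observation~\ref{obs:connectivity} that every $1$-Interval Connected dynamic graph is in particular $T$-Path Connected. Second, I would argue the contrapositive: suppose for contradiction that some deterministic algorithm $\mathcal{A}$ solves dispersion (implicit, hence also covering explicit by the footnote after the explicit-dispersion definition) on all $T$-Path Connected graphs for agents lacking 1-hop visibility, or lacking global communication. Since the family of $1$-Interval Connected instances is a sub-family of the $T$-Path Connected instances, $\mathcal{A}$ in particular solves dispersion on every $1$-Interval Connected instance under the same agent capabilities. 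Third, I would split into the two cases matching the two cited theorems: if the agents have 1-hop visibility but no global communication, this contradicts Theorem~\ref{thm:imp_1_nbhd} (for $k\ge 5$); if the agents have global communication but no 1-hop visibility, this contradicts Theorem~\ref{thm:imp_global} (for $k\ge 3$); and if they have neither, either theorem applies. Hence no such $\mathcal{A}$ exists, proving the observation.

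\textbf{Expected main obstacle.} There is essentially no technical obstacle here — the only thing to be careful about is the bookkeeping on parameter ranges and model parameters: the cited theorems require $k\ge 5$ (resp.\ $k\ge 3$), and one should state the observation with the implicit understanding that $k$ is large enough (or simply note that the interesting regime is $k$ up to $n$ with $n$ not tiny), and one should confirm that the ``no storage at nodes'' and anonymity assumptions of the present model match those under which Theorems~\ref{thm:imp_1_nbhd}--\ref{thm:imp_global} were proved, so that the reduction is legitimate rather than merely formal. Since the present paper reuses exactly the model of \cite{Ajay_dynamicdisp} for these lower bounds, this matching is immediate, and the observation follows in a couple of lines.
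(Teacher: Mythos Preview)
Your proposal is correct and matches the paper's own reasoning essentially line for line: the paper, in the paragraph immediately preceding the observation, invokes Theorems~\ref{thm:imp_1_nbhd} and~\ref{thm:imp_global} for $1$-Interval Connected graphs and then uses the containment of $1$-Interval Connected instances in the $T$-Path Connected class (via Observation~\ref{obs:connectivity}) to transfer the impossibility. Your additional care about the parameter thresholds $k\ge 5$ and $k\ge 3$ is a nice touch that the paper leaves implicit.
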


As discussed in Section \ref{sec:related}, to solve explicit dispersion in $1$-Interval Connected graphs, global communication is enough for agents to terminate once dispersion is achieved. In the $T$-Path Connectivity model, we ask the following question.

\begin{question}\label{ques:necessary}
   Is it possible to solve explicit dispersion\footnote{We discuss implicit dispersion in Section \ref{sec:algorithm} as Remark 2} when agents have the same capability as in \cite{Ajay_dynamicdisp}, i.e., agents are equipped with global communication, 1-hop visibility but do not know $n$, $k$? 
\end{question}

 The answer is no due to the following impossibility result.



\begin{figure}
    \centering
    \includegraphics[width=0.75\linewidth]{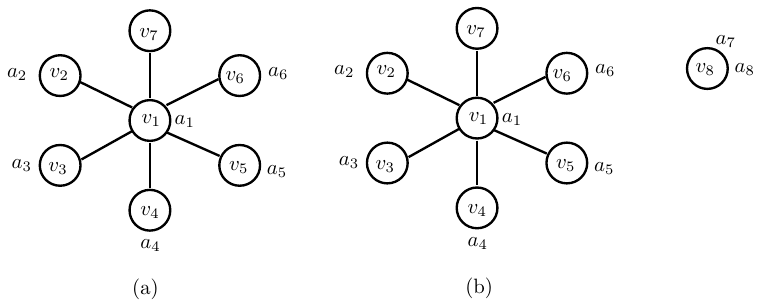}
     \caption{(a) $\mathcal{G}$ for $n=7$ with agents' position, (b) $\mathcal{G'}$ for $n=7$ with agents' position}
    \label{fig:Image_Imp1}
\end{figure}

\begin{figure}
    \centering
    \includegraphics[width=0.75\linewidth]{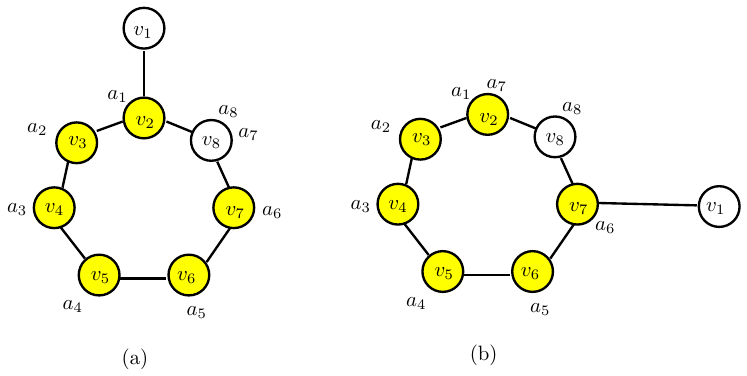}
      \caption{(a) $\mathcal{G}_{t+1}$ for $n=7$ with agents' position, (b) $\mathcal{G}_r$, $r>t+1$, for $n=7$ with agents' position}
    \label{fig:Image_Imp2}
\end{figure}

\begin{theorem}\label{thm:imp:knowledge_T}
(For $n\geq 3$)  If agents are equipped with global communication and full visibility and do not know $n$, $k$, $T$, then it is impossible to solve explicit dispersion in $T$-Path Connected graphs.
\end{theorem}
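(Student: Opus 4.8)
The plan is an indistinguishability argument that exploits the agents' ignorance of $n$, $k$, and $T$ together with the fact that a $T$-Path Connected graph may stay disconnected for arbitrarily many rounds. Suppose, for contradiction, that a deterministic algorithm $\mathcal{A}$ solves explicit dispersion on all $T$-Path Connected graphs using only global communication and full visibility, with the agents knowing none of $n$, $k$, $T$. First fix a ``small'' instance $\mathcal{H}$ on $n-1$ nodes carrying $k\le n-2$ agents that is connected in every round (hence $1$-Interval Connected, and therefore $T$-Path Connected for every $T\ge 1$); since $\mathcal{A}$ is correct, on $\mathcal{H}$ it reaches a dispersed configuration and all agents terminate at some finite round $t$. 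Now build a ``large'' instance $\mathcal{G}'$ on $n$ nodes: keep the $n-1$ nodes of $\mathcal{H}$ with the same $k$ agents and the same round-by-round edges among them, add a fresh node $v^{\ast}$ carrying two further agents (so $k+2\le n$ agents in all), leave $v^{\ast}$ isolated during rounds $0,\dots,t$, and from round $t+1$ onward make the whole $n$-node graph connected and keep it so. Taking $T:=t+2$, every length-$T$ window contains a round $\ge t+1$ in which all pairs are co-located, so $\mathcal{G}'$ is a genuine $T$-Path Connected graph on $n\ge 3$ nodes.

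Next I argue indistinguishability and premature termination. During rounds $0,\dots,t$ the $k$ ``common'' agents of $\mathcal{G}'$ are confined to the component spanned by the $n-1$ nodes inherited from $\mathcal{H}$; under full visibility each such agent sees exactly that component, and under global communication it communicates only inside it, so the entire history of views and messages of the common agents in $\mathcal{G}'$ coincides with their history in $\mathcal{H}$. Not knowing $n$ they cannot detect the missing node $v^{\ast}$, not knowing $k$ they cannot detect the two missing agents, and not knowing $T$ they cannot argue ``a further component must already have merged.'' Hence the common agents execute identically in both instances and terminate at round $t$ in $\mathcal{G}'$ as well — but at round $t$ the node $v^{\ast}$ holds two agents, so $\mathcal{G}'$ is not dispersed, and the agents that have terminated can never move again.

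It remains to show that dispersion is never achieved on $\mathcal{G}'$, and this is where I expect the real work. After round $t$ the only hole is a single node $h$ among the $n-1$ old nodes, each of which is pinned down by a terminated (immobile) agent, while the two agents on $v^{\ast}$ are the only mobile ones. I let the adversary play a pursuit-evasion strategy from round $t+1$ on: each round it presents a connected $n$-node graph in which the sole neighbour(s) of $h$ are pinned nodes and the mobile agents sit at the largest possible distance from $h$ (for instance a path or star rebuilt from scratch each round); since vertices are anonymous and ports are re-labelled every round, whatever step a mobile agent takes toward $h$ is annulled by the next round's graph, so the distance from the mobile agents to $h$ never reaches $0$ and $h$ stays empty forever. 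Thus $\mathcal{G}'$ remains non-dispersed for all rounds, contradicting that $\mathcal{A}$ solves explicit dispersion; since $\mathcal{G}'$ is a legitimate $T$-Path Connected graph, this proves the theorem. The delicate points are exactly here: verifying that the adversary can evade indefinitely while keeping $\mathcal{G}'$ connected in every round (and, for the smallest sizes such as $n=3$, enlarging the hidden part so the evasion still goes through), and checking that the common agents' local histories in $\mathcal{H}$ and $\mathcal{G}'$ are literally identical. One could alternatively stop after the second paragraph and declare the contradiction already there, since a correct algorithm should terminate agents only once dispersion holds; the evasion argument merely makes this airtight by showing dispersion is afterwards genuinely unreachable.
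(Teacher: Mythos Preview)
Your proposal is correct and takes essentially the same approach as the paper: build a small always-connected instance on which $\mathcal{A}$ terminates at some round $t$, embed it as one component of a larger instance whose other component is a single node carrying two extra agents, set $T=t+2$, invoke indistinguishability to force premature termination of the common agents, and then run an adversarial evasion (the paper makes this explicit by placing the non-hole nodes on a cycle and attaching the hole $h$ as a pendant to a cycle node not currently hosting a mobile agent, so $h$ is always at distance $\ge 2$ from the two mobile agents) to show the lone hole is never filled. One small point to tighten: your claim that ``the only hole is a single node $h$'' needs $k=n-2$ exactly, so fix that when instantiating $\mathcal{H}$.
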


\begin{proof}
    Let $\mathcal{A}$ be an algorithm which solves explicit dispersion in $T$-Path Connected graph $G$. Let $v_1$, $v_2$, $v_3$, \ldots, $v_{n-1}$, $v_{n}$ be nodes of $G$. Initially, agents $a_1$, $a_2$, \ldots, $a_{n-1}$ are at nodes of $G$ in the following manner. Agent $a_i$ is at node $v_i$ for all $i \in [1, \,n-1]$. The initial configuration of $G$ (i.e., $\mathcal{G}_0$) is a star graph of these nodes. For each round $r$, $\mathcal{G}_r$ is $\mathcal{G}_0$ (refer Fig. \ref{fig:Image_Imp1}(a)). The graph $G$ maintains the $T$-Path Connectivity. The agent $a_i$ has no information of parameter $n$, $k$ and $T$. If algorithm $\mathcal{A}$ solves the dispersion in $G$, then agents terminate in some round $t$ after achieving dispersion. Since $n-1$ agents are in $n$ node dynamic graph $G$, there is a hole at the end of round $t$. Without loss of generality, node $v_1$ is a hole in $G$ at the end of round $t$.

    We construct a new $T'$-Path Connected graph $G'$ of $n+1$ nodes. Let $v_1$, $v_2$, $v_3$, \ldots, $v_{n-1}$, $v_{n}$, $v_{n+1}$ be nodes of $G'$. Initially, agents $a_1$, $a_2$, \ldots, $a_{n-1}$, $a_n$, $a_{n+1}$ are at nodes of $G'$ in the following manner. Agent $a_i$ is at node $v_i$ for all $i \in [1, \,n-1]$, and agent $a_{n}$, $a_{n+1}$ are at node $v_{n+1}$. Let $T'$ be $t+2$. The construction of $\mathcal{G}'$ of $G'$ in the first $T'-2$ rounds is as follows. For each $r \in [1, \,t]$, nodes $v_1$, $v_2$, \ldots, $v_{n-1}$ forms $\mathcal{G}_0$, and node $v_{n+1}$ is not connected with $v_i$, $1\leq i \leq n-1$ (refer Fig. \ref{fig:Image_Imp1}(b)). In the beginning of round $T'-1$ (i.e., round $t+1$), the adversary forms a cycle of nodes $v_2$, \ldots, $v_{n}$, $v_{n+1}$, and connects $v_2$ with node $v_1$ (refer Fig. \ref{fig:Image_Imp2}(a)). It is trivial to observe that $\mathcal{G}'_r$ maintains $T'$-Path Connectivity in round $r \in [1, \, T']$ in $G'$. In the first \( T' - 2 \) rounds, agent \( a_i \) (where \( i \in [1, n-1] \)) is unable to understand the existence of agents \( a_n \) and \( a_{n+1} \), despite having full visibility and global communication. This is because agents \( a_n \) and \( a_{n+1} \) are located in a different connected component. Also, agents have no information of $n$ and $k$. Therefore agent $a_i$, $i \in [1,\,n-1]$ is not able to differentiate between $G$ and $G'$ for the first $T'-2$ rounds and terminate in $G'$ as $T'-2=t$.  
    
    At the round $t+1$, node $v_1$ is a hole. At the beginning of round $t+1$, node $v_1$ is at least two hops away from node $v_{n+1}$ (refer Fig. \ref{fig:Image_Imp2}(a)). In the round $t+1$, agents $a_{n-1}$, $a_n$ do not get any help from the other agents as the other agents are already terminated. What agents $a_{n-1}$, $a_n$ can do is to move at most 1-hop at any direction at the end of round $t+1$. Therefore, by the start of round $t+2$, the node $v_1$ remains as a hole. Further, we need to show that $v_1$ remains a hole for each round $r$, $r>t+2$. In every round $r$, the adversary maintains $\mathcal{G}'_r$ is connected. Therefore, $G'$ maintains $T'$-Path connectivity for every round $r$. At the beginning of round $r$ ($r>t$), the adversary forms $\mathcal{G}'_r$ in the following way. Let agent $a_{n}$ and $a_{n+1}$ be at node $v_i$ and $v_j$, respectively, where $i, j \in [2, \, n+1]$. The adversary forms a cycle of nodes $v_2$, \ldots, $v_{n-1}$, $v_{n}$, and add an edge between node $v_1$ and $v_k$, where $k \neq i,j\, \& \,k \in [2, \, n+1]$ (refer Fig. \ref{fig:Image_Imp2}(b)). It is not difficult to observe that $v_1$ is at least two hops away from node $v_i$ and $v_j$. Since agent $a_n$, $a_{n+1}$ can move at most 1-hop, therefore at the end of round $r$, the node $v_1$ remains a hole. Note that global communication and full visibility can not help agents to achieve dispersion. Therefore, for every round $r$ ($r>t$), the node $v_1$ remains the hole, and the dispersion is not achieved in $G'$. This completes the proof.
\end{proof}

Due to Theorem \ref{thm:imp:knowledge_T}, agents need the knowledge of $T$ if agents have the same capability as in \cite{Ajay_dynamicdisp}. Now, we show a time lower bound of $\Omega(k\cdot T)$ rounds to solve implicit dispersion in the $T$-Path Connected graphs $G$ \footnote{A time lower bound for implicit dispersion also applies to explicit dispersion as without solving implicit dispersion, one cannot solve explicit dispersion.} considering agents know all the parameters. The result is as follows. 

\begin{figure}[ht]
    \centering
    \includegraphics[width=0.5\linewidth]{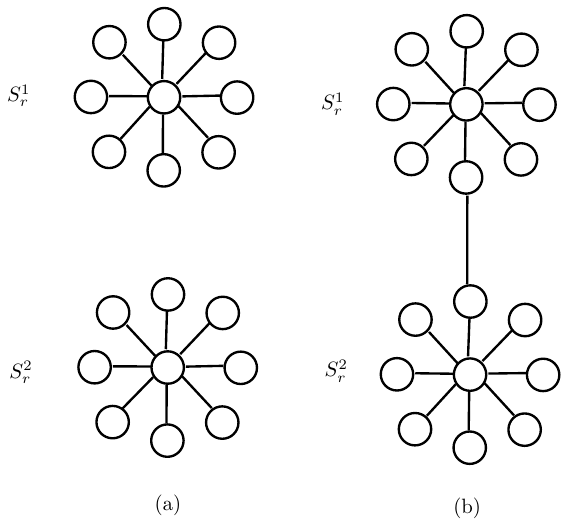}
 \caption{(a) $\mathcal{G}_r$ at round $r$, where $p(T-1)+1 \leq r < (p+1)(T-1)$, (b) $\mathcal{G}_r$ at round $r$, where $r=(p+1)(T-1)$}
     \label{fig:lower}
\end{figure}

\begin{theorem}\label{thm:lower_bound_T}
Any algorithm solving implicit dispersion on any $T$-Path Connected graph of $n$ nodes requires $\Omega(k \cdot T)$ rounds. The lower bound holds even if the agents have infinite memory, are equipped with global communication, have full visibility and know all of $k$, $n$, $T$. This proof is valid when the dynamic diameter of the tree is $\hat{D} = O(1)$.
\end{theorem}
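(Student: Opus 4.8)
The plan is to construct an adversarial $T$-Path Connected graph on which any dispersion algorithm is forced to spend $\Omega(k \cdot T)$ rounds, even with full power (infinite memory, global communication, full visibility, knowledge of all parameters). The underlying static topology should be a tree of constant diameter — a star, or a star of stars — so that the $\hat{D} = O(1)$ hypothesis is satisfied, and the hardness comes entirely from the temporal structure, not from distances. First I would set up the initial configuration so that many agents are piled on few nodes: place all $k$ agents on a single node (or on a constant number of nodes), leaving $n - O(1)$ holes. The key quantitative idea is that in one ``connectivity window'' of $T$ rounds, the adversary can be forced to expose a path between a crowded node and a hole only \emph{once}, and in that single round only $O(1)$ agents can actually move across that path to a fresh hole; during the remaining $T-1$ rounds of the window the adversary keeps the crowded component isolated from the holes (which is allowed, since $T$-Path Connectivity only demands a path \emph{at least once} in every $T$ consecutive rounds). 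Hence the number of distinct nodes newly occupied per $T$ rounds is $O(1)$, so clearing all $\Theta(k)$ excess agents takes $\Omega(k)$ windows, i.e. $\Omega(k \cdot T)$ rounds.

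Concretely, I would describe the adversary in phases indexed by $p$, each lasting $T-1$ rounds (matching the figure: rounds $p(T-1)+1 \le r < (p+1)(T-1)$ form the ``waiting'' part, and $r = (p+1)(T-1)$ is the ``connection'' part). During the waiting rounds the adversary partitions $V$ into two components: one small star $S_2$ containing every node that currently holds an agent (size $O(1)$, since at the start of phase $p$ at most $O(p)$ distinct nodes are occupied — wait, this needs care), isolated from a star $S_1$ on the holes. The pairwise $T$-Path condition is checked by noting that within any $T$ consecutive rounds each ``connection'' round $r=(p+1)(T-1)$ makes the whole graph a single star, so every pair of nodes shares a component in that round; between two consecutive connection rounds there are exactly $T-1 < T$ intervening rounds, so every length-$T$ window contains a connection round. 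On the connection round the adversary lets the agents move, but since the connecting structure is a star whose center is a formerly-occupied node and whose leaves include the holes, and since we start the counting with all agents on one node, one shows inductively that after phase $p$ at most $p+1$ nodes are occupied: at most one new hole gets filled per connection round because — and here is where I must be slightly more careful — I should instead place the $k$ agents so that they can split but each agent reaching a new hole ``uses up'' that hole, and bound the progress measure $\Phi_r = n - (\text{number of occupied nodes})$, showing $\Phi$ decreases by at most a constant per $T$ rounds. To make the per-window progress genuinely $O(1)$ rather than up to $T$, the adversary should expose the $S_1$–$S_2$ bridge not on a single round but in a way that only a bounded number of agents can cross before it vanishes; the cleanest device is to have the bridge be present in exactly one round of the window and connect to exactly one hole at a time, so at most the agents sitting on the bridge's $S_2$-endpoint move onto that one hole.

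The main obstacle — and the part I would spend the most effort nailing down — is the interaction between $T$-Path Connectivity and the ``only $O(1)$ progress per window'' claim. Because $T$-Path Connectivity is a \emph{pairwise} condition over sliding windows (not a ``the union over $T$ rounds is connected'' condition), I have to verify that the schedule I build actually satisfies it for \emph{every} starting round $r$, not just the aligned ones, and simultaneously that the connected components during the waiting rounds never accidentally give the algorithm a shortcut to fill many holes at once. The tension is: to satisfy pairwise $T$-Path Connectivity I need a ``global'' connection round reasonably often, but on a global connection round a clever algorithm with global communication might coordinate many agents to disperse across many holes in that single round. The resolution I would pursue is to make the connection round's graph a \emph{path} or a star \emph{rooted at a hole} with a long spine, so that even though every pair is in one component, the agents (which move only one hop per round, per the CCM model) are physically too far from the holes to reach more than $O(1)$ of them; combined with $\hat D = O(1)$ this forces the spine to be short, which in turn limits how many holes are simultaneously ``one hop from an agent.'' Getting these two constraints — bounded diameter and bounded per-round reachable holes — to coexist is the crux, and I expect the proof to hinge on choosing the connection-round topology to be a star whose center is occupied and whose leaves are all holes, with the observation that then exactly the agents on that center can each fill a distinct leaf, so I must further arrange that the center is never a multinode with more than $O(1)$ agents. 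That last scheduling invariant, maintained inductively across phases, is where the real work lies.
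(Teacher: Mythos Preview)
Your skeleton matches the paper's proof exactly: co-locate all $k$ agents, keep a star $S^1$ on the currently occupied nodes and a star $S^2$ on the holes, leave them disconnected for $T-2$ rounds out of every $T-1$, and on the one ``connection'' round join them by a \emph{single} bridge edge. You even state the right device yourself (``the cleanest device is to have the bridge be present in exactly one round of the window and connect to exactly one hole at a time''). With that single bridge edge the graph has diameter at most $4$, so $\hat D = O(1)$, and exactly one node of $S^2$ is within one hop of the occupied region, so at most one new node can become occupied in that round. Connection rounds are $T-1$ apart, so every sliding window of length $T$ contains one, and on a connection round the whole graph is a single component; that dispatches the $T$-Path Connectivity check for all pairs and all windows simultaneously.

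Where you go astray is the ``main obstacle'' paragraph. The tension you describe---bounded diameter versus bounded per-round progress---does not exist once you commit to the single bridge edge, and your proposed fixes (long spines, stars rooted at holes, carefully limiting the multinode at the center) are all unnecessary. In particular, the worry that ``the center is never a multinode with more than $O(1)$ agents'' tracks the wrong quantity: even if all $k$ agents sit on the bridge endpoint and all cross, they land on the \emph{one} adjacent hole, so the number of occupied nodes still increases by at most one. The relevant invariant is ``number of holes within one hop of an occupied node,'' not ``number of agents able to move,'' and the single-edge bridge pins that to one. Drop the second-guessing and your proof is the paper's proof.
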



\begin{proof} 
    Let $G$ be $T$-Path Connected graph. We show that the adversary can construct $\mathcal{G}_r$, $r>0$, i.e., a sequence of dynamic graphs such that after every $T$ round at most one new node can be visited by the agents. We construct a dynamic forest on which the agent requires at least $k\cdot T$ rounds to visit $k$ (new) nodes by the agents. Let, at round $r$, set $V^1_r$ consists of some nodes of $G$ which contain at least one agent, and $V^2_r=V\backslash V^1_r$. Let $S^1_r$ be a star graph formed by the nodes of $V^1_r$, and $S^2_r$ be a star graph formed by the nodes of $V^2_r$. Initially, all agents are co-located at node $v$ and we consider $V^1_1=\{v\}$. If $r<T-1$, then there is no edge between nodes of $S^1_r$ and $S^2_r$. If $r=T-1$, then the adversary connects $S^1_r$ and $S^2_r$ by some edge $e$. In round $r=T-1$, agents may access a hole. Therefore, in round $T-1$, all the agents are in at most two nodes. In round $T \leq r < 2(T-1)$, the adversary keeps all the nodes with agents in $V^1_r$, i.e., the size of $V^1_r$ is at most two. In the same manner, the adversary connects $S^1_r$ and $S^2_r$ via some edge in round $2(T-1)$. Therefore, agents may access another hole in round $r=2(T-1)$, and agents are at most three nodes in round $r=2(T-1)$. Again in round $2T \leq r < 3(T-1)$, the adversary keeps all the nodes with agents in $V^1_r$, i.e., the size of $V^1_r$ is at most three. Continuing in the same way, in round $r$ where $p(T-1)+1 \leq r < (p+1)(T-1)$ for any $p\in [1,k-1]$, the adversary keeps all the nodes with agents in set $V^1_r$ (i.e., the size of $V^1_r$ is at most $k-1$) and maintains no edge between nodes of $S^1_r$ and $S^2_r$ (refer Fig. \ref{fig:lower}(b)). Hence till round $k(T-1)-1$, the agents can not reach a dispersed configuration. For all rounds $r\geq k(T-1)$, $S^1_r$ and $S^2_r$ is connected via an edge (refer Fig. \ref{fig:lower}(b)). In this way, the adversary maintains $T$-Path Connectivity in $G$. This completes the proof. 
\end{proof}

\section{Dispersion in $T$-Path Connected Graphs}
In this section, we present an algorithm that takes $O(k \cdot T)$ time to solve the explicit dispersion problem for $k \leq n$ in an arbitrary $n$ node $T$-Path Connected anonymous graph when agents are equipped with global communication, 1-hop visibility and knowledge of $T$. For the sake of completeness, before we give an overview of our algorithm, we give a high-level idea of the algorithm $\mathcal{DISP}$ of \cite{Ajay_dynamicdisp}.\\

\noindent \textbf{High-level idea of $\mathcal{DISP}$ \cite{Ajay_dynamicdisp}:} The idea is to slide the agents from multinode(s) to hole(s), in every round $r$ until all agents are dispersed. Consider a path $P$ = $v_1$, $v_2$,\ldots,$v_{l-1}$, $v_l$ such that $v_1$ is a multinode, $v_2$,...,$v_{l-1}$ has an agent each, and $v_l$ is an empty node. Given path $P$, sliding means that an agent from node $v_j$ moves to node $v_{j+1}$, for $1 \leq j < l$. By virtue of global communication, an agent $a_i$ finds the information of such a hole from the agent at node $v_{l-1}$ as the node $v_l$ is in the 1-hop neighbourhood of $v_{l-1}$ (refer Fig. \ref{fig:case-1}). After this sliding, node $v_l$ (in path $P$), which previously was empty, now has an agent(s) positioned on it. It may be possible there are multiple paths leading to the same hole (refer Fig. \ref{fig:case-2}). In this case, even if multiple agents reach the same hole at some round $r$ from different paths, there is one less hole from the previous round. One more case is possible, which is as follows. Consider a path $P_1=u_1$, $u_2$,\ldots,$u_{l}$, and each node $u_i$ for $2\leq i \leq l-1$ contains one agent. Nodes $u_1$ and $u_l$ are multinode. For some $2\leq j \leq l-1$, $u_j$ is connected to a hole $v$. Therefore, there are two paths that lead to hole $v$ (refer Fig. \ref{fig:case-3}). Their algorithm takes care of this case by making the agents slide through only one such path, i.e., either through the path $u_1$, $u_2$,\ldots,$u_{j}$, $v$ or through $u_l$, $u_{l-1}$,\ldots,$u_{j}$, $v$. All this can be done in one single round.\\     
\\

\begin{figure}[htbp]
    \centering
    \begin{minipage}[b]{1\textwidth}
        \centering
        \includegraphics[width=0.75\linewidth]{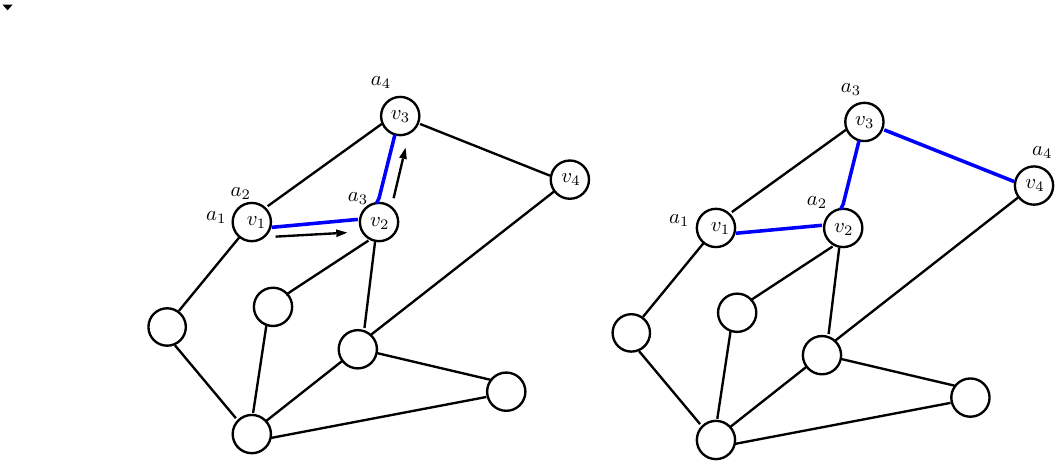}
        \caption{$v_1$, $v_2$, $v_3$, $v_4$ form a path where $v_1$ is a multinode and $v_4$ is a hole. The path through which agents are sliding is shown using $\rightarrow$.}
        \label{fig:case-1}
    \end{minipage}
    \hfill
    \begin{minipage}[b]{1\textwidth}
        \centering
        \includegraphics[width=0.75\linewidth]{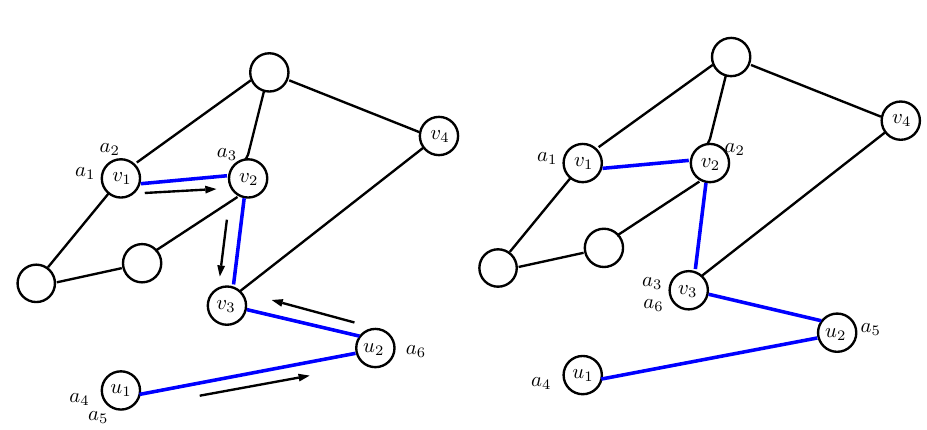}
        \caption{$v_1$, $v_2$, $v_3$ is a path where $v_1$ is a multinode and $v_3$ is a hole, and $u_1$, $u_2$, $v_3$ is a path where $u_1$ is a multinode. The path through which agents are sliding is shown using $\rightarrow$.}
        \label{fig:case-2}
    \end{minipage}
\end{figure}
\begin{figure}
    \begin{minipage}[b]{1\textwidth}
        \centering
        \includegraphics[width=0.75\linewidth]{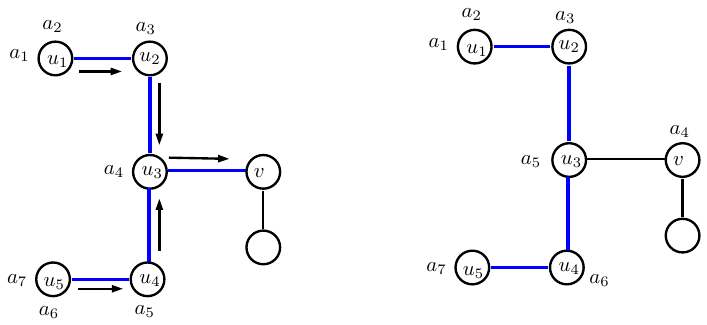}
        \caption{$u_1$, $u_2$, $u_3$, $u_4$, $u_5$ is a path where $u_1$ and $u_5$ are multinodes. Node $v$ is a hole connected to $u_3$, which is one hop away from the hole $v$. The path through which agents are sliding is shown using $\rightarrow$.}
        \label{fig:case-3}
    \end{minipage}
\end{figure}

\noindent \textbf{Overview of our algorithm:} Our idea to solve explicit dispersion in $T$-Path Connected graphs is similar to \cite{Ajay_dynamicdisp}. We use $\mathcal{DISP}$ as a subroutine in our algorithm. Recall that, in $T$-Path Connectivity, between any two nodes, there exists a path in at least one round within any consecutive $T$ rounds. If there is a hole and a multinode in $G$ at round $r$, then at some round $t \in [r, \,r+T-1]$, there must exist a path between the hole and the multinode. In this case, agents can fill the hole with the help of $\mathcal{DISP}$ in that round. In \cite{Ajay_dynamicdisp}, the agents can understand the termination with the help of global communication as the $\mathcal{G}_t$ is connected in every round. In $T$-Path Connected graphs, we can not rely on the global communication as $\mathcal{G}_t$ might be disconnected in every round $t$. Therefore, an agent needs to understand that the dispersion is achieved without the knowledge of $k$, $n$. Here, we use the fact that $T$ is known to the agents.

\begin{algorithm}[H]
    \caption{Explicit Dispersion}
    \label{algorithm:T-Path-disp}
    $a_i.t=0$\\
    \While{True}
    {
        agent $a_i$ broadcasts 1-hop neighbours information, $a_i.ID$ and $a_i.count$ \\
        \If{$a_i.count>1$}
        {
            $a_i.t=0$\\
            \If{$a_i$ receives $a_j.count>1$ from some agent $a_j$}
            {
                \If{$a_i$ has a hole in 1-hop or it receives a hole from some agent $a_k$}
                {
                    \If{$a_i$ is minimum ID agent among $a_j$}
                    {
                        it computes sliding path as per $\mathcal{DISP}$
                    }
                    \ElseIf{$a_i$ is not minimum ID agent among $a_j$}
                    {
                        $a_i$ computes a sliding path based on minimum ID agent among $a_j$. It moves if $a_i$ is on the sliding path as per $DISP$ of $a_j$. Otherwise, it waits at its position.
                    }
                }
                \ElseIf{$a_i$ has no hole in 1-hop \& it does not receive a hole from some agent $a_k$}
                {
                    it stays at its position
                }
            }
            \If{$a_i$ does not receive $a_j.count>1$ from some agent $a_j$}
            {
                \If{$a_i$ has a hole in 1-hop or it receives a hole from some agent $a_k$}
                {
                    it computes sliding path as per $\mathcal{DISP}$
                }
                \ElseIf{$a_i$ has no hole in 1-hop $\&$ it does not receive a hole from some agent $a_k$}
                {
                    it stays at its position
                }
            }
        }
        \ElseIf{$a_i.count=1$}
        {
            \If{it receives $a_j.count>1$ from some agent $a_j$}
            {
                $a_i.t=0$
                \If{$a_i$ has a hole in 1-hop or it receives a hole from some agent $a_k$}
                {
                    $a_i$ computes a sliding path based on minimum ID agent among $a_j$. It moves if $a_i$ is on the sliding path as per $DISP$ of $a_j$. Otherwise, it waits at its position.
                }
                \ElseIf{$a_i$ has no hole in 1-hop \& it does not receive a hole from some agent $a_k$}
                {
                    it stays at its position
                }
            }
            \ElseIf{it does not receives $a_j.count>1$ from some agent $a_j$}
            {
                 $a_i.t=a_i.t+1$\\
                \If{$a_i.t=T$}
                {
                    $a_i$ terminates
                }
                \Else
                {
                    stay at the current node
                }
            }    
        }
    }
\end{algorithm}

\subsection{The Algorithm}\label{sec:algorithm}
Agents maintain the following parameters. 
\begin{itemize}
    \item $a_i.ID$: It represents the ID of agent $a_i$.
    \item $a_i.count:$  It represents the number of agents present at the node. If $a_i.count=1$, then the current node where $a_i$ resides is not a multinode; else, if $a_i.count>1$, then the current node is a multinode. 
    \item $a_i.t$: Agent $a_i$ uses this to count the rounds.
\end{itemize}

A detailed description of the algorithm is as follows. At round $r$, $\mathcal{G}_r$ may be disconnected. In each round $r$, if agent $a_i$ is the minimum ID at the current node, then it broadcasts the information of its 1-hop along with $a_i.count$ and $a_i.ID$. Based on this information, agent $a_i$ does the following.
    \begin{enumerate}
        \item If agent $a_i.count>1$ and it receives $a_j.count>1$ from at least one other agent $a_j$, it updates $a_i.t=0$ and works as mentioned below. 
        \begin{itemize}
            \item If $a_i$ receives a hole information, then it performs the following steps.
            \begin{itemize}
                \item If $a_i$ is minimum ID among $a_j$s, then it runs $\mathcal{DISP}$.
                \item If $a_i$ is not minimum ID among $a_j$s, then it moves as follows. Let $a_j$ be the minimum ID agent. If $a_i$ is on the sliding path as per $\mathcal{DISP}$ of $a_j$, then it moves. Otherwise, it stays at its position.
            \end{itemize}
            \item If $a_i$ does not receive a hole information, then it stays at its position.
        \end{itemize}
        \item If agent $a_i.count>1$ and it does not receive $a_j.count>1$ from some agent $a_j$, it makes the following decision. It updates $a_i.t=0$. 
        \begin{itemize}
            \item If $a_i$ receives a hole information, then it runs $\mathcal{DISP}$.
            \item If $a_i$ does not receive a hole information, then it stays at its position.
        \end{itemize}
        \item If agent $a_i.count=1$ and it receives $a_j.count>1$ from some agent $a_j$, then it makes the following decision. It updates $a_i.t=0$.
        \begin{itemize}
            \item Let $a_k$ be minimum ID agent among $a_j$s. If $a_i$ receives a hole information or there is a hole in 1-hop of $a_i$, and $a_i$ is on the sliding path based on $\mathcal{DISP}$ of $a_j$, then it moves on the sliding path.
            \item If $a_i$ does not receive the hole information and there is no hole in its neighbours, it stays at its position.
        \end{itemize}

        \item If agent $a_i.count=1$ and it does not receive $a_j.count>1$ from some agent $a_j$, then it updates $a_i.t=a_i.t+1$, and compare the value of $a_i.t$ and $T$. If $a_i.t=T$, then it terminates. Otherwise, it stays at its current position.  
    \end{enumerate}

Refer to Algorithm \ref{algorithm:T-Path-disp} for the pseudocode.
\subsection{Correctness and Analysis of the Algorithm}\label{sec:analysis-T-Path}
At round $r(\geq 0)$, there can be more than one multinode in graph $\mathcal{G}_r$. Let $G_i$ be a connected component of $\mathcal{G}_r$. Consider $G_i$ contains at least one multinode and at least one hole. Let $u_1$, $u_2$, \ldots, $u_l$ be multinodes in $G_i$, and $a_i$ be the minimum ID agent at node $u_i$. Without loss of generality, let $a_1$ be the minimum ID agent among all $a_i$s. Algorithm \ref{algorithm:T-Path-disp} does the following step: using $\mathcal{DISP}$, $a_1$ fix a sliding path $P=w_1(=u_1)\sim w_2 \sim \ldots \sim w_p$ in $G_i$ such that $w_j$, $j\in [2,p-1]$ contains at least one agent, and node $w_p$ is a hole. Let $b_j$ be the minimum ID agent at node $w_j$, $j\in[2, p-1]$. In round $r$, $a_1$ moves from node $w_1$ to $w_2$, and agent $b_j$ moves from node $w_j$ to node $w_{j+1}$, where $j\in[1,p-1]$. In this way, by sliding, a hole is filled. As a consequence, the number of agents in $v_1$ decreases by 1, the hole gets one agent, and the number of agents in each $w_j$, $j\in[2, p-1]$ remain the same.  This leads to the following observation. 

\begin{observation}\label{obs:multinodes}
    Let $l$ be the number of multinode at round $r$. The number of multinode at round $r'>r$ is less than or equal to $l$.
\end{observation}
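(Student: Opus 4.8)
\textbf{Proof proposal for Observation~\ref{obs:multinodes}.}

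The plan is to track how the set of multinodes evolves in a single round $r \to r+1$ and argue that no \emph{new} multinode can be created by Algorithm~\ref{algorithm:T-Path-disp}. Fix a round $r$ and let $M_r$ denote the set of multinodes of $\mathcal{G}_r$ (nodes carrying at least two agents). I would argue that for every node $w$, if $w$ is not a multinode at round $r$ then $w$ is not a multinode at round $r+1$ either; combined with the fact that each multinode of $\mathcal{G}_r$ either stays a multinode or becomes a non-multinode, this immediately gives $|M_{r'}| \le |M_r|$ for all $r' > r$ by induction on $r'$.

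First I would classify the nodes of $\mathcal{G}_r$ according to the behaviour the algorithm prescribes. Consider a connected component $G_i$ of $\mathcal{G}_r$. If $G_i$ contains no multinode, or contains a multinode but no hole, then by the algorithm no agent in $G_i$ moves (the ``stays at its position'' branches), so the agent counts at every node of $G_i$ are unchanged and no new multinode appears there. If $G_i$ contains at least one multinode \emph{and} at least one hole, then exactly one sliding path $P = w_1 \sim w_2 \sim \cdots \sim w_p$ is executed, coordinated by the minimum-ID agent $a_1$ at the multinode $w_1$, where $w_1$ is a multinode, each $w_j$ ($2 \le j \le p-1$) carries at least one agent, and $w_p$ is a hole. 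The key point here is that $\mathcal{DISP}$ guarantees a \emph{single} such path is selected in $G_i$ even when several multinodes or several routes to the same hole exist (the three cases recalled from \cite{Ajay_dynamicdisp} in Figs.~\ref{fig:case-1}--\ref{fig:case-3}); I would cite this rather than reprove it. Along the path, exactly one agent leaves $w_1$, exactly one agent arrives at $w_p$, and for each intermediate $w_j$ one agent leaves and one arrives, so the count at $w_j$ is unchanged for $2 \le j \le p-1$. Every node of $G_i$ not on $P$ has no agent entering or leaving it. Hence after the round: the count at $w_1$ drops by one (so $w_1$ is either still a multinode or now holds exactly one agent), the count at $w_p$ rises from $0$ to $1$ (never a multinode), intermediate nodes are unchanged, and off-path nodes are unchanged. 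In particular no node that was a hole or singleton becomes a multinode, and the number of multinodes in $G_i$ does not increase.

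The last thing to check is that agents crossing between components cannot create a multinode either — but this cannot happen, since a sliding path $P$ lies entirely within one connected component $G_i$ of $\mathcal{G}_r$ (moves are along edges of $\mathcal{G}_r$), and a terminated agent never moves again; so the argument localizes to each component as above. Assembling these facts, every node that is not a multinode at round $r$ is not a multinode at round $r+1$, while every multinode at round $r$ remains a multinode or ceases to be one, giving $|M_{r+1}| \le |M_r|$; iterating yields the claim. The only subtlety I anticipate — and the step I would be most careful about — is the global-communication bookkeeping that underlies the claim that each component executes at most one sliding path and that the minimum-ID agent's computation is consistent across all agents it coordinates; this is exactly what $\mathcal{DISP}$ provides within a connected component, so the argument reduces to invoking Theorem~\ref{thm:main_result} and the case analysis of $\mathcal{DISP}$ component-wise.
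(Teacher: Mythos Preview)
Your argument is essentially the paper's own: work component by component, track the unique sliding path, and observe that the source multinode loses one agent, the target hole gains exactly one, intermediate nodes are unchanged, and off-path nodes do not move. One small correction worth making: the single-path-per-component property you rely on is not a feature of $\mathcal{DISP}$ itself (Fig.~\ref{fig:case-2} in the paper shows two multinodes sliding simultaneously toward the same hole in the original $\mathcal{DISP}$), but rather of Algorithm~\ref{algorithm:T-Path-disp}'s wrapper, which forces every agent in a component to defer to the minimum-ID multinode's path; your citation should point there rather than to the three $\mathcal{DISP}$ cases.
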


\begin{lemma}\label{lm:termination_cond}
    No agent terminates before the dispersion is achieved. 
\end{lemma}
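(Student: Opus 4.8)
The plan is to show that the termination condition in Algorithm~\ref{algorithm:T-Path-disp} — namely, an agent $a_i$ terminating when $a_i.count = 1$ and $a_i.t$ reaches $T$ — cannot be triggered while a multinode still exists in $G$. The key structural fact I would use is $T$-Path Connectivity: if at some round $r$ there is both a multinode (say $u$) and a hole (say $w$) in the graph, then by definition there is a round $t \in [r, r+T-1]$ in which $u$ and $w$ lie in the same connected component of $\mathcal{G}_t$. I would argue that in any such round $t$, the minimum-ID agent at any multinode in the relevant component will receive hole information (either directly in its $1$-hop neighbourhood or relayed via global communication along the component), so it resets its counter $a_i.t = 0$ according to case~2 of the algorithm, and the agents it broadcasts to likewise reset (case~1 and case~3). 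Hence the counter of \emph{every} agent that is in the same component as a multinode at round $t$ is reset to $0$ at that round.

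The main steps, in order, would be: (i) assume for contradiction that some agent $a_i$ terminates at round $r^*$ with the dispersion not yet achieved; then by Observation~\ref{obs:multinodes} there is at least one multinode at every round up to $r^*$, and since $k < n$ there is also at least one hole at every round; (ii) since $a_i$ terminated, $a_i$ must have had $a_i.count = 1$ and executed case~4 for $T$ consecutive rounds $[r^*-T+1,\, r^*]$, never once receiving $a_j.count > 1$ from any agent and never resetting its counter during that window; in particular $a_i$ is in a connected component with no multinode throughout that window; (iii) pick the multinode $u$ and a hole $w$ present at round $r' := r^*-T+1$ — wait, I need the window to straddle a "meeting" round. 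So instead: apply $T$-Path Connectivity to the pair $(u, w)$ with $r = r^*-T+1$; this yields a round $t \in [r^*-T+1, r^*]$ in which $u$ and $w$ are connected, and at that round the minimum-ID agent at a multinode in that component receives hole information and resets its own counter, and $a_i$'s counter is irrelevant to \emph{that} component. The real issue is linking $a_i$ specifically to a meeting event. So the cleaner route: note that for $a_i$ to reach $a_i.t = T$, it must pass through $T$ rounds in which \emph{it} is in a multinode-free component; but I must show that during this window $a_i$ either merges with a multinode's component or the configuration already became dispersed. I would argue that as long as dispersion is not achieved, $a_i$ cannot be ``isolated from all multinodes'' for $T$ consecutive rounds: apply $T$-Path Connectivity to the pair consisting of $a_i$'s current node and a node $u$ that is a multinode — within the window $[r^*-T+1, r^*]$ there is a round $t$ where $a_i$'s node and $u$ are in the same component, so $a_i$ receives (via global communication) the broadcast $a_j.count > 1$ from the minimum-ID agent at $u$, triggering $a_i.t = 0$ in case~3, contradicting that $a_i.t$ climbed to $T$ without interruption.

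For the writeup I would state it as: suppose $a_i$ terminates at the end of round $r^*$; then $a_i.count = 1$ and $a_i$ executed the $\textbf{else}$ branch of case~4 in each of rounds $r^*-T+1, \ldots, r^*$ without resetting, so in each of these rounds $a_i$ received no $a_j.count > 1$ message, meaning $a_i$'s connected component contained no multinode in any of these $T$ rounds. Since dispersion is not yet achieved, by Observation~\ref{obs:multinodes} there is a node $u$ that is a multinode at round $r^*-T+1$ (and, I should check, remains a multinode — or at least, there is always \emph{some} multinode each round, and I can pick $u$ to be a multinode at the start and track that it cannot have become non-multinode without the counter-reset propagating, or simply re-select a multinode each round). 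Let $v$ be the node of $a_i$ at round $r^*-T+1$. By $T$-Path Connectivity applied to the pair $(u, v)$ at $r = r^*-T+1$, there is a round $t \in [r^*-T+1, r^*]$ in which $u$ and $v$ are in the same connected component of $\mathcal{G}_t$ — but $a_i$ may have moved; still, since $a_i$ makes no moves while running case~4 (it ``stays at the current node''), $a_i$ is at $v$ throughout the window, so at round $t$ agent $a_i$ is in the same component as the multinode $u$, receives the broadcast with $\text{count} > 1$ via global communication, and sets $a_i.t = 0$ — contradicting that $a_i.t$ incremented to $T$ over the window without any reset. Hence no agent terminates before dispersion is achieved.

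The main obstacle I anticipate is the bookkeeping around the multinode persisting (or being re-chosen) across the whole window and around the exact round at which the $T$-counter hit its window: one has to be careful that the round $t$ furnished by $T$-Path Connectivity actually falls inside the $T$ rounds during which $a_i$ was running case~4, and that $u$ is genuinely a multinode at the round where the path appears (using Observation~\ref{obs:multinodes} to guarantee persistence of *some* multinode, and arguing that if $u$ stopped being a multinode it was because a slide occurred, which only happens when a hole was reachable — consistent with, not contradicting, the argument). I would handle this by choosing the window carefully (the last $T$ rounds before termination) and invoking $T$-Path Connectivity with $r$ set to the first round of that window, so that the guaranteed ``meeting round'' $t$ is automatically within $[r^*-T+1, r^*]$.
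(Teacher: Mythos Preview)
Your approach is the same as the paper's: assume agent $a_i$ at node $v$ terminates at round $r^*$ while a multinode still exists, note that $a_i$ stays at $v$ throughout $[r^*-T+1, r^*]$ (case~4 never moves the agent), and use $T$-Path Connectivity to produce a round $t$ in that window where $v$ shares a component with a multinode, forcing a reset of $a_i.t$ --- contradiction. The one real gap is exactly the obstacle you flagged but did not close: you pick the multinode $u$ at the \emph{start} of the window and then need $u$ still to be a multinode at the meeting round $t$. That forward persistence fails in general --- $u$ can be the head of sliding paths in rounds between $r^*-T+1$ and $t$ and drop to count~$1$, so that at round $t$ the component containing $v$ and $u$ may have no multinode and $a_i$'s counter is not reset. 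Your suggested patches (``re-select a multinode each round'', ``some multinode always exists'') do not help, because $T$-Path Connectivity is applied to a \emph{fixed} pair of nodes, and the re-selected multinode need not be the one whose meeting with $v$ is guaranteed.

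The paper avoids the issue by reversing direction: it picks the multinode $w$ at the \emph{end} of the window (round $r^*$) and argues that $w$ was a multinode at every earlier round. This backward persistence does hold: in Algorithm~\ref{algorithm:T-Path-disp} the only node whose agent count increases in a round is the hole at the end of a sliding path, which goes from $0$ to $1$ --- so no node can \emph{become} a multinode, and hence any multinode at round $r^*$ was already a multinode at every round in $[0, r^*]$. (The paper cites Observation~\ref{obs:multinodes} for this; strictly speaking one needs the slightly stronger set-monotonicity just stated, which follows from the discussion preceding that observation.) With $w$ chosen this way, $T$-Path Connectivity applied to the pair $(v,w)$ over the window yields a round $t\in[r^*-T+1,r^*]$ at which $v$ and the still-multinode $w$ share a component, and the contradiction goes through cleanly.
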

\begin{proof}
    Let \( a_1 \) be the agent at node \( v \), and let it terminate at round \( r \), meaning \( a_i.t = T \) at round \( r \). If the dispersion is not achieved by the end of round \( r \), then there is at least one multinode present at that time. Let \( w \) represent a multinode in \( \mathcal{G}_r \) at round \( r \). According to Observation \ref{obs:multinodes}, the number of multinodes does not increase; therefore, node \( w \) is also a multinode in \( \mathcal{G}_{r'} \) for every \( r' \) within the interval \([0, r]\). Since \( a_1.t = T \) at round \( r \), nodes \( v \) and \( w \) cannot be in the same component of \( \mathcal{G}_{r''} \), where \( r'' \) falls within the range \([r-T+1, r]\). This scenario is not possible, as it contradicts the definition of \( T \)-Path connectivity. This completes the proof.
\end{proof}

\begin{lemma}\label{lm:correctness}
    Algorithm \ref{algorithm:T-Path-disp} solves the dispersion problem.
\end{lemma}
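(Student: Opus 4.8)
The plan is to show that Algorithm~\ref{algorithm:T-Path-disp} eventually reaches a configuration with at most one agent per node, and moreover that all agents detect this and terminate. Lemma~\ref{lm:termination_cond} already gives one direction of correctness: no agent terminates prematurely. So the remaining work splits into two parts: (i) progress --- if dispersion has not yet been achieved, a hole gets filled within a bounded number of rounds; and (ii) termination --- once dispersion is achieved, every agent eventually increments its counter $a_i.t$ up to $T$ without resetting, and hence terminates.

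For part (i), I would argue as follows. Suppose at some round $r$ dispersion is not yet achieved, so by Observation~\ref{obs:multinodes} there is at least one multinode and (since $k\le n$) at least one hole, and this persists until a hole is filled. Fix a multinode $x$ and a hole $y$. By $T$-Path Connectivity, there is a round $t\in[r,r+T-1]$ in which $x$ and $y$ lie in the same connected component $G_i$ of $\mathcal{G}_t$; take the earliest such round so that the multinode count has not dropped (or if it dropped, a hole was filled, and we are done). In $G_i$ there is a path from a multinode to a hole, so there is a path $P$ of the form "multinode, agent, agent, \dots, agent, hole." The minimum-ID agent in the relevant multinode runs $\mathcal{DISP}$, which --- exactly as recalled in the high-level description of $\mathcal{DISP}$, and handled for the competing-path cases of Figures~\ref{fig:case-1}--\ref{fig:case-3} --- slides agents along one such path in a single round, reducing the number of holes by at least one. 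Since the number of holes is finite (at most $n-k$... actually $n - (\text{number of occupied nodes})$, bounded by $k-1$ when not dispersed), and a hole is filled at least once every $T$ rounds while dispersion is unachieved, dispersion is reached after $O(k\cdot T)$ such phases; combined with the $\Omega(k\cdot T)$ lower bound of Theorem~\ref{thm:lower_bound_T} this is also where the time bound claimed in the section header comes out, though for this lemma only finiteness is needed.

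For part (ii), once dispersion has been achieved at some round $r_0$, every node has exactly one agent, so for every agent $a_i$ we have $a_i.count=1$ permanently, and no agent ever receives $a_j.count>1$ from any other agent (such messages travel only within a connected component, and there are no multinodes anywhere). Thus from round $r_0$ onward every agent falls into case~4 each round, incrementing $a_i.t$ and never resetting it, so after at most $T$ further rounds each agent has $a_i.t=T$ and terminates. Together with Lemma~\ref{lm:termination_cond}, no agent terminates before $r_0$, so every agent terminates exactly in the window $[r_0+1,r_0+T]$ in a correct (dispersed) configuration, which establishes the claim.

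The main obstacle I anticipate is part (i): carefully justifying that the sliding mechanism of $\mathcal{DISP}$, when invoked simultaneously by the minimum-ID agents of several multinodes in possibly several connected components, never "loses" a hole-filling --- i.e., that in each round where some multinode and some hole are co-located in a component, at least one hole is genuinely filled, despite the coordination among non-minimum-ID agents described in cases~1 and~3 of the algorithm. This reduces to invoking the correctness of $\mathcal{DISP}$ (Theorem~\ref{thm:main_result}) component-by-component, but one must check that restricting attention to a single connected component of $\mathcal{G}_t$ does not break $\mathcal{DISP}$'s guarantees, and that the agents' counters, having all been reset to $0$ whenever a multinode message is heard, do not cause a spurious early termination --- which is precisely what Lemma~\ref{lm:termination_cond} rules out. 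So the proof of Lemma~\ref{lm:correctness} will mostly be a matter of assembling Observation~\ref{obs:multinodes}, Lemma~\ref{lm:termination_cond}, the $T$-Path Connectivity definition, and Theorem~\ref{thm:main_result} into the two-part argument above.
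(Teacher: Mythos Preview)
Your proposal is correct and your part (i) is essentially the paper's own argument: fix a multinode and a hole, invoke $T$-Path Connectivity to get a round $t\in[r,r+T-1]$ where they share a component, appeal to Lemma~\ref{lm:termination_cond} to ensure no agent has terminated by then, and then apply $\mathcal{DISP}$ to fill at least one hole in that round; iterate until no multinode remains. Your part (ii) on termination is also correct but goes beyond what the paper proves in this lemma --- the paper separates that out as Lemma~\ref{lm:termination}, so for Lemma~\ref{lm:correctness} itself you only need the progress argument.
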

\begin{proof}
    At some round $r$, if there is at least one multinode in $\mathcal{G}_{r}$, then there is at least one hole in $\mathcal{G}_r$. If at least one multinode and at least one hole are in the same connected component (say $G_1$) of $\mathcal{G}_r$, then the agents in $G_1$ can understand the hole and multinode with the help of global communication and 1-hop visibility and thus the hole gets filled. Otherwise, if at least one multinode and at least one hole are not in the same connected component of $\mathcal{G}_r$, then due to $T$-Path Connectivity, there exists $j$ between rounds $r$ $\&$ $r+T-1$,
     such that there is one connected component of $\mathcal{G}_j$ which contains at least one hole and at least one multinode. Due to Lemma \ref{lm:termination_cond}, all agents are active at round $j$. Therefore, if the number of holes is not reduced by 1 till round $j-1$, then the number of holes is reduced by one in round $j$ by virtue of the $\mathcal{DISP}$ algorithm. Therefore, if $k \leq n$ agents are positioned on $p$ nodes of $\mathcal{G}_{r}$ at the beginning of round $r$, then by the end of round $r+T-1$, the agents are positioned on at least $p + 1$ nodes of $\mathcal{G}_{r+T-1}$. Since no agent terminates due to Lemma \ref{lm:termination_cond}, it continues till there is no multinode in $G$. Therefore, all agents achieve the dispersion.
\end{proof}

\begin{lemma}\label{lm:termination}
Agents terminate successfully in Algorithm \ref{algorithm:T-Path-disp}.
\end{lemma}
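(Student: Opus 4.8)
The plan is to show two things: (i) once dispersion is achieved, every agent eventually satisfies the termination condition $a_i.t = T$ and halts; and (ii) no agent fails to terminate because of an infinite loop of resets. Combined with Lemma \ref{lm:correctness} (dispersion is achieved) and Lemma \ref{lm:termination_cond} (no agent terminates prematurely), this gives successful termination. First I would fix the round $r^\star$ guaranteed by Lemma \ref{lm:correctness} at which dispersion is first achieved, i.e., from round $r^\star$ onward every node carries at most one agent, so $a_i.count = 1$ for all agents in all rounds $r \ge r^\star$. The key observation is that the counter $a_i.t$ is reset to $0$ only in the branches where $a_i.count>1$ or where $a_i.count=1$ but $a_i$ receives $a_j.count>1$ from some agent $a_j$. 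After dispersion, no agent is on a multinode, so no agent ever broadcasts $count>1$; hence from round $r^\star$ onward no agent ever receives $a_j.count>1$, and therefore no agent's counter is ever reset again.

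Next I would argue that from round $r^\star$ on, each agent falls into the final branch (case 4: $a_i.count=1$ and it does not receive $a_j.count>1$) in every round, so it executes $a_i.t \leftarrow a_i.t+1$ each round. Since the counter is never reset after $r^\star$, within at most $T$ rounds the condition $a_i.t = T$ becomes true and the agent terminates. Thus every agent halts by round $r^\star + T$. A small subtlety to handle carefully: agents may reach $r^\star$ with different residual counter values, but since the increment happens unconditionally every round in case 4 and there are no more resets, each agent individually reaches $a_i.t=T$ within $T$ rounds of $r^\star$; they need not terminate simultaneously, which is fine for the statement. I should also note that an agent that terminates stops executing CCM cycles, which does not affect the dispersed configuration (a terminated agent simply stays put), so the argument is not disturbed by earlier terminations.

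The main obstacle, and the step I would spell out most carefully, is ruling out the pathological scenario where the counter is reset infinitely often \emph{before} dispersion is achieved — in principle one must be sure the ``pre-dispersion'' phase is finite. This is exactly where Lemma \ref{lm:correctness} does the heavy lifting: it shows the number of occupied nodes strictly increases by at least one every $T$ rounds until there is no multinode, so dispersion is reached in at most $(k-1)\cdot T = O(k\cdot T)$ rounds, i.e., $r^\star$ is finite (and indeed bounded). Once finiteness of $r^\star$ is in hand, the counter argument above is routine. I would therefore structure the proof as: invoke Lemma \ref{lm:correctness} to get a finite $r^\star$; observe no resets occur after $r^\star$; observe each agent increments every round in case 4 after $r^\star$; conclude each agent hits $a_i.t=T$ within $T$ further rounds and terminates, and by Lemma \ref{lm:termination_cond} this happens only after dispersion, so termination is correct.
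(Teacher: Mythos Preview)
Your proposal is correct and follows essentially the same approach as the paper's own proof: invoke Lemma~\ref{lm:correctness} to guarantee dispersion is eventually achieved, observe that once there are no multinodes no agent's counter is ever reset, and conclude each agent increments $a_i.t$ to $T$ within the next $T$ rounds and terminates (with Lemma~\ref{lm:termination_cond} ensuring correctness). Your write-up is in fact more careful than the paper's, explicitly handling the subtleties of staggered counter values at $r^\star$ and the effect of non-simultaneous termination, which the paper's brief proof leaves implicit.
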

\begin{proof}
 In Algorithm \ref{algorithm:T-Path-disp}, whenever an agent finds no information of multinode, it increases the value of $a_i.t$, and whenever it finds information of multinode, it resets $a_i.t$ to 0. Due to Lemma \ref{lm:termination_cond}, no agent terminates before the dispersion is achieved, and Lemma \ref{lm:correctness} shows that the dispersion is achieved. Therefore, whenever the dispersion is achieved, then within the next $T$ rounds, the value of $a_i.t$ becomes $T$ for every agent $a_i$, and all the agents terminates.
\end{proof}

\begin{lemma}\label{lm:time}
    The run time of Algorithm \ref{algorithm:T-Path-disp} is $O(k \cdot T)$. 
\end{lemma}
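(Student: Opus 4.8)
The plan is to track the number of ``occupied'' nodes and show it strictly increases at least once every $T$ rounds until dispersion, and then to account for the extra $T$ rounds the agents spend detecting termination. First I would recall, from Lemma \ref{lm:correctness}, the key progress statement: if at the beginning of some round $r$ the $k$ agents occupy exactly $p$ nodes of $\mathcal{G}_r$ and $p < k$ (equivalently, there is at least one multinode, hence at least one hole), then by the end of round $r + T - 1$ the agents occupy at least $p + 1$ nodes. This is exactly the window-of-length-$T$ guarantee: $T$-Path Connectivity forces a round $j \in [r, r+T-1]$ in which some connected component of $\mathcal{G}_j$ simultaneously contains a multinode and a hole, and by Lemma \ref{lm:termination_cond} no agent has terminated by round $j$, so $\mathcal{DISP}$ run as a subroutine fills that hole in round $j$, reducing the hole count by one.

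Next I would iterate this progress bound. Initially the $k$ agents occupy some $p_0 \ge 1$ nodes; dispersion is reached once they occupy $k$ nodes. Partition the timeline into consecutive blocks of $T$ rounds. Before dispersion is achieved, each block increases the occupancy count by at least one, by the progress statement applied at the start of the block. Hence after at most $(k - p_0) \le k - 1$ blocks, i.e. after at most $(k-1)\cdot T = O(k \cdot T)$ rounds, the agents are in a dispersed configuration. It then remains to bound the termination overhead: once dispersion holds, every node has exactly one agent, so no agent ever again sees or receives ``$a_j.count > 1$'' information, each agent's counter $a_i.t$ increments every round, and after exactly $T$ further rounds every $a_i.t$ reaches $T$ and all agents terminate (Lemma \ref{lm:termination}). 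Adding this final $T = O(k \cdot T)$ rounds keeps the total at $O(k \cdot T)$.

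The one subtlety I would be careful about — and I expect this to be the only real obstacle — is making sure the ``one new node per $T$ rounds'' accounting is not broken by the counter resets or by blocks that straddle the moment dispersion is achieved. The counter reset is harmless for the time bound: resetting $a_i.t$ to $0$ only ever delays termination, and delays can only happen while a multinode still exists, i.e. during the at most $(k-1)$ progress blocks already counted. For the straddling issue, I would phrase the argument so that I only claim: starting from any round in which a multinode exists, within $T$ rounds the occupancy strictly increases; summing these disjoint $T$-round windows gives at most $(k - p_0)T$ rounds until the first dispersed configuration, regardless of block alignment. Combining the $O(kT)$ rounds to reach dispersion with the $T$ rounds to confirm it yields the claimed $O(k \cdot T)$ running time, completing the proof.
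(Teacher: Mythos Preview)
Your proposal is correct and follows essentially the same argument as the paper: invoke the progress guarantee from Lemma~\ref{lm:correctness} that at least one hole is filled every $T$ rounds, conclude that dispersion is reached within $O(k\cdot T)$ rounds, and then add the final $T$ rounds for termination via Lemma~\ref{lm:termination}. Your additional care about counter resets and block alignment is sound but goes beyond what the paper's own proof spells out.
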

\begin{proof}
    Due to the proof of Lemma \ref{lm:correctness}, if there is a multinode $v_i$ and there are holes in $\mathcal{G}_{r}$, then within $r+T$ rounds, the number of holes is decreased by at least 1. Since $k$ agents are present in $G$ initially, and in every $T$ round, at least one hole gets filled, agents may need $k\cdot T$ rounds to achieve dispersion. After this, agents may need further $T$ rounds for termination due to the proof of Lemma \ref{lm:termination}. Hence, the run time of Algorithm \ref{algorithm:T-Path-disp} is $O(k \cdot T)$.   
\end{proof}

\begin{lemma}\label{lm:memory}
    Agents require $O(\log \max(T, \, k))$ memory\footnote{Memory requirement can be improved when the algorithm does not require $T$ and work with either $k$ or $n$.} to run Algorithm \ref{algorithm:T-Path-disp}.
\end{lemma}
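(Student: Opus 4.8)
The plan is to account for every quantity that Algorithm~\ref{algorithm:T-Path-disp} forces an agent to store and show that each fits in $O(\log \max(T,k))$ bits. The agent maintains exactly three persistent variables: $a_i.ID$, $a_i.count$, and $a_i.t$. First I would bound each of these. The identifier $a_i.ID$ lies in $[1,k]$, so it needs $O(\log k)$ bits. The counter $a_i.count$ records the number of agents co-located at the current node; since there are $k$ agents in total, $a_i.count \le k$, so $O(\log k)$ bits suffice. The round counter $a_i.t$ is reset to $0$ whenever any multinode information is heard and is incremented otherwise; by the termination condition it never exceeds $T$ (the agent terminates as soon as $a_i.t = T$), so $a_i.t \le T$ and needs $O(\log T)$ bits. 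Taking the maximum gives $O(\log \max(T,k))$.

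Next I would argue that the transient, per-round bookkeeping does not blow up the bound. In a given round an agent broadcasts and receives $1$-hop neighbourhood information and, when it is the minimum-ID agent among the multinodes in its component, runs $\mathcal{DISP}$ as a subroutine to compute a sliding path. By Theorem~\ref{thm:main_result} the algorithm $\mathcal{DISP}$ uses only $\Theta(\log k)$ bits per agent, so invoking it as a subroutine costs $O(\log k)$ additional memory, which is absorbed into $O(\log \max(T,k))$. The $1$-hop neighbourhood description (degree of the current node, port numbers, IDs and counts of neighbouring agents) and the identity of the minimum-ID agent $a_j$ whose sliding path the agent follows are all quantities bounded by $k$ (for agent-related data) or by the local degree, and the degree is at most $n$; however, since $k \le n$ and this information is only needed within the current round — it need not be stored across rounds — I would note it is processed transiently and does not contribute to the persistent memory footprint. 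If one wishes to be conservative and count it, port numbers are bounded by $\deg_r(v) \le n-1$, but one can observe that only the relevant port (the one toward the hole along the sliding path, if any) needs to be retained momentarily, again $O(\log n)$, and in the regimes where the bound is tight this is dominated.

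I would then assemble these observations into the statement: persistent memory is $O(\log \max(T,k))$ from $a_i.ID$, $a_i.count$, $a_i.t$; the subroutine $\mathcal{DISP}$ adds $O(\log k)$; transient per-round data is not charged to persistent memory. Hence Algorithm~\ref{algorithm:T-Path-disp} runs with $O(\log \max(T,k))$ bits per agent, completing the proof. The footnote already signals the refinement: dropping the need for $T$ (e.g.\ when only implicit dispersion is required, as in Remark~2) removes the $\log T$ term and yields $O(\log k)$.

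The main obstacle, I expect, is being careful about what counts as "memory": one must cleanly separate the $O(\log T)$ contribution of the new round-counter $a_i.t$ (which is the only genuinely new cost over $\mathcal{DISP}$) from the $O(\log k)$ machinery inherited from $\mathcal{DISP}$, and justify that reading/relaying $1$-hop neighbourhood and hole information — which could naively look like it needs $\Theta(\deg_r(v)\log \deg_r(v))$ space — is handled within a round without persistent storage, or is otherwise subsumed. Making the $\max(T,k)$ (rather than $\max(T,n)$) bound precise is the delicate point, and it rests on the fact that all stored agent-indexed data is bounded by the number of agents $k$, not the number of nodes $n$.
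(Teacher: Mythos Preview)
Your proposal is correct and follows essentially the same approach as the paper: itemize the persistent variables ($a_i.ID$, $a_i.count$, $a_i.t$), bound the first two by $O(\log k)$ and the counter by $O(\log T)$, invoke Theorem~\ref{thm:main_result} to charge $\mathcal{DISP}$ at $O(\log k)$, and dismiss per-round 1-hop and broadcast information as temporary computation that is not stored across rounds. The paper's proof is terser on the last point (it simply states ``the computation within a round happens in temporary memory''), whereas you hedge about port numbers possibly costing $O(\log n)$; you can drop that hedge and match the paper's convention directly.
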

\begin{proof}
    Due to Theorem \ref{thm:main_result}, $\Theta(\log k)$ memory per agent is required to solve dispersion in 1-Interval Connected graphs. Since 1-Interval Connectivity is stronger than $T$-Path connectivity, therefore $\Omega(\log k)$ memory per agent is a must to solve dispersion in $T$-Path Connected graphs. The computation within a round happens in temporary memory. In our algorithm, agents do not remember 1-hop information and the information of global communication, and $\mathcal{DISP}$ requires $O(\log k)$ memory per agent. In our algorithm, an agent remembers count $a_i.t$ that can go up to $T$. Therefore, an agent needs $O(\log T)$ memory to store $a_i.t$. Hence, agents require $O(\log \max(k, \,T))$ memory to run Algorithm \ref{algorithm:T-Path-disp}. 
\end{proof}
Using Theorem \ref{thm:lower_bound_T} and Lemma \ref{lm:correctness}, \ref{lm:termination}, \ref{lm:time}, \ref{lm:memory}, we have the following theorem. 
\begin{theorem}
    Our algorithm solves explicit dispersion for $k \leq n$ agents in $\Theta(k\cdot T)$ rounds using $O(\log \max(k, \, T))$ bits of memory per agent in the synchronous setting with global communication, 1-hop visibility and knowledge of $T$.
\end{theorem}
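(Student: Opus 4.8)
The plan is to assemble the final theorem from the five ingredients already proved: the lower bound (Theorem~\ref{thm:lower_bound_T}) supplies the $\Omega(k\cdot T)$ side, Lemma~\ref{lm:correctness} shows Algorithm~\ref{algorithm:T-Path-disp} reaches a dispersed configuration, Lemma~\ref{lm:termination} shows every agent eventually terminates, Lemma~\ref{lm:time} gives the $O(k\cdot T)$ upper bound on the running time, and Lemma~\ref{lm:memory} bounds the per-agent memory by $O(\log\max(k,T))$. So the proof is essentially a bookkeeping combination: the matching upper and lower bounds pin the time complexity at $\Theta(k\cdot T)$, and the memory claim and the model assumptions (synchronous rounds, global communication, $1$-hop visibility, knowledge of $T$) are inherited verbatim from the algorithm's specification.

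First I would recall that Algorithm~\ref{algorithm:T-Path-disp} is run by agents equipped with exactly the stated resources, so the ``sufficiency'' direction is immediate once correctness and the complexity bounds are in hand. I would then cite Lemma~\ref{lm:correctness} to conclude that a configuration with at most one agent per node is reached, and Lemma~\ref{lm:termination} (which itself rests on Lemma~\ref{lm:termination_cond}, guaranteeing no premature termination) to conclude that this is \emph{explicit} dispersion --- every agent halts, and only after the dispersed configuration is globally achieved. Next, Lemma~\ref{lm:time} gives the $O(k\cdot T)$ round bound; combining this with the $\Omega(k\cdot T)$ bound of Theorem~\ref{thm:lower_bound_T}, which holds even against agents with infinite memory and full visibility and hence \emph{a fortiori} against our weaker agents, yields the tight $\Theta(k\cdot T)$. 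Finally Lemma~\ref{lm:memory} contributes the $O(\log\max(k,T))$ memory figure.

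One subtlety worth flagging is the interface between the lower bound of Theorem~\ref{thm:lower_bound_T} and the upper bound here: the lower bound is stated ``when the dynamic diameter of the tree is $\hat D = O(1)$,'' so strictly speaking the $\Theta$ claim means the worst-case running time over $T$-Path Connected graphs is $\Theta(k\cdot T)$ --- the upper bound holds for all such graphs and the lower bound is witnessed by a particular (bounded-diameter) family. I would phrase the conclusion accordingly rather than claiming instance-optimality. The main (minor) obstacle is therefore not a mathematical gap but making sure the logical dependencies are cited in the right order --- in particular that Lemma~\ref{lm:termination} depends on Lemma~\ref{lm:correctness} and Lemma~\ref{lm:termination_cond}, and that the $\Omega$ bound's extra assumptions (stronger agents) only strengthen the matching-bounds argument rather than weaken it. With those dependencies in place the theorem follows by direct combination, so the proof is short:

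\begin{proof}
By Lemma~\ref{lm:correctness}, Algorithm~\ref{algorithm:T-Path-disp} brings the $k\leq n$ agents to a configuration with at most one agent per node, and by Lemma~\ref{lm:termination_cond} no agent terminates before this configuration is reached; by Lemma~\ref{lm:termination} every agent then terminates. Hence the algorithm solves explicit dispersion using global communication, $1$-hop visibility, and knowledge of $T$. By Lemma~\ref{lm:time} it does so in $O(k\cdot T)$ rounds, and by Theorem~\ref{thm:lower_bound_T} any algorithm for implicit (hence also explicit) dispersion on $T$-Path Connected graphs requires $\Omega(k\cdot T)$ rounds, even for agents with infinite memory and full visibility; therefore the running time is $\Theta(k\cdot T)$. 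Finally, by Lemma~\ref{lm:memory} the algorithm uses $O(\log\max(k,T))$ bits of memory per agent. This completes the proof.
\end{proof}
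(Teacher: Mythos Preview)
Your proposal is correct and matches the paper's own proof, which is simply the one-line observation that the theorem follows by combining Theorem~\ref{thm:lower_bound_T} with Lemmas~\ref{lm:correctness}, \ref{lm:termination}, \ref{lm:time}, and \ref{lm:memory}. Your additional remarks about the lower bound being witnessed by a particular family and about the dependency on Lemma~\ref{lm:termination_cond} are accurate clarifications but go slightly beyond what the paper itself spells out.
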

Based on Algorithm \ref{algorithm:T-Path-disp}, we have two remarks as follows.
\begin{remark}
     To solve implicit dispersion, agents do not need the information $T$. We can modify Algorithm \ref{algorithm:T-Path-disp}. In Algorithm \ref{algorithm:T-Path-disp}, agents use parameter $T$ to achieve termination. To solve implicit dispersion, agents do not need parameter $T$. In other terms, whenever agents get information of hole and multinode in the same round, agents try to fill that hole using algorithm $\mathcal{DISP}$.
\end{remark}

\begin{remark}
    In our model, the ID range of agents is between $[1,k]$. If the ID range is extended to $[1,n^c]$, where $c$ is a constant, to remember its ID, the agent requires $\Omega(\log n)$ memory. In each round of Algorithm \ref{algorithm:T-Path-disp}, each agent $a_i$ remembers its ID and $a_i.t$. Therefore, Algorithm \ref{algorithm:T-Path-disp} solves explicit dispersion for $k \leq n$ agents in $\Theta(k\cdot T)$ rounds using $O(\log \max(n, \, T))$ bits of memory per agent in the synchronous setting with global communication, 1-hop visibility and knowledge of $T$.
\end{remark}
 \section{Impossibility and Lower Bounds on Exploration}
In this section, we study the exploration problem of all three connectivity models. In other words, we try to understand how many agents are necessary to solve exploration in all three models, and we also study the time required by the agents to solve the exploration problem in 1-Interval Connected graphs and $T$-Path Connected graphs.

\begin{figure}[h!]
    \centering
    \includegraphics[width=0.5\linewidth]{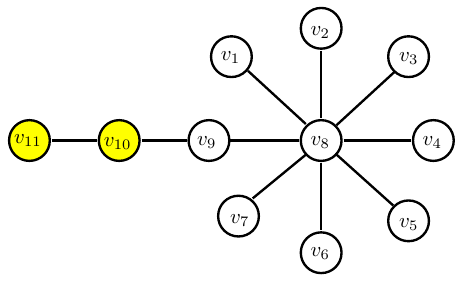}
      \caption{This figure is for $n=11$, and yellow shaded nodes are holes.}
    \label{fig:EXP_IMP}
\end{figure}



        
        
        
    

\begin{theorem}\label{thm:imp_Exp_1-Interval}
     A set of $k \leq n-2$ agents can't solve the exploration problem in the dynamic graphs, which hold the 1-Interval Connectivity property. This impossibility holds even if agents have infinite memory, full visibility, global communication, and know the parameters $k$, $n$.
\end{theorem}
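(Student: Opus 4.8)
The plan is an adversarial (online) argument: for an arbitrary deterministic algorithm, the adversary builds a $1$-Interval Connected dynamic graph together with an initial placement of the $k\le n-2$ agents on which one fixed node is never visited, so the algorithm can never correctly terminate the exploration. The only feature of the model I will invoke is that an agent traverses at most one edge per round; unbounded memory, full visibility, global communication and knowledge of $k$ and $n$ play no part, which is exactly why the impossibility survives all of them. Since the adversary is deterministic-algorithm-aware, it may even simulate the agents, but in fact it will not need to react adaptively at all.

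First I fix the configuration: because $k\le n-2$, in any initial placement at least two nodes are holes; designate one of them, $f$, as the \emph{hidden} node, so $f$ is unvisited at round $0$. The invariant to carry through the induction is: at the start of every round $r$, the node $f$ is unoccupied and has never been visited. For the maintenance step, assume the invariant at the start of round $r$. Since the $\le n-2$ agents occupy at most $n-2$ nodes, there is some currently unoccupied node $g\ne f$ (it is irrelevant whether $g$ was visited earlier). The adversary now sets $\mathcal{G}_r$ to be: the single edge $f\!-\!g$, together with any connected graph on the remaining $n-1$ nodes $V\setminus\{f\}$ (for instance a star or a path; this is presumably the shape drawn in the referenced $n=11$ figure, with $f$ as a pendant leaf). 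Then $\mathcal{G}_r$ is connected, so this round, and hence the whole dynamic graph, satisfies $1$-Interval Connectivity. The only neighbour of $f$ in $\mathcal{G}_r$ is $g$, and $g$ carries no agent at the start of round $r$, so no agent is in position to cross an edge into $f$; as each agent moves at most one hop, no agent reaches $f$ during round $r$. Thus $f$ is still unoccupied and still unvisited at the end of round $r$, re-establishing the invariant for round $r+1$.

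By induction $f$ is never visited. Consequently, if the algorithm halts at some round $t$, the node $f$ has not been visited, so exploration fails; and if the algorithm never halts, it does not solve exploration with termination either. Hence no algorithm solves exploration for $k\le n-2$ agents, even with infinite memory, full visibility, global communication and knowledge of $k,n$. The one point that genuinely needs checking — and the only place the hypothesis $k\le n-2$ is used beyond the initial step — is that the adversary never runs out of a ``wall'' node $g$: at the start of each round the occupied nodes number at most $n-2$, so besides $f$ there is always at least one more unoccupied node to play the role of $g$. This is precisely the step that would break at $k=n-1$ (one hole), matching the positive result that $n-1$ agents suffice; so I do not expect any deeper obstacle, only the need to state the per-round gadget and the invariant cleanly.
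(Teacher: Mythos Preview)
Your proof is correct and follows essentially the same approach as the paper: keep a fixed hidden node $f$ (the paper's $v_n$) perpetually unvisited by ensuring its unique neighbour in each round is an unoccupied ``wall'' node $g$, which exists precisely because $k\le n-2$. The paper makes the same construction, only committing to a specific shape (a star on $V\setminus\{g,f\}$, the edge $g$--$f$, and one connecting edge), whereas you allow any connected graph on $V\setminus\{f\}$; this is a cosmetic difference and your invariant-based presentation is arguably cleaner.
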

\begin{proof}
    Let $V=\{v_1$, $v_2$, \ldots, $v_n\}$ be the set of nodes in $G$. Since $k \leq n-2$, at least two nodes are holes in the initial configuration. Without loss of generality, let $v_{n-1}$, $v_{n}$ be hole in $G$. We give a strategy for the adversary so that no agent can visit node $v_n$ in any round $r\geq0$. 

    At the beginning of the round $r=0$, the adversary forms a star graph of nodes $v_1$, \ldots, $v_{n-2}$, and forms a path of length 1 with $v_{n-1}$ and $v_n$. The adversary connects the node $v_{n-1}$ and $v_{n-2}$ by an edge. For $n=11$, we can see the configuration in Fig. \ref{fig:EXP_IMP}. In round $r=0$, no matter how agents move, the agents can not visit node $v_n$ due to the fact that node $v_n$ is at least 2-hop away from agents. Therefore, at the end of round $r=0$, node $v_n$ remains unvisited. At the beginning of round $r\geq1$, since at most $n-2$ agents are present, node $v_n$ and some node $v\in \{v_1,\, v_2,\, \ldots,\, v_{n-1}\}$ are holes in $G$. In this case, the adversary forms a star graph $\mathcal{S}_r$ of nodes $V-\{v,\,v_n\}$ and forms a 1-length path of nodes $v$ and $v_n$. At the beginning of round $r$, the adversary adds one edge between any node of $S_r$ and node $v$. In round $r\geq 1$, no matter how agents move, the agents can not visit node $v_n$ at the end of round $r$ due to the fact that node $v_n$ is at least 2-hop away from agents. Since, in every round $r\geq 0$, the graph is always connected, it satisfies the 1-Interval Connectivity property. Therefore, a set of $k \leq n-2$ agents can't solve the exploration problem in the dynamic graphs, which hold the 1-Interval Connectivity property. No additional advantages, such as infinite memory, full visibility, global communication, and knowledge of the parameters $k$ and $n$, help the agents to solve exploration. This completes the proof. 
\end{proof}
Based on Theorem \ref{thm:imp_Exp_1-Interval}, we have the following observation and remark.
\begin{observation}
    To solve the exploration problem in 1-Interval Connected graphs, a team of at least n-1 agents is necessary.
\end{observation}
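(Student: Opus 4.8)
The plan is to obtain this statement as the immediate contrapositive of Theorem~\ref{thm:imp_Exp_1-Interval}. That theorem already establishes that no set of $k \le n-2$ agents can solve the exploration problem on a $1$-Interval Connected dynamic graph of $n$ nodes, and moreover that this impossibility is robust to giving the agents infinite memory, full visibility, global communication, and knowledge of $k$ and $n$. Consequently, for exploration to be solvable at all, every team size $k$ with $k \le n-2$ must be excluded; equivalently, any algorithm that does solve exploration must be run with $k \ge n-1$ agents. Together with the standing model constraint $k \le n$, this pins the admissible team sizes to $k \in \{n-1,\,n\}$, which is exactly the assertion that at least $n-1$ agents are necessary.

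First I would make precise the meaning of \emph{necessary} in this context: there is no deterministic algorithm that, for \emph{every} $1$-Interval Connected dynamic graph on $n$ nodes and \emph{every} initial placement of the $k$ agents, achieves exploration with termination whenever $k < n-1$. With this phrasing fixed, the argument is a one-line appeal: assume for contradiction some algorithm $\mathcal{A}$ solves exploration with $k \le n-2$ agents, and apply Theorem~\ref{thm:imp_Exp_1-Interval} to this $k$. The theorem supplies an adversarial edge schedule under which one designated node (the node $v_n$ of that proof) is kept at distance at least two from every agent in every round, so it is never visited; this contradicts the assumed correctness of $\mathcal{A}$.

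The only point requiring a brief check is that Theorem~\ref{thm:imp_Exp_1-Interval} genuinely covers the whole range $k \le n-2$, including small values of $k$; but its adversarial construction depends only on the presence of at least two holes in the initial configuration, which holds precisely when $k \le n-2$, so the impossibility is uniform in $k$. I do not expect any real obstacle here: the observation is purely a bookkeeping consequence, and the substantive content — designing the adversary that perpetually hides node $v_n$ — has already been carried out in the proof of Theorem~\ref{thm:imp_Exp_1-Interval}.
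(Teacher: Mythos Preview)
Your proposal is correct and matches the paper's own treatment: the observation is stated immediately after Theorem~\ref{thm:imp_Exp_1-Interval} as a direct consequence, with no additional argument beyond the contrapositive you spell out. Your added remarks about the range of $k$ and the meaning of ``necessary'' are sound but more detailed than what the paper provides.
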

\begin{remark}\label{rk:T-Path Connected}
    Since $1$-Interval Connectivity is stronger than $T$-Path Connectivity and Connectivity Time model, a set of $k \leq n-2$ agents can't solve the exploration problem in the dynamic graphs, which hold the $T$-Path Connectivity property and Connectivity Time Property. This impossibility holds even if agents have infinite memory, full visibility, global communication, and know the parameters $k$, $n$.
\end{remark}

Now, we present two theorems which show that agents need 1-hop visibility and global communication to solve the exploration with $n-1$ agents.

\begin{theorem}\label{thm:exp-1-hop}
    (For $n\geq 7$) It is impossible to solve the exploration with $n-1$ mobile agents on a 1-Interval Connected dynamic graph when the agents have 1-hop visibility and unlimited memory but without global communication, unless they start in a dispersed configuration.
\end{theorem}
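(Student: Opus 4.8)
The plan is an adversary argument by contradiction. Suppose a deterministic algorithm $\mathcal{A}$ solves exploration with $k=n-1$ agents that have $1$-hop visibility, unbounded memory and face-to-face (but \emph{not} global) communication, from every initial configuration that is not dispersed. I would exhibit one non-dispersed initial configuration together with an adversary that maintains $1$-Interval Connectivity and defeats $\mathcal{A}$. A non-dispersed configuration contains a multinode, hence (as $k=n-1$) at least two holes; the adversary starts from a long path with a multinode $v^\star$ near one end and designates one hole $z$ near the other end as the node it will keep unvisited forever. The first step is to reduce the theorem to maintaining, for every round $r$, the invariant: $\mathcal{G}_r$ is connected; $z$ is a hole whose unique neighbour is again a hole (so $z$ is at distance at least $2$ from every agent and cannot be visited in round $r$); and the configuration is still not dispersed. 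If this invariant is kept forever, $z$ is never visited and $\mathcal{A}$ fails --- the contradiction. The connectivity and ``$z$ at distance $\ge 2$'' parts are easy to re-establish each round whenever two holes are available: route $z$ as a pendant attached to a second hole $h$ and lay the remaining $n-1$ vertices out as a path through $h$. So the real content is keeping the configuration non-dispersed.

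This is where the absence of global communication enters. The hole count drops only when some agent enters a hole without vacating another, which forces a simultaneous ``sliding chain'' along a path whose far endpoint is a multinode (the only possible source of a spare agent); filling $z$'s gateway $h$ in particular requires such a chain running into $h$. I would argue that the agents can never reliably set up such a chain, because the adversary presents the interior agents of the path it builds with information that does not reveal which end carries $v^\star$ and which end carries the gateway $h$: $1$-hop visibility shows only presence/absence --- not the number --- of agents at a neighbour, and face-to-face communication merely lets colocated agents pool data that is itself a function of such orientation-ambiguous local views, so no group of agents that the adversary keeps apart ever simultaneously knows ``there is a multinode here'' and ``there is a hole there''. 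Concretely, the adversary keeps $\mathcal{G}_r$ a path on all $n$ vertices, with $z$ and its gateway near one end and $v^\star$ near the other, and keeps every agent's entire view-history simultaneously consistent with that assignment \emph{and} with its left--right mirror; whenever $\mathcal{A}$ commits the interior agents to a slide (and a direction), the adversary instantiates whichever mirror reality makes that move fail to fill the gateway without opening a fresh hole, and then relabels the path for the next round.

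The hypothesis $n\ge 7$ is what gives the path enough room: it must host $z$, the gateway hole, a genuinely ambiguous interior block, the multinode, and a couple of spare vertices on each side, so that after any single round the adversary can still relabel and keep both mirror scenarios alive. I would make this precise by induction on the round number, tracking (i) the set of holes, (ii) the position of $v^\star$ and of its spare agent (should the multinode ever be ``spent''), and (iii) the view-history equivalence class of each agent, and showing the adversary always has a legal next graph respecting the invariant and the two-scenario ambiguity regardless of where the $\le n-1$ agents move.

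I expect the main obstacle to be exactly this inductive step: ruling out every way the agents could break the orientation ambiguity or corner the adversary into exposing $z$ --- in particular, rounds where many agents flood the gateway region at once, configurations where agents deliberately pile up so that few occupied nodes remain, or the spare agent of a spent multinode drifting toward $z$ --- and proving that in all of them the adversary retains a consistent mirror pair. The delicate conceptual point is that face-to-face communication together with distinct IDs does not let a colocated group simulate global communication: each agent's knowledge at any round is a function only of the local views witnessed by itself and by the agents it has ever met, all of which the adversary has kept orientation-symmetric.
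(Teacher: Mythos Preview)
Your high-level reduction is right and matches the paper: keep the configuration non-dispersed forever, and then with $n-1$ agents there are always at least two holes, so the adversary can dangle the target node $z$ behind another hole and keep it at distance $\ge 2$ from every agent. The gap is in the mechanism you propose for preventing dispersion.

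The global ``left--right mirror'' idea does not work as stated. The configuration is inherently asymmetric: the agents sitting at the multinode know they are at a multinode (they see each other face-to-face), and the agent adjacent to the gateway hole sees a hole neighbour. So there is no pair of globally consistent mirror scenarios in which \emph{every} agent's view-history is identical; the endpoints always break the symmetry. Your plan to track view-history equivalence classes across many rounds and keep two scenarios alive is more than is needed and, as you yourself flag, is exactly the step you have not carried out.

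The paper sidesteps all of this with a single-round local trick. At each round the adversary re-sorts the vertices by agent count into a path $\mathcal{P}_r=w_1\sim\cdots\sim w_n$ (most agents at $w_1$, the two holes at $w_{n-1},w_n$). If $n(w_1)\ge 3$, the degree-one endpoint $w_1$ already traps $\ge 3$ agents in $\{w_1,w_2\}$ regardless of moves, so dispersion fails this round. Otherwise $n(w_1)=2$ and $n(w_i)=1$ for $2\le i\le n-2$, and the only one-step dispersion is the full slide $w_1\to w_2\to\cdots\to w_{n-1}$; in particular the agent at $w_3$ must exit through port $1$. Now the adversary needs to fool \emph{only that one agent}: it presents instead $\mathcal{P}_r'=w_1\sim w_4\sim w_3\sim w_2\sim w_5\sim\cdots\sim w_n$, with the ports at $w_3$ chosen so that its $1$-hop view (two occupied neighbours, same port labels) is literally identical to what it would have seen in $\mathcal{P}_r$. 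With no global communication the agent at $w_3$ cannot distinguish the two graphs and still exits through port $1$ --- but that port now leads to $w_4$, which is adjacent to the degree-one multinode $w_1$. The agents originating at $w_1$ together with the one arriving from $w_3$ are then confined to $\{w_1,w_4\}$, forcing a multinode to survive. Having $w_5$ separate from the hole block is why $n\ge 7$ is used.

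So the missing idea is: you do not need a global symmetry or a multi-round indistinguishability invariant; it suffices, each round afresh, to find \emph{one} interior agent whose $1$-hop view is preserved by a local rewiring that redirects its slide back into the multinode. The per-round re-sorting of the path is what collapses the case analysis to just ``$n(w_1)\ge 3$'' versus ``$n(w_1)=2$''.
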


\begin{figure}
    \centering
    \includegraphics[width=1\linewidth]{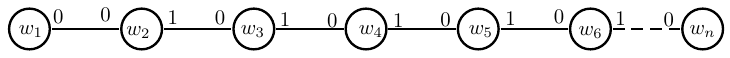}
    \caption{$\mathcal{P}_r$, the adversary supposed to give $\mathcal{P}_r$ at round $r$.}
    \label{fig:exp-1-hop}
\end{figure}

\begin{figure}
    \centering
    \includegraphics[width=1\linewidth]{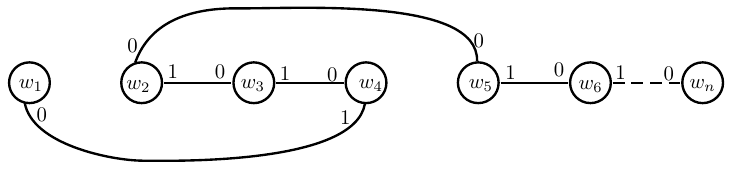}
    \caption{$\mathcal{P}_r'$, the adversary gives $\mathcal{P}_r'$ at round $r$.}
    \label{fig:exp-1-hop-1}
\end{figure}
\begin{proof}
    Since the initial configuration is not dispersed configuration, there are at least two nodes, which are holes, and there is at least one node, which is multinode. Let $V=\{v_1$, $v_2$, \ldots, $v_n\}$ be the set of nodes in $G$, and $v_{n-1}$, $v_n$ be holes, and $v_1$ is a multinode. Consider $n(v)$, which denotes the number of agents at node $v$. At the beginning of round $r\geq 0$, the adversary forms path $\mathcal{P}_r=w_1 \sim w_2\sim \ldots\sim w_n$ of length $n$ based on the following.

    \begin{itemize}
        \item Consider two nodes, \( u \) and \( v \). Let \( a_u \) and \( a_v \) represent the agents with the smallest IDs at nodes \( u \) and \( v \), respectively. Without loss of generality, assume that the ID of agent \( a_u \) is smaller than the ID of agent \( a_v \). If the number of agents at node \( u \) (denoted \( n(u) \)) is greater than the number of agents at node \( v \) (denoted \( n(v) \)), and if \( u \) is represented as \( w_i \) for some \( i \) in the range \( [1, n-1] \), then \( v \) can be represented as \( w_j \) for some \( j > i \). Alternatively, if \( n(u) = n(v) \) and \( u = w_i \) for some \( i \) in the range \( [1, n-1] \), then \( v \) can also be represented as \( w_j \) for some \( j > i \). 
        \item For $i\in [2, n-1]$, node $w_i$ is connected to node $w_{i-1}$ via port 0, and connected to node $w_{i+1}$ via port 1. Node $w_1$ is connected to node $w_2$ via port 0, and node $w_n$ is connected to node $w_{n-1}$ via port 0.
    \end{itemize}
    The adversary provides this graph $\mathcal{P}_r$ in each and every round $r$ unless otherwise mentioned during this proof.

    We show that if there does not exist any round $r\geq 0$ such that the agents are in dispersed configuration by the start of that round, then the exploration problem is impossible to solve by the end of round $r$. In round $r=0$, since node $w_n(=v_n)$ is at least two hops away from agents in $\mathcal{P}_0$, node $w_n(=v_n)$ is unexplored by agents at the end of round $r=0$. Let the agents not be in dispersed configuration till the start of round $r$. Therefore, at the end of round $r$, the way the adversary maintains the path graph $\mathcal{P}_r$ at round $r$, node $w_n(=v_n)$ is at least two hops away from a node with agents in $\mathcal{P}_r$. Therefore, no matter how agents move in $\mathcal{P}_r$, the node $v_n$ is unexplored at the end of round $r$. Therefore, it is sufficient to show that in each round $r>0$, the adversary can restrict agents in $\mathcal{P}_r$ from achieving dispersion such that at the start of round $r+1$, the configuration remains undispersed.

    Consider there is an algorithm $\mathcal{A}$ which solves exploration. If this algorithm does not solve dispersion in some round $r>0$, then the exploration is impossible to solve in round $r+1$ due to the aforementioned reason. Since the adversary is aware of graph $\mathcal{P}_r$ (refer Fig. \ref{fig:exp-1-hop}) and algorithm $\mathcal{A}$, therefore it can pre-compute the outcome of algorithm $\mathcal{A}$ at round $r$. If the pre-computation shows that the agents archive dispersion using algorithm $\mathcal{A}$, then at the beginning of round $r$, it gives the configuration of path $\mathcal{P}_r'$ in place of path $\mathcal{P}_r$. In path $\mathcal{P}_r$ and $\mathcal{P}_r'$, node $w_i$ are the same node. The configuration of the path $\mathcal{P}_r'$ is as follows. 
    
    \begin{itemize}
        \item Node $w_1$ is connected to node $w_4$ via port 0.
        \item Node $w_4$ is connected to node $w_1$ via port 1, and node $w_4$ is connected to node $w_3$ via port 0.
        \item Node $w_2$ is connected to node $w_3$ via port 1, and node $w_2$ is connected to node $w_5$ via port 0.
        \item Node $w_3$ is connected to node $w_2$ via port 0, and node $w_3$ is connected to node $w_4$ via port 1.
        \item Node $w_5$ is connected to node $w_2$ via port 0, and node $w_5$ is connected to node $w_6$ via port 1.
        \item For $i\in [6, n-1]$, node $w_i$ is connected to node $w_{i-1}$ via port 0, and connected to node $w_{i+1}$ via port 1. Node $w_1$ is connected to node $w_2$ via port 0, and node $w_n$ is connected to node $w_{n-1}$ via port 0.
    \end{itemize}
    Therefore, $\mathcal{P}_r'=w_1 \sim w_4\sim w_3\sim w_2\sim\ldots\sim w_{n-1}\sim w_n$ (refer Fig. \ref{fig:exp-1-hop-1}). Note that if the dispersed configuration is achieved at the end of round $r$, then $n(w_1)=2$ and $n(w_i)=1$ for $i\in [2,\,n-2]$. If not, then $n(w_1)\geq3$. In this case, no matter how agents move in $\mathcal{P}_r$, there is a multinode at node $w_1$ or $w_2$ at the end of round $r$. Therefore, there is only one possibility $n(w_1)=2$ and $n(w_i)=1$ for $i\in [2,\,n-2]$ at round $r$. The movement of agent $a_i$ at node $w_3$ as per algorithm $\mathcal{A}$ is the same in $\mathcal{P}_r$ and $\mathcal{P}_r'$ due to the fact the movement of $a_i$ at node $w_3$ depends only on the 1-hop view and its memory. Since, there is no global communication, agent $a_i$ can not understand that it is in $\mathcal{P}_r$ or $\mathcal{P}_r'$. As per algorithm $\mathcal{A}$, the movement of agents at node $w_1$, $w_2$, $w_4$ and $w_5$ can be changed. Since $n\geq 7$, such a configuration is feasible. In this case, the agent at node $w_3$ moves towards node $w_4$ in $\mathcal{P}_r'$. No matter how agents move from node $w_1$, $w_4$, there is a multinode at either node $w_4$ or $w_1$. Therefore, the dispersed configuration is not achieved at the end of round $r$. Since the dispersed configuration is not achieved in round $r> 0$, the node $v_n$ is at least two hops away from agents at the beginning of round $r+1$. This completes the proof.
    \end{proof}

 Below, we provide the other impossibility result considering that 1-hop visibility is not there, but agents are equipped with global communication. In this case, agents can not understand whether the 1-hop view is changed at the beginning of the round. Hence, the agent's decision does not depend on the view.
\begin{theorem}\label{thm:exp-global}
    (For $n\geq 7$) It is impossible to solve the exploration of $n-1$ mobile agents on a 1-Interval Connected dynamic graph when the agents are equipped with global communication and unlimited memory but without 1-hop visibility unless they start in a dispersed configuration.
\end{theorem}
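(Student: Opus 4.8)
The plan is to reuse the adversary of Theorem~\ref{thm:exp-1-hop} and only swap out the indistinguishability argument. Write $V=\{v_1,\dots,v_n\}$; since the start configuration is not dispersed and $k=n-1$, there are at least two holes, and we fix $v_n$ to be one of them. At the beginning of every round $r$ the adversary sorts the nodes by their current agent count (descending) into a path $\mathcal{P}_r=w_1\sim w_2\sim\cdots\sim w_n$ with the linear port labelling ($w_i$'s port $0$ toward $w_{i-1}$, port $1$ toward $w_{i+1}$), arranged so that every hole sits among $w_{n-\ell+1},\dots,w_n$ (where $\ell\ge 2$ is the number of holes) and in particular $v_n=w_n$. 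The first, purely geometric, observation: while the configuration is undispersed (equivalently $\ell\ge 2$), $w_{n-1}$ is a hole, so the unique neighbour of $w_n=v_n$ carries no agent and no agent can enter $v_n$ in round $r$; hence it suffices to show the adversary can keep the configuration undispersed forever.

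I would then record a counting fact about paths: the only undispersed configuration from which dispersion can be attained in one round is, up to the adversary's relabelling, $2,1,1,\dots,1,0,0$, and from it the unique dispersing move is the forward chain slide --- one agent of $w_1$ moves via port $0$, and the lone agent of $w_i$ moves via port $1$ for each $2\le i\le n-2$. (Two multinodes, or a node with $\ge 3$ agents, cannot raise the occupied-node count enough on a path in one round; a forward-flow argument then pins the slide uniquely.) Now the adversary precomputes $\mathcal{A}$ on $\mathcal{P}_r$: if $\mathcal{A}$ fails to disperse, keep $\mathcal{P}_r$; if it disperses, the configuration is $2,1,\dots,1,0,0$ and $\mathcal{A}$ executes precisely the chain slide, and the adversary instead presents the permuted path $\mathcal{P}_r'$ from the proof of Theorem~\ref{thm:exp-1-hop} (i.e.\ $w_1\sim w_4\sim w_3\sim w_2\sim w_5\sim w_6\sim\cdots\sim w_n$ with the port labels given there), which is a legal $1$-Interval graph because $n\ge 7$ and still keeps its holes, including $v_n=w_n$, at the far end.

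The substantive new step is the indistinguishability of $\mathcal{P}_r$ and $\mathcal{P}_r'$ under global communication but no $1$-hop visibility. Without $1$-hop visibility an agent sees only the degree and the port labels of the node it currently occupies together with its co-located agents; global communication merely lets the agents of a connected component pool that data, which gives no more than the multiset of triples (degree, port-label set, agent count) over the occupied nodes plus who-is-with-whom, and says nothing about which port of an occupied node leads to which node. By construction $\mathcal{P}_r$ and $\mathcal{P}_r'$ agree on exactly this information (the occupied nodes $w_1,\dots,w_{n-2}$ have the same degrees and port labels in both, and the holes are at the far end in both), and the past history is the same in both scenarios since the swap only changes round $r$; hence every agent makes the identical move in $\mathcal{P}_r'$ that it would make in $\mathcal{P}_r$. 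Carrying the chain-slide moves over to $\mathcal{P}_r'$ gives: one agent $w_1\to w_4$, $w_2\to w_3$, $w_3\to w_4$, $w_4\to w_1$, $w_5\to w_6$, and the tail sliding forward normally; a one-line bookkeeping check shows $w_1$ and $w_4$ both finish with two agents, so the configuration at the start of round $r+1$ is again undispersed and $w_n=v_n$ is still empty. Plugging this back into the geometric observation and inducting on $r$ shows $v_n$ is never visited, so no algorithm solves exploration.

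I expect the two places that need care are (i) stating the pooled-information claim precisely enough that ``global communication exposes no adjacency'' is beyond doubt (one should argue it against an arbitrary algorithm whose messages are themselves arbitrary), and (ii) the small case analysis ruling out every near-dispersed path configuration other than $2,1,\dots,1,0,0$, which is what lets the single gadget $\mathcal{P}_r'$ handle all algorithms.
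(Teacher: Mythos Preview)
Your proposal is correct and follows essentially the same approach as the paper: the same sorted-path adversary $\mathcal{P}_r$, the same permuted path $\mathcal{P}_r'$ when dispersion is imminent, and the same reduction to the unique near-dispersed pattern $2,1,\dots,1,0,0$. The only substantive change from Theorem~\ref{thm:exp-1-hop} is exactly the one you identify: without $1$-hop visibility, \emph{every} agent (not just the one at $w_3$) is blind to the permutation, so all moves carry over to $\mathcal{P}_r'$ unchanged; your explicit bookkeeping of the resulting multinodes at $w_1$ and $w_4$ is in fact tighter than the paper's own wording, which retains a ``no matter how agents move from $w_1,w_4$'' clause left over from the $1$-hop case.
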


\begin{proof}
    Since the initial configuration is not dispersed configuration, there are at least two nodes, which are holes, and there is at least one node, which is multinode. Let $V=\{v_1$, $v_2$, \ldots, $v_n\}$ be the set of nodes in $G$, and $v_{n-1}$, $v_n$ be holes, and $v_1$ is a multinode. Consider $n(v)$, which denotes the number of agents at node $v$. At the beginning of round $r\geq 0$, the adversary forms path $\mathcal{P}_r=w_1 \sim w_2\sim \ldots\sim w_n$ of length $n$ based on the following. 

    \begin{itemize}
        \item Consider two nodes, \( u \) and \( v \). Let \( a_u \) and \( a_v \) represent the agents with the smallest IDs at nodes \( u \) and \( v \), respectively. Without loss of generality, assume that the ID of agent \( a_u \) is smaller than the ID of agent \( a_v \). If the number of agents at node \( u \) (denoted \( n(u) \)) is greater than the number of agents at node \( v \) (denoted \( n(v) \)), and if \( u \) is represented as \( w_i \) for some \( i \) in the range \( [1, n-1] \), then \( v \) can be represented as \( w_j \) for some \( j > i \). Alternatively, if \( n(u) = n(v) \) and \( u = w_i \) for some \( i \) in the range \( [1, n-1] \), then \( v \) can also be represented as \( w_j \) for some \( j > i \). 
        \item For $i\in [2, n-1]$, node $w_i$ is connected to node $w_{i-1}$ via port 0, and connected to node $w_{i+1}$ via port 1. Node $w_1$ is connected to node $w_2$ via port 0, and node $w_n$ is connected to node $w_{n-1}$ via port 0.
    \end{itemize}
    The adversary provides this graph $\mathcal{P}_r$ in each and every round $r$ unless otherwise mentioned during this proof.

    We show that if there is no round $r\geq 0$ such that the agents are in dispersed configuration by the start of that round, then the exploration problem is impossible to solve by the end of round $r$. In round $r=0$, since node $w_n(=v_n)$ is at least two hops away from agents in $\mathcal{P}_0$, node $w_n(=v_n)$ is unexplored by agents at the end of round $r=0$. Let the agents not be in dispersed configuration till the start of round $r$. Therefore, at the end of round $r$, the way the adversary maintains the path graph $\mathcal{P}_r$ at round $r$, node $v_n$ is at least two hops away from a node with agents in $\mathcal{P}_r$. Therefore, no matter how agents move in $\mathcal{P}_r$, the node $v_n$ is unexplored at the end of round $r$. Therefore, it is sufficient to show that in each round $r>0$, the adversary can restrict agents in $\mathcal{P}_r$ from achieving dispersion such that at the start of round $r+1$, the configuration remains undispersed. 
    
     Consider there is an algorithm $\mathcal{A}$ which solves exploration. If this algorithm does not solve dispersion in some round $r>0$, then the exploration is impossible to solve in round $r+1$ due to the aforementioned reason. Since the adversary is aware of graph $\mathcal{P}_r$ (refer Fig. \ref{fig:exp-1-hop}) and algorithm $\mathcal{A}$, therefore it can pre-compute the outcome of algorithm $\mathcal{A}$ at round $r$. If the pre-computation shows that the agents archive dispersion using algorithm $\mathcal{A}$, then at the beginning of round $r$, it gives the configuration of path $\mathcal{P}_r'$ in place of path $\mathcal{P}_r$. In path $\mathcal{P}_r$ and $\mathcal{P}_r'$, node $w_i$ are the same node. The configuration of the path $\mathcal{P}_r'$ is as follows.

    
    \begin{itemize}
        \item Node $w_1$ is connected to node $w_4$ via port 0.
        \item Node $w_4$ is connected to node $w_1$ via port 1, and node $w_4$ is connected to node $w_3$ via port 0.
        \item Node $w_2$ is connected to node $w_3$ via port 1, and node $w_2$ is connected to node $w_5$ via port 0.
        \item Node $w_3$ is connected to node $w_2$ via port 0, and node $w_3$ is connected to node $w_4$ via port 1.
        \item Node $w_5$ is connected to node $w_2$ via port 0, and node $w_5$ is connected to node $w_6$ via port 1.
        \item For $i\in [6, n-1]$, node $w_i$ is connected to node $w_{i-1}$ via port 0, and connected to node $w_{i+1}$ via port 1. Node $w_1$ is connected to node $w_2$ via port 0, and node $w_n$ is connected to node $w_{n-1}$ via port 0.
    \end{itemize}
    Therefore, $\mathcal{P'}_r=w_1 \sim w_4\sim w_3\sim w_2\sim\ldots\sim w_{n-1}\sim w_n$ (refer Fig. \ref{fig:exp-1-hop}). Note that if the dispersed configuration is achieved at the end of round $r$, then $n(w_1)=2$ and $n(w_i)=1$ for $i\in [2,\,n-2]$. If not, then $n(w_1)\geq3$. In this case, no matter how agents move in $\mathcal{P}_r$, there is a multinode at node $w_1$ or $w_2$ at the end of round $r$. Therefore, there is only one possibility $n(w_1)=2$ and $n(w_i)=1$ for $i\in [2,\,n-2]$ at round $r$. The movement of agent $a_i$ at node $w_i$ as per algorithm $\mathcal{A}$ is the same in $\mathcal{P}_r$ and $\mathcal{P}_r'$ due to the fact the movement of $a_i$ at node $w_i$ depends on the global communication and its memory. Since, there is no 1-hop visibility communication, agent $a_i$ can not distinguish whether it is in $\mathcal{P}_r$ or $\mathcal{P}_r'$ with the help of global communication and its memory. As per algorithm $\mathcal{A}$, the movement of agents at node $w_1$, $w_2$, $w_3$, $w_4$ $\&$ $w_5$ does not change. Since $n\geq 7$, such a configuration is feasible. In this case, the agent at node $w_3$ moves towards node $w_4$ in $\mathcal{P}_r'$. No matter how agents move from node $w_1$, $w_4$, there is a multinode at node $w_4$. Therefore, the dispersed configuration is not achieved at the end of round $r$. Since the dispersed configuration is not achieved in round $r> 0$, the node $v_n$ is at least two hops away from agents at the beginning of round $r+1$. This completes the proof. 
\end{proof}

Based on Theorem \ref{thm:exp-1-hop}, \ref{thm:exp-global}, we have the following observations.
\begin{observation}
    To solve the exploration problem in 1-Interval Connected graphs, a team of n-1 agents requires 1-hop visibility and global communication.
\end{observation}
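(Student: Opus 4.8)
The plan is to obtain this observation as an immediate corollary of Theorems~\ref{thm:exp-1-hop} and~\ref{thm:exp-global}. First I would fix the regime $n\ge 7$ in which those two impossibility results apply, and recall from Theorem~\ref{thm:imp_Exp_1-Interval} that a team of at most $n-2$ agents fails regardless of capabilities, so it remains only to treat a team of exactly $n-1$ agents. The conceptual point to state explicitly is that an algorithm \emph{solves exploration} only if it does so for \emph{every} adversarial choice of the initial placement of the $n-1$ agents together with every admissible edge dynamics; in particular the adversary is free to start the agents in a non-dispersed configuration (which exists since with $n\ge 7$ one can co-locate two of the $n-1$ agents). That is precisely the case ruled out by Theorems~\ref{thm:exp-1-hop} and~\ref{thm:exp-global}, so the ``unless they start in a dispersed configuration'' hypothesis in those theorems does not weaken the present claim.

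Next I would argue necessity of global communication. Suppose, for contradiction, that some algorithm $\mathcal{A}$ solves exploration with $n-1$ agents without global communication. Granting $\mathcal{A}$ the additional powers of $1$-hop visibility and unlimited memory cannot hurt it, so we may assume $\mathcal{A}$ operates in exactly the model of Theorem~\ref{thm:exp-1-hop}. Letting the adversary pick a non-dispersed initial placement and then run the construction of Theorem~\ref{thm:exp-1-hop}, we obtain a $1$-Interval Connected dynamic graph on which $\mathcal{A}$ leaves node $v_n$ unexplored in every round, contradicting that $\mathcal{A}$ solves exploration. The necessity of $1$-hop visibility is symmetric: assuming an algorithm solves exploration without $1$-hop visibility, we grant it global communication and unlimited memory, placing it in the model of Theorem~\ref{thm:exp-global}, and invoke that theorem's adversarial construction against a non-dispersed start to exhibit a permanently unexplored node.

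Finally I would observe that these two cases exhaust all ways of dropping a capability: an algorithm that uses neither $1$-hop visibility nor global communication is in particular an algorithm in each of the two single-capability models, so it is already killed by either theorem. Hence the only model in which a team of $n-1$ agents can hope to explore a $1$-Interval Connected graph is the one equipped with both $1$-hop visibility and global communication; sufficiency of that combination is established later by the algorithm of Section~\ref{sec:EXP-1-Int}. The one delicate step, and the one I expect to be the main thing to get right, is the ``without loss of generality grant extra capabilities'' move: it is valid because an algorithm that never queries a capability behaves identically whether or not that capability is available, while the adversary's control over both the initial placement and the per-round edge set is unchanged — so any failure exhibited in the strengthened model is also a failure of the original, weaker algorithm. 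This monotonicity-in-capabilities bookkeeping, rather than any new graph construction, is the crux of the argument.
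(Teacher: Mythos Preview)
Your proposal is correct and follows exactly the paper's approach: the observation is stated immediately after Theorems~\ref{thm:exp-1-hop} and~\ref{thm:exp-global} as a direct consequence of them, with no further argument given. Your write-up simply makes explicit the monotonicity-in-capabilities reasoning and the adversary's freedom to choose a non-dispersed start that the paper leaves implicit.
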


\begin{observation}
    It is impossible to solve the exploration problem with $n-1$ agents in $T$-Path Connected graphs or the Connectivity Time graphs using either only 1-hop visibility or only global communication due to Observation \ref{obs:connectivity} unless they are dispersed.
\end{observation}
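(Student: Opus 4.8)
The plan is to obtain this observation as an immediate corollary of Theorems \ref{thm:exp-1-hop} and \ref{thm:exp-global}, using the fact that the adversarial dynamic graphs built there are connected in every round. Concretely, in the proof of Theorem \ref{thm:exp-1-hop} the adversary presents, in each round $r$, one of the path graphs $\mathcal{P}_r$ or $\mathcal{P}_r'$ on all $n$ nodes; each such graph is connected, so the execution is 1-Interval Connected. A dynamic graph that is connected in every round is in particular $T$-Path Connected for every $T\geq 1$ (for any $r$ and any pair $u,v$, take $i=r$, and $u,v$ lie in the same, sole, connected component of $\mathcal{G}_r$) and also satisfies the Connectivity Time property for every $T\geq 1$ (the union $\bigcup_{i=r}^{r+T-1}E(i)$ contains the connected spanning graph $E(r)$), which is exactly the chain of inclusions recorded in Observation \ref{obs:connectivity}. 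Hence every adversary schedule that is legal under 1-Interval Connectivity is also legal under $T$-Path Connectivity and under the Connectivity Time property.

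The first step is therefore to state this monotonicity of adversarial power explicitly: an adversary required only to maintain $T$-Path Connectivity (respectively the Connectivity Time property) has at least the freedom of an adversary required to maintain 1-Interval Connectivity. The second step is the reduction itself. Suppose, for contradiction, that some algorithm $\mathcal{A}$ solves exploration with $n-1$ agents starting from a non-dispersed configuration on all $T$-Path Connected (respectively Connectivity Time) graphs, using only 1-hop visibility. Then $\mathcal{A}$ in particular solves exploration against the specific adversary of Theorem \ref{thm:exp-1-hop}, since that adversary's executions are connected in every round and thus fall within the $T$-Path Connectivity (respectively Connectivity Time) class; but Theorem \ref{thm:exp-1-hop} shows that against this adversary node $v_n$ is never visited, a contradiction. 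The symmetric argument, substituting Theorem \ref{thm:exp-global} and the hypothesis ``only global communication, no 1-hop visibility,'' closes the remaining case, and combining the two cases yields the stated impossibility for both weaker connectivity models.

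There is essentially no hard step here; the only care needed is in the bookkeeping of hypotheses. The clause ``unless they are dispersed'' is inherited verbatim from Theorems \ref{thm:exp-1-hop} and \ref{thm:exp-global}, as is the standing assumption $n\geq 7$ under which those theorems are proved, so the observation should be read with the same caveats. It is also worth remarking that the construction witnesses $T$-Path Connectivity and Connectivity Time \emph{simultaneously for every} $T\geq 1$, so the impossibility is not confined to any particular choice of $T$. If anything could be viewed as a subtle point, it is only the verification that ``connected in every round'' implies both weaker properties for all $T$, which is precisely Observation \ref{obs:connectivity} instantiated at the 1-Interval case and is immediate.
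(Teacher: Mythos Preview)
Your proposal is correct and follows exactly the paper's own reasoning: the paper states this observation without a separate proof, deriving it directly from Theorems~\ref{thm:exp-1-hop} and~\ref{thm:exp-global} together with Observation~\ref{obs:connectivity}, and you have simply spelled out that inheritance argument in full. Your added remarks about the caveats ($n\geq 7$, the ``unless dispersed'' clause, and validity for every $T\geq 1$) are accurate and worth noting.
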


Theorem \ref{thm:exp-1-hop}, \ref{thm:exp-global} are valid when agents are not in the dispersed configuration. If agents are in the dispersed configuration, then what is the necessary condition to solve the exploration in 1-Interval Connected graphs? We answer this question as follows.
\begin{theorem}\label{thm:disp-Exp}
    (For $n\geq 3$) Let $n-1$ agents be dispersed initially. It is impossible to solve the exploration of $n-1$ mobile agents on a 1-Interval Connected dynamic graph when they are equipped with global communication and unlimited memory, know the parameter $k,n$, and are aware of dispersed configuration but without 1-hop visibility.
\end{theorem}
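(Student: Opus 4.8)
The plan is to reuse the indistinguishability trick from the proofs of Theorem~\ref{thm:exp-1-hop} and Theorem~\ref{thm:exp-global}, but now tailored to the dispersed-configuration case: construct two dynamic graphs that the agents cannot tell apart without $1$-hop visibility, such that in one of them the exploration is actually required and in the other the adversary can keep some node perpetually unvisited. Since $n-1$ agents sit on $n-1$ of the $n$ nodes, there is exactly one hole at the start; call it $v_n$. First I would have the adversary present, in every round, a path (or a long path with $v_n$ dangling two hops away from every occupied node) exactly as in the two earlier proofs, so that as long as the agents never reach a configuration in which the single hole is adjacent to an occupied node at distance one, $v_n$ stays unexplored. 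The point is that in the dispersed configuration the agents do not form any multinode, so the argument must instead track ``which node is currently the hole'' and show the adversary can always relocate the hole to be two hops from every agent.

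The key steps, in order: (i) fix notation $V=\{v_1,\dots,v_n\}$, $v_n$ the hole, and note that at every round exactly one node is a hole; (ii) describe the adversary's canonical graph $\mathcal{P}_r$ — a path $w_1\sim\cdots\sim w_n$ in which the ordering $w_1,\dots,w_n$ is determined by agent IDs / occupancy in a fixed deterministic way (so that it is reconstructible by the agents via global communication alone, hence agents genuinely cannot distinguish $\mathcal{P}_r$ from a perturbed $\mathcal{P}_r'$), with $w_n$ the current hole placed at the far end; (iii) run the algorithm $\mathcal{A}$ one round forward by precomputation — if $\mathcal{A}$ would move an agent so that after this round the hole becomes adjacent to an occupied node (the only way progress toward exploration could happen), then the adversary instead presents the perturbed graph $\mathcal{P}_r'$ obtained by a local port/edge re-wiring near $w_1,\dots,w_5$, exactly as in Theorem~\ref{thm:exp-global}; (iv) argue that because there is no $1$-hop visibility, and the global-communication transcript plus agent memories are identical in $\mathcal{P}_r$ and $\mathcal{P}_r'$, each agent takes the same action in both, and in $\mathcal{P}_r'$ this action fails to place an agent adjacent to the hole — instead it creates (or maintains) the invariant that the hole is again at distance $\geq 2$ from every occupied node at the start of round $r+1$; (v) by induction over $r$, $v_n$'s role as an unreachable hole is passed around forever, so $v_n$ (or whichever node currently plays that role) is never explored, and since knowledge of $k,n$ and awareness of being in a dispersed configuration give no extra information that distinguishes the two graphs, none of that helps. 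One subtlety to handle: in the dispersed case we start with a single hole already ``far'', but the earlier proofs started from a multinode; I would reconcile this by noting that with $n-1$ dispersed agents, one step of any algorithm can only shift agents along edges, so the hole moves by at most one edge per round, and the adversary's freedom to redraw the whole path (subject to $1$-Interval Connectivity, which a path always satisfies) lets it re-anchor the hole two hops away each time.

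I would close by remarking that this, together with Theorem~\ref{thm:exp-1-hop} (the symmetric statement for $1$-hop visibility without global communication, restricted there to the undispersed case), shows $1$-hop visibility is necessary for exploration by $n-1$ agents in $1$-Interval Connected graphs regardless of the initial configuration.

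The main obstacle I anticipate is step (iv): making precise that the adversary's perturbation $\mathcal{P}_r'$ is genuinely indistinguishable from $\mathcal{P}_r$ to \emph{every} agent simultaneously under global communication. One must verify that the deterministic rule defining the path order from occupancy/IDs yields the \emph{same} transcript in both graphs — i.e., every agent reports the same local degree and port set it would use, and the connected-component structure (a single path) is identical — so that a single consistent run of $\mathcal{A}$ is forced, and then check that the re-wiring near $w_1,\dots,w_5$ is always large enough (this is where $n\geq 3$, or perhaps a slightly larger bound, is used) to absorb whatever move $\mathcal{A}$ makes without ever letting an occupied node land within distance one of the hole. Getting the edge-count bookkeeping for this local gadget right, while preserving that the whole graph is a connected path (hence $1$-Interval Connected), is the delicate part.
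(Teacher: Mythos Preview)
Your proposal has a genuine gap: the invariant you want to maintain --- that the single hole stays at distance at least two from every occupied node --- is impossible in the dispersed case. With exactly $n-1$ agents on $n$ nodes and the graph $\mathcal{G}_r$ connected (as 1-Interval Connectivity requires), there is exactly one hole, and that hole necessarily has at least one occupied neighbour. So the ``keep the hole two hops away'' mechanism you borrow from Theorems~\ref{thm:exp-1-hop} and~\ref{thm:exp-global} cannot be reused here; those proofs relied on having at least two holes (one shielding the other). Your step~(iv) accordingly cannot go through, and your own anticipated obstacle --- getting the local gadget near $w_1,\dots,w_5$ to keep every occupied node at distance $\geq 2$ from the hole --- is not a bookkeeping issue but an outright impossibility. (There is also an internal inconsistency: you place the hole at $w_n$ but apply the perturbation near $w_1,\dots,w_5$.)

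The paper's argument is simpler and exploits the absence of 1-hop visibility more directly. It places the hole at one end of the path, say $w_1$, so only the agent at $w_2$ is adjacent to it. Without 1-hop visibility that agent cannot tell which port leads to the hole; the adversary precomputes the agent's chosen port and, if that port would lead to $w_1$, swaps the two port labels at $w_2$ so the same decision now sends the agent to $w_3$ instead. Global communication does not help because the transcript (degrees, IDs, occupancy) is identical under the port swap. The second key idea you are missing is the fallback: this swap may produce a multinode at $w_3$, i.e.\ a non-dispersed configuration, and at that point the paper simply invokes Theorem~\ref{thm:exp-global} to conclude impossibility from then on. If instead the configuration stays dispersed, repeat the port-swap argument at the new hole. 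Thus the proof is a one-node port relabelling plus a reduction to the already-proved non-dispersed case, not a distance-two shielding argument.
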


\begin{figure}
    \centering
    \includegraphics[width=1\linewidth]{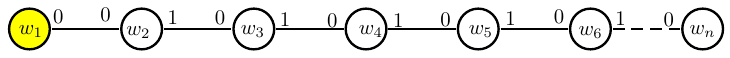}
    \caption{$\mathcal{P}_0$, the adversary supposed to give $\mathcal{P}_0$ at round $r=0$. Node $w_1$ is a hole.}
    \label{fig:EXP_11}
\end{figure}

\begin{figure}
    \centering
    \includegraphics[width=1\linewidth]{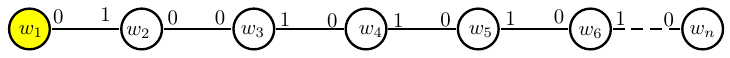}
    \caption{$\mathcal{P}_0'$, the adversary gives $\mathcal{P}_0'$ at round $r=0$. Node $w_1$ is a hole.}
    \label{fig:EXP_12}
\end{figure}

\begin{proof}
    Let $v_1$, $v_2$, \ldots, $v_n$ be nodes, and $n-1$ agents are dispersed initially. Without loss of generality, let $v_1$ be a hole initially, and $v_i$ contains at least 1 agent $\forall\; i>1$. At the beginning of round $r\geq 0$, if agents are in the dispersed configuration, then the adversary forms path $\mathcal{P}_r=w_1 \sim w_2\sim \ldots\sim w_n$ of length $n$ based on the following.

    \begin{itemize}
        \item Let agent $a_i$ be at node $w_i$, $i\geq 2$, and node $w_1$ be a hole.
        \item For $i\in [2, n-1]$, node $w_i$ is connected to node $w_{i-1}$ via port 0, and connected to node $w_{i+1}$ via port 1. Node $w_1$ is connected to node $w_2$ via port 0, and node $w_n$ is connected to node $w_{n-1}$ via port 0.
    \end{itemize}
    
    The adversary provides this graph $\mathcal{P}_r$ in each and every round $r$ unless otherwise mentioned during this proof.

    According to the construction described above, a path $\mathcal{P}_0=w_1(=v_1)\sim w_2\sim \ldots w_n$ is expected to be provided at round $0$ (see Fig. \ref{fig:EXP_11}). If there is any algorithm $\mathcal{A}$ which solves exploration, then the adversary can pre-compute the movement of agents in $\mathcal{P}_0$ as it is aware of algorithm $\mathcal{A}$ and configuration $\mathcal{P}_0$. Since agents are equipped with global communication but no 1-hop visibility, their movement decision is based on global communication and existing memory. As per pre-computation, if agent $a_2$ at node $w_2$ does not move to node $w_1$ in the path $\mathcal{P}_0$ at round $r=0$, the adversary gives the configuration $\mathcal{P}_0$ at the beginning of round $r=0$. If agent $a_2$ at node $w_2$ moves to node $w_1$ in the path $\mathcal{P}_0$ at round $0$, the adversary gives configuration $\mathcal{P}_0'$ (refer Fig. \ref{fig:EXP_12}) instead of $\mathcal{P}_0$ at the beginning of round 0 as follows: via port 1, node $w_2$ is connected with node $w_1$, and via port 0, node $w_2$ is connected with node $w_3$. Since agents lack 1-hop visibility, the decision of agent $a_2$ remains the same as $\mathcal{P}_0$ in $\mathcal{P}_0'$. Therefore, at the end of round $r=0$, the node $v_1$ remains unvisited. If at the end of round $r\geq 0$, the movement of agents in this configuration leads to non-dispersed configuration, then we can not solve exploration for any round $r>0$ using Theorem \ref{thm:exp-global}. If, at the end of round $r\geq 0$, the configuration remains dispersed, we can use the same idea of $r=0$. In every round $r\geq 0$, the node $v_1$ remains unexplored. Therefore, if $n-1$ agents are dispersed initially, it is impossible to solve the exploration of $n-1$ mobile agents on a 1-Interval Connected dynamic graph when equipped with global communication and unlimited memory but without 1-hop visibility.
\end{proof}

Based on Theorem \ref{thm:disp-Exp}, we have the following observation.

\begin{observation}\label{obs:necessary-exp-1-hop}
    If agents are in a dispersed configuration in 1-Interval Connected graphs, then without 1-hop visibility, it is impossible to solve the exploration problem.
\end{observation}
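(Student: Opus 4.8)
The plan is to obtain this observation as an immediate corollary of Theorem~\ref{thm:disp-Exp}. First I would make precise what ``dispersed configuration'' forces: by the observation following Theorem~\ref{thm:imp_Exp_1-Interval}, any algorithm exploring an $n$-node $1$-Interval Connected graph needs at least $n-1$ agents, so a dispersed configuration with fewer than $n-1$ agents cannot explore at all, and a dispersed configuration with $n$ agents already has every node occupied and is trivially explored. Hence the only content of the statement is the case $k=n-1$, where ``dispersed'' means exactly one node, say $v_1$, is a hole. In that situation Theorem~\ref{thm:disp-Exp} says precisely that, with $n-1$ agents starting dispersed, global communication, unbounded memory and knowledge of $n$ and $k$, but without $1$-hop visibility, the adversary can keep $v_1$ forever unvisited --- which is exactly the assertion of the observation.

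To make the observation visibly self-contained I would also recall the mechanism behind Theorem~\ref{thm:disp-Exp}. Without $1$-hop visibility, an agent's move in a round is a function only of the global-communication transcript and its own memory, never of the local port picture. The adversary arranges the $n-1$ occupied nodes together with the single hole on a path $w_1\sim w_2\sim\cdots\sim w_n$ with the hole at the pendant vertex $w_1$, so $v_1$ is reachable only across the single edge $\{w_1,w_2\}$; and whenever its pre-computation of the algorithm shows that the agent at $w_2$ would cross that edge into $w_1$, the adversary instead presents the port-relabeled path $\mathcal{P}_r'$ that swaps ports $0$ and $1$ at $w_2$, so the identical move now carries that agent to $w_3$. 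The agent cannot distinguish $\mathcal{P}_r$ from $\mathcal{P}_r'$, so $w_1$ stays a hole. If the configuration remains dispersed, the argument repeats; if at some round it ceases to be dispersed, the adversary switches to the strategy of Theorem~\ref{thm:exp-global}. Every $\mathcal{G}_r$ is connected, so $1$-Interval Connectivity is maintained, and $v_1$ is never visited.

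I do not expect a genuine obstacle here: all the delicate work --- availability of the port swap at the critical edge in every reachable configuration, simultaneous preservation of $1$-Interval Connectivity, the branch where the algorithm deliberately destroys dispersion and the reduction to Theorem~\ref{thm:exp-global}, and the small-$n$ bookkeeping --- has already been carried out in the proof of Theorem~\ref{thm:disp-Exp}. Given that theorem, the observation follows in one line, and the only thing worth writing out is the reduction described in the first paragraph.
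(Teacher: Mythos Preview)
Your proposal is correct and matches the paper's approach: the observation is recorded immediately after Theorem~\ref{thm:disp-Exp} with the remark ``Based on Theorem~\ref{thm:disp-Exp}, we have the following observation,'' so it is intended as a direct corollary, exactly as you derive it. Your extra case analysis (ruling out $k<n-1$ via Theorem~\ref{thm:imp_Exp_1-Interval} and noting $k=n$ is trivial) is a helpful clarification the paper leaves implicit, but the substance is identical.
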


We ask the following question.

\begin{question}
    Whether global communication is necessary to solve exploration in 1-Interval Connected graphs when agents are aware of a dispersed configuration.
\end{question}
\begin{answer}
    The answer is negative for the following reason: consider that the agents are in a dispersed configuration and equipped with 1-hop visibility. Using a straightforward algorithm, the agents can complete the exploration with termination in a single round. In this scenario, if a team of \( n-1 \) agents is in a dispersed configuration, there will be one node (let's call it \( v \)) that remains unexplored. Since \(\mathcal{G}_0\) is connected due to 1-Interval Connectivity, at least one agent must be located in a neighbouring node of \( v \) within \(\mathcal{G}_0\). By the end of round 0, if the agents move to the neighbouring node, which is a hole, they will realize that the exploration is complete and can terminate. Therefore, exploration can be resolved in one round if the agents are aware of the dispersed configuration and have 1-hop visibility.
\end{answer}

\begin{remark}\label{rk:T-Path}
    It is impossible to solve the exploration problem with $n-1$ agents in $T$-Path Connected graphs or the Connectivity Time graphs using either only 1-hop visibility or only global communication due to Theorem \ref{thm:exp-1-hop}, \ref{thm:exp-global} and Observation \ref{obs:connectivity} unless they are dispersed.
\end{remark}





\begin{figure}
    \centering
    \includegraphics[width=0.5\linewidth]{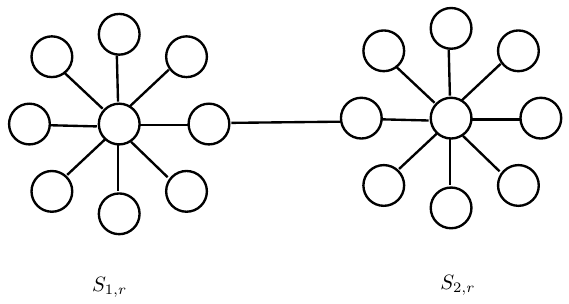}
    \caption{$\mathcal{G}_r$ at round $r\geq0$}
    \label{fig:exp_1_int_time}
\end{figure}

The following Theorem is in terms of time lower bound in 1-Interval Connected Dynamic Graphs, which is as follows.
\begin{theorem}\label{thm:time_lower_exp}
    Any algorithm solving exploration problem in 1-Interval Connected Dynamic graph of $n$ nodes requires $\Omega(n)$ rounds even if $\hat{D}=O(1)$. Moreover, this result holds if the agents have infinite memory, are equipped with global communication, have full visibility and know all of $k$, $n$, $T$.
\end{theorem}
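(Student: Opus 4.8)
The plan is to fix, for every $n$, a worst-case initial placement together with an adversarial sequence $\langle \mathcal{G}_0,\mathcal{G}_1,\dots\rangle$ of $1$-Interval Connected graphs of diameter at most $3$ on which any exploration algorithm runs for at least $n-1$ rounds. If $k\le n-2$ then exploration is unsolvable by Theorem \ref{thm:imp_Exp_1-Interval} and there is nothing to prove, so I assume $k\ge n-1$ and let the adversary place all $k$ agents on a single node $v_0$. The reason this initial configuration is the right one is that only the single node $v_0$ is ``visited'' at the start, so an $\Omega(n)$ bound follows once we show that at most one new node can be discovered per round.

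For the bookkeeping, let $X_r\subseteq V$ be the set of nodes visited by some agent by the end of round $r-1$ (so $X_0=\{v_0\}$) and let $U_r=V\setminus X_r$. The adversary knows the whole history, hence knows $X_r$ and the agents' positions at the start of round $r$; it builds $\mathcal{G}_r$ as a ``double broom'' (see Fig.~\ref{fig:exp_1_int_time}): keep $v_0$ as a permanent hub; make every node of $X_r\setminus\{v_0\}$ a pendant leaf of $v_0$; pick one frontier node $h_r\in U_r$ and make it adjacent to $v_0$; and attach every remaining unvisited node of $U_r\setminus\{h_r\}$ as a pendant leaf of $h_r$. This is a tree, hence connected, and every pair of nodes is at distance at most $3$ (leaf of $v_0$, then $v_0$, then $h_r$, then leaf of $h_r$), so $\hat D=O(1)$ independently of $n$; the degenerate cases $|U_r|\le 1$ only shrink the graph and keep the diameter bound.

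The heart of the argument is the inequality $|X_{r+1}|\le |X_r|+1$. A node $w\in U_r$ can become visited during round $r$ only if some agent occupying a neighbour of $w$ at the start of round $r$ moves onto $w$ in its single move step of the CCM cycle. In $\mathcal{G}_r$ the only neighbour of any node in $U_r\setminus\{h_r\}$ is $h_r$, and $h_r\in U_r$ carries no agent at the start of round $r$; hence no node of $U_r\setminus\{h_r\}$ can be entered in round $r$, so the only unvisited node that can possibly be discovered is $h_r$. Thus $|X_{r+1}|\le|X_r|+1$, and iterating from $|X_0|=1$ gives $|X_r|\le r+1$; therefore $X_r=V$ forces $r\ge n-1$, i.e. any exploration algorithm needs $\Omega(n)$ rounds. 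None of the listed powers help: the only facts used are that an agent traverses at most one edge per round and that the adversary may commit to $\mathcal{G}_r$ after observing the configuration at the start of round $r$, so infinite memory, full visibility, global communication and knowledge of $k,n,T$ are all irrelevant.

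The step I expect to require the most care is checking that the adversary's construction simultaneously satisfies all three requirements in every round and for every possible agent configuration: connectivity, a diameter bound that is an absolute constant (not growing with $n$ or with $|X_r|$), and the ``firewall'' property that the pendant leaves of $h_r$ are unreachable in a single round from wherever the agents currently sit. Because a CCM cycle moves an agent across at most one edge, the firewall property is immediate once the adjacency is set up correctly, but one must be careful to re-choose $h_r$ and re-wire $\mathcal{G}_r$ from scratch each round using the current $X_r$, and to dispose of the small cases ($X_r=\{v_0\}$ and $|U_r|\le 1$) separately; the counting step is then a one-liner.
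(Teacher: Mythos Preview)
Your proof is correct and follows essentially the same approach as the paper: place all agents on one node, and in each round have the adversary build two stars (one on visited nodes, one on unvisited nodes) joined by a single bridge edge so that at most one unvisited node can be entered per round, giving the $\Omega(n)$ bound with $\hat D=O(1)$. Your presentation is slightly more explicit about the hub choices and the diameter-$3$ verification, but the construction and counting argument are the same as the paper's.
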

\begin{proof}
    Let $V=\{v_1$, $v_2$, \ldots, $v_n\}$ be the set of nodes in $G$, and agents are co-located at node $v_1$. We demonstrate that it is possible to construct $\mathcal{G}_r$ at each round $r$ such that, at most, one new node is visited by the agents. This shows that there exists a dynamic graph on which the agents require a minimum of \( n \) rounds to visit \( n \) new nodes. Consequently, this implies that the exploration process requires \(\Omega(n)\) rounds. Though a path is a trivial configuration, we provide a dynamic graph $G$ with $\hat{D}=O(1)$. The construction of $\mathcal{G}_r$ is as follows. 

    Initially, all agents are co-located at node $v_1$. At each round, $r\geq 0$, let $S_{1,r}$ be a set of node(s) which are visited by agents at least once, and $S_{2,r}$ be a set of node(s) which are not visited yet by agents. At the beginning of round $r$, the adversary forms two-star graphs $G_1$ and $G_2$ from sets $S_{1,r}$ and $S_{2,r}$ and connects them by one edge (refer to Fig. \ref{fig:exp_1_int_time}). In this case, in each round $r$, the graph holds 1-Interval connectivity due to the fact that $\mathcal{G}_r$ is connected in each round. Initially, $S_{1,0}=\{v_1\}$, and $S_{2,0}=V-\{v_1\}$. In $\mathcal{G}_r$, agents can visit at most one unvisited node no matter how agents move. To visit $n$ nodes in this dynamic graph sequence, agents need at least $n$ rounds due to the fact that they can visit at most one unvisited node per round. Therefore, to solve the exploration problem in the 1-Interval Connected Dynamic graph, agents need $\Omega(n)$ rounds.
\end{proof}

The following theorem is for time lower bound in $T$-Path Connected graphs.
\begin{theorem}\label{thm:time_lower_exp1}
    Any algorithm solving exploration problem in $T$-Path Connected graph of $n$ nodes requires $\Omega(n\cdot T)$ rounds even if $\hat{D}=O(1)$. Moreover, this result holds if the agents have infinite memory, are equipped with global communication, have full visibility and know all of $k$, $n$, $T$.
\end{theorem}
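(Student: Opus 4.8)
The plan is to combine the ``two stars'' gadget from the proof of Theorem~\ref{thm:time_lower_exp} with the ``stay disconnected for $T-1$ rounds'' idea from the proof of Theorem~\ref{thm:lower_bound_T}. Assume $T\ge 2$ (for $T=1$ the claim is exactly Theorem~\ref{thm:time_lower_exp}). Fix $V=\{v_1,\dots,v_n\}$ with all agents co-located at $v_1$ initially. At the start of each round $r$, let $S_{1,r}$ be the set of nodes that have been visited by at least one agent so far and $S_{2,r}=V\setminus S_{1,r}$. The adversary builds $\mathcal{G}_r$ as a star $G_1$ on $S_{1,r}$ together with a star $G_2$ on $S_{2,r}$; when $r$ is a multiple of $T-1$ (a ``bridge round'') it additionally inserts a single edge from a node $x\in S_{1,r}$ to one leaf $y\in S_{2,r}$, and otherwise leaves $G_1$ and $G_2$ disconnected. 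Both pieces are stars, and in a bridge round their union is connected with diameter bounded by a constant, so $\hat D=O(1)$ throughout.

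First I would check that $G$ is $T$-Path Connected. In a bridge round $\mathcal{G}_r$ is connected, so every pair of nodes lies in one component; in any round, two nodes that both lie in $S_{1,r}$ (resp.\ both in $S_{2,r}$) share the corresponding star. Hence the only pairs that could fail the definition are cross pairs in non-bridge rounds, and these are covered because the bridge rounds are precisely the multiples of $T-1$: any window $[r,r+T-1]$ of $T$ consecutive rounds is an interval of length $T-1$ and therefore contains a multiple of $T-1$.

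Next I would bound progress. At the start of round $r$ every agent occupies a node it has already visited, i.e.\ a node of $S_{1,r}$, and in one CCM cycle it moves across at most one edge. In a non-bridge round $S_{1,r}$ induces the component $G_1$, disconnected from $G_2$, so no fresh node is reachable and $S_{1,r+1}=S_{1,r}$; in a bridge round the unique node of $S_{2,r}$ at distance one from $S_{1,r}$ is $y$, so $|S_{1,r+1}|\le|S_{1,r}|+1$. Therefore over any block of $T-1$ consecutive rounds at most one new node is visited, so after $t$ rounds at most $1+\lceil t/(T-1)\rceil$ nodes are visited; forcing this quantity to reach $n$ gives $t=\Omega(n\cdot T)$.

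The last thing to address is why the extra capabilities are irrelevant. The obstruction is a mobility constraint, not an information one: an agent cannot occupy a node outside its current connected component, and the adversary — who only needs to track which nodes have been visited in order to form $S_{1,r}$ — keeps the unvisited region one hop away except on a $1/(T-1)$ fraction of the rounds. Unbounded memory, full visibility, global communication, and knowledge of $k,n,T$ only change what agents know, not which nodes they can enter in a given round, so the same counting applies. I expect the only fiddly point to be the off-by-one bookkeeping at the start (round $0$ versus the first bridge round), which should be routine rather than a genuine obstacle.
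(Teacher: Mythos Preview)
Your proposal is correct and follows essentially the same approach as the paper: the paper's proof of this theorem is a three-sentence sketch that simply points back to Theorem~\ref{thm:lower_bound_T} (two stars, connected only once per $T{-}1$ rounds) adapted with the ``visited/unvisited'' partition of Theorem~\ref{thm:time_lower_exp}, which is exactly the combination you carry out. Your write-up is in fact more careful than the paper's (you verify the $T$-Path Connectivity by showing every length-$T$ window contains a multiple of $T{-}1$, and you make the counting explicit), and the only off-by-one differences (e.g.\ whether round $0$ is a bridge round) are immaterial to the $\Omega(nT)$ conclusion.
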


\begin{proof}
    The proof is similar to Theorem \ref{thm:lower_bound_T}. For the sake of completeness, we recall the idea. Consider that $n-1$ agents are co-located at some node. In each $T$ round, it connects to a new unexplored node for a $1$ round and disconnects the remaining unexplored nodes for the next $T$ rounds. In this way, to visit $n-2$ nodes, the agents need at least $(n-2)\cdot T$ rounds. Therefore, any algorithm solving exploration problem of any $T$-Path Connected graph of $n$ nodes requires $\Omega(n\cdot T)$ rounds.
\end{proof}

\begin{figure}[ht]
    \centering
    \begin{minipage}{0.48\linewidth}
        \centering
        \includegraphics[width=0.75\linewidth]{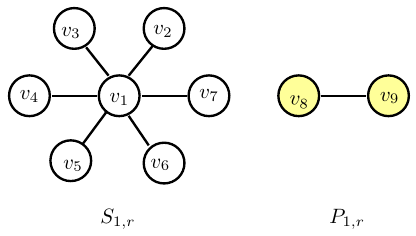}
        \caption{Dynamic graph at the beginning \\of round $r\in[0,T-2]$.}
        \label{fig:r=0}
    \end{minipage}
    \begin{minipage}{0.48\linewidth}
        \centering
        \includegraphics[width=0.93\linewidth]{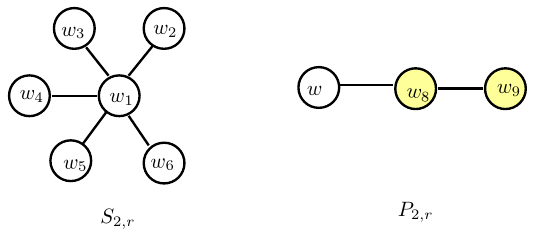}
        \caption{Dynamic graph at the beginning \\of round $r=iT-1$.}
        \label{fig:r=iT-1}
    \end{minipage}
\end{figure}

\begin{figure}[ht]
    \centering
    \begin{minipage}{0.48\linewidth}
        \centering
        \includegraphics[width=0.95\linewidth]{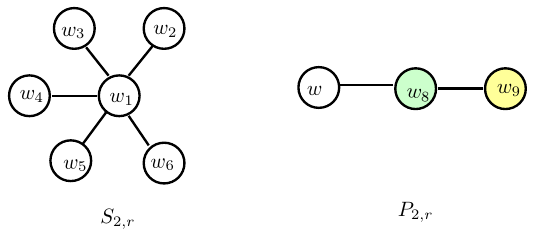}
        \caption{Dynamic graph at the end of round \\$r=iT-1$.}
        \label{fig:r=iT-1_end}
    \end{minipage}
    \begin{minipage}{0.48\linewidth}
        \centering
        \includegraphics[width=0.8\linewidth]{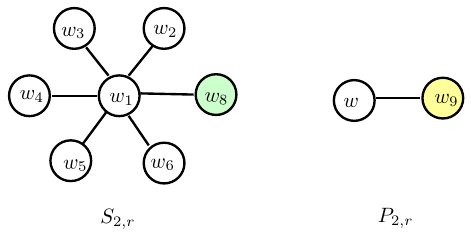}
        \caption{Dynamic graph at the beginning\\ of round $r=iT$ if $n(w)=0$.}
        \label{fig:r=iT,1}
    \end{minipage}
\end{figure}

\begin{figure}[ht]
    \centering
    \begin{minipage}{0.48\linewidth}
        \centering
        \includegraphics[width=0.8\linewidth]{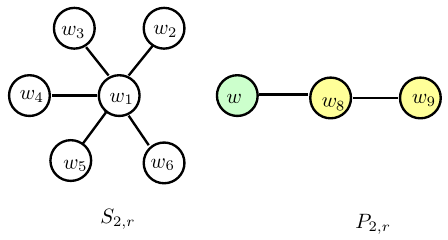}
        \caption{Dynamic graph at the end of round \\$r=iT-1$ if $n(w)=1$.}
        \label{fig:r=iT-1,end_1}
    \end{minipage}
    \begin{minipage}{0.48\linewidth}
        \centering
        \includegraphics[width=0.8\linewidth]{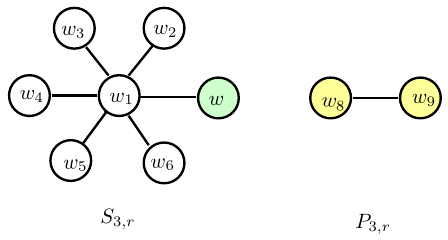}
        \caption{Dynamic graph at the beginning \\of round $r=iT$ if $n(w)=1$.}
        \label{fig:r=iT,11}
    \end{minipage}
\end{figure}

\begin{theorem}\label{thm:connectivity_exp}
   ($n\geq 6$) If the initial configuration contains at least two holes, then a group of \( k \leq n \) agents cannot solve the exploration problem in dynamic graphs that maintain the Connectivity Time property. This impossibility holds even if agents have infinite memory, full visibility, global communication, and know the parameters $k$, $n$, $T$.
\end{theorem}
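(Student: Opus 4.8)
First, for $k \le n-2$ the statement is already implied by Remark~\ref{rk:T-Path Connected} (equivalently, by Theorem~\ref{thm:imp_Exp_1-Interval} together with the fact that Connectivity Time is a weaker model than $1$-Interval Connectivity), so the genuinely new case is $k\in\{n-1,n\}$; there the hypothesis of at least two holes forces the initial configuration to contain a multinode. The plan is to single out one of the (at least two) initially empty nodes, call it $v_n$, and construct an adversarial schedule that preserves the Connectivity Time property while guaranteeing that $v_n$ is never adjacent to an occupied node in any round. Since an agent travels at most one hop per round, $v_n$ then never acquires an agent, hence is never visited and exploration is never completed; and because this obstruction is purely topological --- there is simply no length-one path from any agent to $v_n$ --- the powers granted to the agents (infinite memory, full visibility, global communication, knowledge of $k,n,T$) are of no help. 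As in the proof of Theorem~\ref{thm:imp-connec}, the schedule runs in blocks of length $\Theta(T)$: most rounds of a block are ``ordinary'' and keep $v_n$ isolated, while one designated ``connecting'' round per $T$-window attaches $v_n$ as a pendant to another \emph{empty} node and, at the same time, links the remaining empty nodes to the occupied part, so that the union of the edge sets of any $T$ consecutive rounds is connected on $V$ (see Figures~\ref{fig:r=0} and~\ref{fig:r=iT-1}).

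The heart of the proof is the invariant the adversary maintains at the start of every round: (i) $v_n$ is empty; (ii) at least two nodes are empty; and (iii) in the graph $\mathcal{G}_r$ the adversary puts down, no connected component that contains a multinode contains an empty node, and every neighbour of $v_n$ (if any) is empty. A one-round case check shows that, under (iii), the number of occupied nodes in each component cannot increase: inside a multinode-free component every occupied node carries exactly one agent, and the agents of a component can occupy at most as many nodes after the round as before, while inside a multinode-carrying component there are no empty nodes for an agent to move into. Hence the number of empty nodes never decreases --- preserving (i) and (ii) --- and no agent is ever placed on $v_n$, since all of $v_n$'s neighbours are empty. After the agents act --- possibly creating or displacing a multinode, which is the case split on whether a relay node $w$ did or did not pick up an agent (cf.\ Figures~\ref{fig:r=iT-1_end}, \ref{fig:r=iT,1}, \ref{fig:r=iT-1,end_1}, \ref{fig:r=iT,11}) --- the adversary restores (iii) in the next graph by quarantining the multinode(s) in a hole-free sub-part and routing everything else, including each empty node's compulsory within-$T$-window link to the occupied part, solely through empty nodes and single-agent nodes. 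The hypothesis $n\ge 6$ is exactly what makes room for the quarantine component, the chain carrying $v_n$ out to the rest, and a spare empty relay node to coexist.

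What then remains is to verify Connectivity Time --- every length-$T$ sliding window meets exactly one connecting round, so its edge-set union contains both the connectivity of the occupied part (from the ordinary rounds) and the links of $v_n$ and the other empty nodes to that part --- and to conclude that $v_n$ is never visited. I expect the main obstacle to be maintaining invariant (iii) against an \emph{adaptive} exploration algorithm: one must argue that however the agents reshuffle themselves, the multinode(s) can always be re-quarantined, a fresh empty node is always available to relay $v_n$'s mandatory window-link, and all of this stays simultaneously compatible with every sliding window of length $T$ having a connected union. This is precisely the step that forces the block structure and the $n\ge 6$ bound, and it is where the detailed, figure-by-figure case analysis resides.
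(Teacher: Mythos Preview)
Your high-level plan is correct and matches the paper's: fix an initially empty node $v_n$, keep all of its neighbours empty in every round, and arrange one ``connecting'' round per $T$-window so that Connectivity Time still holds. The invariant (iii-b) --- every neighbour of $v_n$ is empty --- is exactly what the paper maintains and is the whole point.

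Where you diverge is invariant (iii-a): you try to globally quarantine every multinode away from every hole so that the total number of holes can never drop. The paper does not do this, and for good reason: (iii-a) is \emph{not} maintainable in the connecting round whenever every occupied node happens to be a multinode (e.g.\ $n=6$, $k=6$, all agents piled on two nodes). Any edge from the hole side to the occupied side then lands on a multinode, an agent can cross it, and your ``one-round case check'' fails; (ii) no longer follows from (iii). You flag maintaining (iii) as the expected obstacle, but you have set yourself a harder target than necessary, and the proposal does not say how to hit it.

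The paper's fix is far more local. It keeps $v_n$ in a two-node path $x\sim v_n$ with $x$ a hole and throws everything else into a star $\mathcal{S}$ on $n-2$ nodes; it never cares about the hole/multinode structure inside $\mathcal{S}$. At the connecting round, pigeonhole on $k\le n$ agents over $n-2$ star nodes yields some $w\in\mathcal{S}$ with $n(w)\le 1$ (this is where $n\ge 6$ enters); the adversary swings $w$ out to form the $3$-path $w\sim x\sim v_n$ and shrinks the star to $n-3$ nodes. During that round the at-most-one agent at $w$ can reach only $x$, never $v_n$; afterwards whichever of $w,x$ carries the agent is reabsorbed into the star and the other becomes the new hole-partner of $v_n$ (this is precisely the $n(w)=0/1$ split in Figures~\ref{fig:r=iT-1_end}--\ref{fig:r=iT,11}). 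The union over any $T$-window is connected because the connecting round contributes the bridge edge at $w$. So the only invariant actually needed is ``$v_n$ is the end of a short path whose penultimate node is a hole,'' maintained directly, with no global hole-count bookkeeping.
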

\begin{proof}
Let \( v_1, v_2, \ldots, v_n \) be the nodes in a dynamic graph \( G \), and $n(v_i)$ denote the number of agents at node $v_i$. Without loss of generality, let nodes \( v_{n-1} \) and \( v_n \) be holes. There are \( n \) agents located at nodes \( v_1, \ldots, v_{n-2} \), meaning that the agents may either be co-located or scattered among these nodes. The adversary maintains the dynamic graph as follows.

\begin{itemize}
   \item $\bm{r\in[0, \,T-2]:}$ At every round $r$, it forms $\mathcal{G}_r$ as follows. The graph $\mathcal{G}_r$ contains two connected components (for $n=9$, refer to Fig. \ref{fig:r=0}): (i) a star graph $S_{1,r}$ of nodes $v_1$, $v_2$,\ldots, $v_{n-2}$, and (ii) path $P_{1,r}=v_{n-1}\sim v_{n}$. 
   
   \item $\bm{r=iT-1}, \textbf{where } \bm{i\in \mathbb{N}:}$ At the end of round $iT-2$, there is a star graph $\mathcal{S}$ (contains $n-2$ nodes) and path $\mathcal{P}$ (contains two nodes) in $\mathcal{G}_{iT-2}$. Let $w_1$, $w_2$, \ldots, $w_{n-2}$ be nodes in $\mathcal{S}$, and $w_{n-1}$, $w_{n}(=v_n)$ be nodes in $\mathcal{P}$. Since $n-2$ nodes are in star graph $\mathcal{S}$, and $n\geq 6$, therefore there is a node $w$ in $\mathcal{S}$ such that $n(w)\leq 1$. At the beginning of round $r=iT-1$, the adversary forms a new dynamic graph from $\mathcal{G}_{r}$. The graph $\mathcal{G}_{r}$ contains two connected components (for $n=9$ refer to Fig. \ref{fig:r=iT-1}): (i) a star graph $S_{2,r}$ from a set of nodes $Q$, where $Q=\{w_1, w_2, \ldots, w_{n-2}\}-\{w\}$ and (ii) path $P_{2,r}=w \sim w_{n-1}\sim w_{n}$.
   
    \item $\bm{r \in [iT, \, (i+1) T-2]}$, \textbf{where} $\bm{i \in \mathbb{N}:}$ At the end of round $iT-1$, there is a star graph $S_{2,iT-1}$ (contains $n-3$ nodes) and path $\mathcal{P}_{2,iT-1}$ (contains three nodes) in $\mathcal{G}_{iT-1}$. Let $w_1$, $w_2$, \ldots, $w_{n-3}$ be nodes in $S_{2, iT-1}$, and $w$, $w_{n-1}$, $w_{n}(=v_n)$ be nodes in $P_{2, iT-1}$. As per construction of $\mathcal{G}_{iT-1}$, $n(w_1)\leq 1$. At the beginning of round $r=iT-1$, if $n(w)=0$, then adversary forms $\mathcal{G}_{iT}$ as follows. The graph $\mathcal{G}_{iT}$ contains two connected components: (i) a star graph $S_{3,r}$ from a set of nodes $Q$, where $Q=\{w_1, w_2, \ldots, w_{n-2}\}\cup\{w\}$ and (ii) path $P_{2,r}=w_{n-1}\sim w_{n}$. 

    At the beginning of round $r=iT-1$, if $n(w)=1$, then adversary forms $\mathcal{G}_{iT}$ as follows. If at the end of round $iT-1$, the agent at node $w$ moves from node $w$ to node $w_{n-1}$ (for $n=9$ refer to Fig. \ref{fig:r=iT-1_end}), the graph $\mathcal{G}_{iT}$ contains two connected components (for $n=9$ refer to Fig. \ref{fig:r=iT,1}): (i) a star graph $S_{3,r}$ from a set of nodes $Q$, where $Q=\{w_1, w_2, \ldots, w_{n-2}\}\cup \{w_{n-1}\}$ and (ii) path $P_{3,r}=w \sim w_{n}$. If at the end of round $iT-1$, the agent at node $w$ stays at its position (for $n=9$ refer to Fig. \ref{fig:r=iT-1,end_1}), then the graph $\mathcal{G}_{iT}$ contains two connected components (for $n=9$ refer to Fig. \ref{fig:r=iT,11}): (i) a star graph $S_{3,r}$ from a set of nodes $Q$, where $Q=\{w_1, w_2, \ldots, w_{n-2}\}\cup \{w\}$ and (ii) path $P_{3,r}=w_{n-1} \sim w_{n}$. At every round $r>iT$, the adversary maintains $\mathcal{G}_r=\mathcal{G}_{iT}$.
  \end{itemize} 
This dynamic setting satisfies the connectivity time property due to the following reasons. For $r\geq 0$ and $r\neq iT-1$ for some $i\in \mathbb{N}$, let $\mathcal{G}_r$, $\mathcal{G}_{r+1}$, \ldots, $\mathcal{G}_{r+T-1}$ be consecutive $r$ sequence of graphs, where $\mathcal{G}_j=(V, E(j))$ for $j\in[r,r+T-1]$. Suppose the above dynamic graph $G$ does not satisfy the Connectivity Time property on interval $[r, r+T-1]$, i.e., $G_{r, T}:=(V, \cup_{j=r}^{r+T-1} E(j))$ is not connected. Note that there exists a round $r'$ between $r$ and $r+T-1$ such that $r'=iT-1$, for some $i\in \mathbb{N}$. In each round between $r$ and $r'-1$, there are two connected components in $\mathcal{G}_r$. Without loss of generality, let $\mathcal{S}$ be a star graph of $n-2$ nodes, and $\mathcal{P}$ be length 1 path at the start of round $r'-1$. Let $u_1$, $u_2$, \ldots, $u_{n-2}$ be nodes in $\mathcal{S}$, and $u_{n-1}$, $u_{n}$ be nodes in $\mathcal{P}$. The adversary changes the configuration at round $r'$ as follows. At the beginning of round $r'$, it forms a star graph of $n-3$ nodes from nodes $u_1$, $u_2$, \ldots, $u_{n-2}$, and it select a node $u$ from nodes $u_1$, $u_2$, \ldots, $u_{n-2}$, and from a path of two length from nodes $u$, $u_{n-1}$, $u_n$. In this case, the graph $G_{r, T}$ is a connected component. This shows our assumption is wrong. Similarly, we can show for $r=iT-1$. Therefore, this dynamic setting satisfies the Connectivity Time property.

To prove our theorem, it is sufficient to show that the node $v_n$ is not accessible to the agents in each round $r\geq 0$. If $r\in[0, \, T-2]$, node $v_n$ is not accessible to the agents because node $v_n$ is in connected component $\mathcal{P}_{1, r}=v_{n-1}\sim v_n$, and nodes $v_{n-1}$ and $v_{n}$ are holes. 

If $r=T-1$, then the node $v_n$ is not accessible by agents due to the following reason: at the beginning of round $r=T-1$, there can be one agent at node $w$. No matter how the agent moves in round $r$, it can not access node $v_n$ as $n\geq 6$.

If $r\in [T, 2T-2]$, then the node $v_n$ is not accessible to the agents because node $v_n$ is in the connected component of holes. The same idea can be extended for $r\geq 2T-1$. This completes the proof.
\end{proof}

\section{Exploration in Dynamic Graphs}
\subsection{1-Interval Connected Dynamic Graph Exploration}\label{sec:EXP-1-Int}
In this section, we present an algorithm which solves exploration in 1-Interval Connected dynamic graphs. As per Theorem \ref{thm:imp_Exp_1-Interval}, \ref{thm:exp-1-hop} and \ref{thm:exp-global}, the $n-1$ agents need global communication and 1-hop visibility to solve the exploration. 

\subsubsection{The Algorithm}
The algorithm is based on $\mathcal{DISP}$ (recall from Section \ref{sec:pre}). Agents maintain a parameter which is as follows. 
\begin{itemize}
    \item $a_i.count:$  It represents the number of agents present in the node including $a_i$. If $a_i.count=1$, then the current node where $a_i$ resides is not a multinode; else, if $a_i.count>1$, then the current node is a multinode. 
\end{itemize} 
Refer to Algorithm \ref{algorithm_EXP} for the pseudocode.

\begin{algorithm}
    \caption{Exploration with Termination}
    \label{algorithm_EXP}
    \While{True}
    {
        agent $a_i$ broadcasts 1-hop neighbours information and $a_i.count$ \\
        \If{$a_i.count>1$ or $a_i$ receives $a_j.count>1$ }
        {
            execute $\mathcal{DISP}$ at this round
        }
        \Else
        {
            \If{agent $a_i$ finds a hole in its 1-hop neighbour}
            {
                moves to the hole and terminates.
            }
            \ElseIf{ agent $a_i$ does not find a hole in its 1-hop neighbour}
            {
                terminates
            }
        } 
    }
\end{algorithm}
\subsubsection{Correctness and Analysis of the Algorithm}
The analysis of the algorithm is as follows. 
\begin{theorem}\label{thm:exp-1-int-sol}
    Our algorithm solves exploration with $n-1$ agents in $\Theta(n)$ rounds using $O(\log n)$ bits of memory per agent in the synchronous setting with global communication and 1-hop visibility.
\end{theorem}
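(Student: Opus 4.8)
The plan is to argue correctness, termination, time, and memory separately, leaning heavily on the guarantees of $\mathcal{DISP}$ from Theorem~\ref{thm:main_result}. The core observation is that Algorithm~\ref{algorithm_EXP} runs $\mathcal{DISP}$ verbatim as long as \emph{any} agent reports a multinode (via global communication), so until dispersion is reached the execution is exactly an execution of $\mathcal{DISP}$ on a $1$-Interval Connected graph with $n-1$ agents. By Theorem~\ref{thm:main_result}, after $\Theta(n-1)=\Theta(n)$ rounds the $n-1$ agents reach a dispersed configuration, i.e.\ every node holds at most one agent, and all agents detect this (no agent sees or hears of a multinode). At that point exactly one node $v$ is a hole. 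Since the graph is $1$-Interval Connected, $\mathcal{G}_r$ is connected in that round, so some agent $a_i$ has $v$ in its $1$-hop neighbourhood; by the \texttt{else} branch of the algorithm that agent moves onto $v$ and terminates, while all other agents, seeing no hole in their $1$-hop view, terminate in place. Hence every node of $G$ has been visited by the end of this round: all nodes occupied during the dispersed phase were visited, and $v$ is visited in the final round.

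For correctness I would state this as a short lemma: \emph{until the round in which dispersion is first achieved, Algorithm~\ref{algorithm_EXP} and $\mathcal{DISP}$ produce identical agent movements.} This is immediate from the pseudocode, since the test ``$a_i.count>1$ or $a_i$ receives $a_j.count>1$'' is true for \emph{every} agent precisely when the current configuration is not dispersed (global communication propagates the existence of a multinode to every agent in the connected component, and connectivity guarantees a single component). Then I invoke Theorem~\ref{thm:main_result} to get that dispersion occurs within $\Theta(n)$ rounds and is correctly detected. The only subtlety — and the step I expect to need the most care — is the transition round: I must check that when the configuration becomes dispersed, no agent is still mid-way through a $\mathcal{DISP}$ sliding step and that every agent's local test flips to the \texttt{else} branch in the \emph{same} round. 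This follows because $\mathcal{DISP}$ reaches dispersion at the end of a round (not mid-round) and, with global communication in a connected graph, all agents simultaneously learn that no multinode remains; so there is no ambiguity about which round is the terminating one.

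For the time bound, the $\Theta(n)$ dispersion phase from Theorem~\ref{thm:main_result} dominates, and exactly one extra round is spent finishing and terminating, giving $\Theta(n)$ rounds total; the matching $\Omega(n)$ lower bound is Theorem~\ref{thm:time_lower_exp}. For memory, $\mathcal{DISP}$ uses $\Theta(\log k)=\Theta(\log n)$ bits per agent (since $k=n-1$), and Algorithm~\ref{algorithm_EXP} adds only the counter $a_i.count$, which is at most $n$ and hence fits in $O(\log n)$ bits; combined with the $\Omega(\log n)$ lower bound inherited from dispersion, the memory is $\Theta(\log n)$. I would close by remarking that all of this is carried out in the synchronous model with global communication and $1$-hop visibility, exactly the assumptions shown necessary in Theorems~\ref{thm:imp_Exp_1-Interval}, \ref{thm:exp-1-hop}, and \ref{thm:exp-global}, so the algorithm is optimal in agents, time, and memory.
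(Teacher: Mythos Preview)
Your proposal is correct and follows essentially the same approach as the paper: run $\mathcal{DISP}$ until dispersion is detected via global communication, then use connectivity to guarantee some agent is adjacent to the lone hole and moves onto it in one final round, with time and memory inherited from Theorem~\ref{thm:main_result} and the $\Omega(n)$ lower bound from Theorem~\ref{thm:time_lower_exp}. Your treatment is in fact more careful than the paper's about the transition round; the only minor overreach is claiming a $\Theta(\log n)$ memory bound ``inherited from dispersion,'' since a memory lower bound for dispersion does not automatically transfer to exploration, but the theorem as stated only asserts $O(\log n)$, so this does not affect the proof.
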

\begin{proof}
    In 1-Interval Connected dynamic graphs, $\mathcal{G}_r$ is connected at each round $r$. Due to that, if there is any multinode at round $r$, all agents get the information of the multinode at round $r$, and agents run $\mathcal{DISP}$ at round $r$. Due to Theorem \ref{thm:main_result}, the agents achieve the dispersion in $\Theta(n)$ rounds and also understand that dispersion has been achieved with the help of global communication (as there will be no multinode). Since $n-1$ agents are in dispersed configuration, there will be a node $v$ which agents might not be able to visit. Since $\mathcal{G}$ is connected in each round $r$, and $n-1$ agents are present, the node $v$ is connected to a node $u$ which has an agent $a_i$. As per our algorithm, when agents do not hear a multinode, the agents move to the node which is a hole. In this case, $a_i$ moves to node $v$. This completes the exploration as each node is visited by agents at least once. Our algorithm takes $O(n)$ rounds and $O(\log n)$ memory per agent due to Theorem \ref{thm:main_result}. Due to Theorem \ref{thm:time_lower_exp}, our algorithm solves the exploration in $\Theta(n)$ rounds using $O(\log n)$ bits of memory per agent in the synchronous setting with global communication and 1-hop visibility. This completes the proof.
\end{proof}

\subsection{$T$-Path Connected Graph Exploration}\label{sec:EXP-T-Int}
In this section, we present an algorithm to solve exploration in $T$-Path Connected Graphs. Due to Remark \ref{rk:T-Path Connected} and \ref{rk:T-Path}, $n-1$ agents with global communication and 1-hop visibility are required to solve the exploration problem. We present an algorithm when agents are unaware of $T$. Since the agents are unaware of $n,k, T$, the agents can not understand whether the dispersion has been achieved. We have to modify the dispersion algorithm in such a way the exploration is achieved.

\begin{algorithm}
    \caption{Perpetual Exploration}
    \label{algorithm_EXP1}
    \While{True}
    {
agent $a_i$ broadcasts 1-hop neighbours information, $a_i.ID$ and $a_i.count$ \\
        \If{$a_i.count>1$}
        {
            \If{$a_i$ receives $a_j.count>1$ from some agent $a_j$}
            {
                \If{$a_i$ has a hole in 1-hop or it receives a hole from some agent $a_k$}
                {
                    \If{$a_i$ is minimum ID agent among $a_j$}
                    {
                        it computes sliding path as per $\mathcal{DISP}$
                    }
                    \ElseIf{$a_i$ is not minimum ID agent among $a_j$}
                    {
                        $a_i$ computes a sliding path based on minimum ID agent among $a_j$. It moves if $a_i$ is on the sliding path as per $DISP$ of $a_j$. Otherwise, it waits at its position.
                    }
                }
                \ElseIf{$a_i$ has no hole in 1-hop \& it does not receive a hole from some agent $a_k$}
                {
                    it stays at its position
                }
            }
            \If{$a_i$ does not receive $a_j.count>1$ from some agent $a_j$}
            {
                \If{$a_i$ has a hole in 1-hop or it receives a hole from some agent $a_k$}
                {
                    it computes sliding path as per $\mathcal{DISP}$
                }
                \ElseIf{$a_i$ has no hole in 1-hop $\&$ it does not receive a hole from some agent $a_k$}
                {
                    it stays at its position
                }
            }
        }
        \ElseIf{$a_i.count=1$}
        {
            \If{it receives $a_j.count>1$ from some agent $a_j$}
            {
                \If{$a_i$ has a hole in 1-hop or it receives a hole from some agent $a_k$}
                {
                    $a_i$ computes a sliding path based on minimum ID agent among $a_j$. It moves if $a_i$ is on the sliding path as per $DISP$ of $a_j$. Otherwise, it waits at its position.
                }
                \ElseIf{$a_i$ has no hole in 1-hop \& it does not receive a hole from some agent $a_k$}
                {
                    it stays at its position
                }
            }
            \ElseIf{it does not receives $a_j.count>1$ from some agent $a_j$}
            {
                \If{$a_i$ is minimum ID with a hole in its neighbour}
                {
                    $a_i$ moves through the minimum port, which leads to a hole
                }
                \Else
                {
                    wait at the current node
                }
            }    
        }
}
\end{algorithm}
\subsubsection{The Algorithm}
Here, we modify the algorithm which we mentioned in Section \ref{sec:algorithm}. A detailed description of the algorithm is as follows. Agents maintain a parameter which is as follows. 
\begin{itemize}
    \item $a_i.count:$  It represents the number of agents present in the node, including $a_i$. If $a_i.count=1$, then the current node where $a_i$ resides is not a multinode; else, if $a_i.count>1$, then the current node is a multinode. 
\end{itemize}

At round $r$, $\mathcal{G}_r$ may be disconnected. In each round $r$, agent $a_i$ broadcast the information of its 1-hop along with $a_i.count$. Based on this information, agent $a_i$ does the following.

 \begin{enumerate}
        \item If agent $a_i.count>1$ and it receives $a_j.count>1$ from at least one other agent $a_j$, it works as mentioned below. 
        \begin{itemize}
            \item If $a_i$ receives a hole information, it performs the following steps.
            \begin{itemize}
                \item If $a_i$ is minimum ID among $a_j$s, then it runs $\mathcal{DISP}$.
                \item If $a_i$ is not minimum ID among $a_j$s, then it moves as follows. Let $a_j$ be the minimum ID agent. If $a_i$ is on the sliding path as per $\mathcal{DISP}$ of $a_j$, then it moves. Otherwise, it stays at its position.
            \end{itemize}
            \item If $a_i$ does not receive hole information, it stays at its position.
        \end{itemize}
        \item If agent $a_i.count>1$ does not receive $a_j.count>1$ from some agent $a_j$, then it makes the following decision.
        \begin{itemize}
            \item If $a_i$ receives a hole information, then it runs $\mathcal{DISP}$.
            \item If $a_i$ does not receive hole information, it stays at its position.
        \end{itemize}
        \item If agent $a_i.count=1$ and it receives $a_j.count>1$ from some agent $a_j$, then it makes the following decision.
        \begin{itemize}
            \item Let $a_k$ be minimum ID agent among $a_j$s. If $a_i$ receives a hole information or there is a hole in 1-hop of $a_i$, and $a_i$ is on the sliding path based on $\mathcal{DISP}$ of $a_j$, then it moves on the sliding path.
            \item If $a_i$ does not receive the hole information and there is no hole in its neighbours, it stays at its position.
        \end{itemize}

        \item If agent $a_i.count=1$ and it does not receive $a_j.count>1$ from some agent $a_j$, then it makes the following decision.
        \begin{itemize}
            \item If $a_i$ is a minimum ID agent with a hole in its neighbour, then it moves through the minimum port, which leads to the hole. If not, then it stays at its position.  
            \item If the neighbour of $a_i$ is not a hole and also does not receive information about the hole, then it stays at its position.
        \end{itemize}
    \end{enumerate}

    Refer to Algorithm \ref{algorithm_EXP1} for the pseudocode.
\begin{figure}[h]
\centering
\begin{subfigure}[b]{0.32\textwidth}
\centering
  \scalebox{0.8}{
\begin{tikzpicture}
\node[noeud] (0) at (0,1) {$2$};
\node[noeud,fill=red] (1) at (1,1) {};
\node[noeud] (2) at (2,1) {};
\node[noeud] (3) at (3,1) {};
\draw (0)to(1);
            \draw (2)to(3);
\node (empty) at (0,0) {};
            \node (v_1) at (0,1.5) {$v_1$};
            \node (v_2) at (1,1.5) {$v_2$};
            \node (v_3) at (2,1.5) {$v_3$};
            \node (v_4) at (3,1.5) {$v_4$};
\end{tikzpicture}
  }
\caption{$\mathcal{G}_r$, where $r (\text{mod}\; 6)=0$.}
\label{Fig:Case-1}
\end{subfigure}\hfill
\begin{subfigure}[b]{0.32\textwidth}
\centering
  \scalebox{0.8}{
\begin{tikzpicture}
        \node[noeud] (0) at (-1,1) {$2$};
\node [rotate=270,inner sep=0pt] (0') at (0.6,0.5){\tikz\draw[-triangle 90](1,-0.5);};
\node[noeud,fill=red] (1) at (0,1) {};
\node[noeud] (2) at (1,1) {};
\node[noeud] (3) at (2,1) {};
\draw (1) to (2)to (3);
            \draw[bend right] (1)to(0.5,0.5)to(2);
\node (empty) at (0,0) {};
            \node (v_1) at (-1,1.5) {$v_1$};
            \node (v_2) at (0,1.5) {$v_2$};
            \node (v_3) at (1,1.5) {$v_3$};
            \node (v_4) at (2,1.5) {$v_4$};

  \end{tikzpicture}}
\caption{$\mathcal{G}_r$, where $r (\text{mod}\;6)=1$.}
\label{Fig:Case-2}
\end{subfigure}\hfill
\begin{subfigure}[b]{0.32\textwidth}
\centering
  \scalebox{0.8}{
\begin{tikzpicture}
     \node[noeud] (0) at (0,1) {$2$};
\node[noeud,fill=red] (1) at (1,1) {};
\node[noeud] (2) at (2,1) {};
\node[noeud] (3) at (3,1) {};
\draw (0)to(1);
            \draw (2)to(3);
\node (empty) at (0,0) {};
            \node (v_1) at (0,1.5) {$v_1$};
            \node (v_2) at (1,1.5) {$v_3$};
            \node (v_3) at (2,1.5) {$v_4$};
            \node (v_4) at (3,1.5) {$v_2$};
\end{tikzpicture}
  }
\caption{$\mathcal{G}_r$, where $r (\text{mod}\, 6)=2$.}
\label{Fig:Case-3}
\end{subfigure}\hfill
    \begin{subfigure}[b]{0.32\textwidth}
\centering
  \scalebox{0.8}{
\begin{tikzpicture}
        \node[noeud] (0) at (-1,1) {$2$};
\node [rotate=270,inner sep=0pt] (0') at (0.6,0.5){\tikz\draw[-triangle 90](1,-0.5);};
\node[noeud,fill=red] (1) at (0,1) {};
\node[noeud] (2) at (1,1) {};
\node[noeud] (3) at (2,1) {};
\draw (1) to (2)to (3);
            \draw[bend right] (1)to(0.5,0.5)to(2);
            \node (empty) at (0,2.5) {};
\node (empty) at (0,0) {};
            \node (v_1) at (-1,1.5) {$v_1$};
            \node (v_2) at (0,1.5) {$v_3$};
            \node (v_3) at (1,1.5) {$v_4$};
            \node (v_4) at (2,1.5) {$v_2$};

\end{tikzpicture}}
\caption{$\mathcal{G}_r$, where $r (\text{mod}\; 6)=3$.}
\label{Fig:Case-4}
\end{subfigure}\hfill
\begin{subfigure}[b]{0.32\textwidth}
\centering
  \scalebox{0.8}{
\begin{tikzpicture}
\node[noeud] (0) at (0,1) {$2$};
\node[noeud,fill=red] (1) at (1,1) {};
\node[noeud] (2) at (2,1) {};
\node[noeud] (3) at (3,1) {};
\draw (0)to(1);
            \draw (2)to(3);
\node (empty) at (0,0) {};
            \node (v_1) at (0,1.5) {$v_1$};
            \node (v_2) at (1,1.5) {$v_4$};
            \node (v_3) at (2,1.5) {$v_2$};
            \node (v_4) at (3,1.5) {$v_3$};
\end{tikzpicture}
  }
\caption{$\mathcal{G}_r$, where $r (\text{mod}\; 6)=4$.}
\label{Fig:Case-5}
\end{subfigure}\hfill
\begin{subfigure}[b]{0.32\textwidth}
\centering
  \scalebox{0.8}{
\begin{tikzpicture}
        \node[noeud] (0) at (-1,1) {$2$};
\node [rotate=270,inner sep=0pt] (0') at (0.6,0.5){\tikz\draw[-triangle 90](1,-0.5);};
\node[noeud,fill=red] (1) at (0,1) {};
\node[noeud] (2) at (1,1) {};
\node[noeud] (3) at (2,1) {};
\draw (1) to (2)to (3);
            \draw[bend right] (1)to(0.5,0.5)to(2);
\node (empty) at (0,0) {};
            \node (v_1) at (-1,1.5) {$v_1$};
            \node (v_2) at (0,1.5) {$v_4$};
            \node (v_3) at (1,1.5) {$v_2$};
            \node (v_4) at (2,1.5) {$v_3$};

  \end{tikzpicture}}
\caption{$\mathcal{G}_r$, where $r (\text{mod}\; 6)=5$.}
\label{Fig:Case-6}
\end{subfigure}
\caption{This is the example of our algorithm for $T = 6$ and $n = 4$ where the dispersion is not achieved, but the
exploration has been achieved as per our algorithm. In this figure, the red colour node has one agent, node $v_1$ has
two agents, and other nodes are holes.}
\label{fig-}
\end{figure}
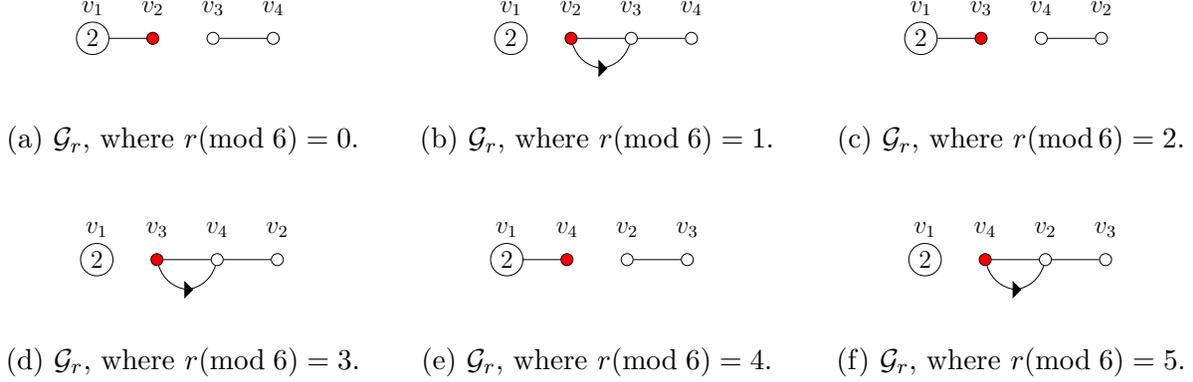

\begin{remark}
       The adversary can form a sequence of graphs in such a way the dispersion is not solved using Algorithm \ref{algorithm_EXP1}. Let $T=6$, and $n=4$. At round $r=0$, node $v_1$ is a multinode, $v_2$ is a node with an agent, and $v_3$, $v_4$ are holes. At round $r=1$ (refer Fig. \ref{Fig:Case-1}), a path of length 1 using nodes $v_1$ and $v_2$ is formed, and a path of length 1 using nodes $v_3$ and $v_4$ is formed. As per Algorithm \ref{algorithm_EXP1}, agents stay in their position in $r=0$. At round $r=1$ (refer Fig. \ref{Fig:Case-2}), node $v_1$ becomes an isolated node and a path $P_1=v_2\sim v_3\sim v_4$ is formed. As per Algorithm \ref{algorithm_EXP1} at round $r=1$, the agent moves to node $v_3$ from node $v_2$. At round $r=2$ (refer Fig. \ref{Fig:Case-3}), a path of length 1 using nodes $v_1$ and $v_3$ is formed, and a path of length 1 using nodes $v_4$ and $v_2$ is formed. As per Algorithm \ref{algorithm_EXP1}, agents stay in their position in $r=2$. At round $r=3$ (refer Fig. \ref{Fig:Case-3}), node $v_1$ becomes an isolated node and a path $P_2=$ $v_3 \sim v_4\sim v_2$ is formed. As per Algorithm \ref{algorithm_EXP1} at round $r=3$, the agent moves to node $v_4$ from node $v_3$. At round $r=4$ (refer Fig. \ref{Fig:Case-4}), a path of length 1 using nodes $v_1$ and $v_4$ is formed, and a path of length 1 using nodes $v_2$ and $v_3$ is formed. As per Algorithm \ref{algorithm_EXP1}, agents stay in their position in $r=4$. At round $r=5$ (refer Fig. \ref{Fig:Case-5}), node $v_1$ becomes an isolated node and a path $P_3=v_4\sim v_2\sim v_3$ is formed. As per Algorithm \ref{algorithm_EXP1} at round $r=5$, the agent moves to node $v_2$ from node $v_4$. If $r$(mod 6)=0, the adversary maintains $\mathcal{G}_r=\mathcal{G}_0$. If $r$(mod 6)=1, the adversary maintains $\mathcal{G}_r=\mathcal{G}_1$. If $r$(mod 6)=2, the adversary maintains $\mathcal{G}_r=\mathcal{G}_2$. If $r$(mod 6)=3, the adversary maintains $\mathcal{G}_r=\mathcal{G}_3$. If $r$(mod 6)=4, the adversary maintains $\mathcal{G}_r=\mathcal{G}_4$. If $r$(mod 6)=5, the adversary maintains $\mathcal{G}_r=\mathcal{G}_5$. Clearly, the adversary maintains the definition of $T(=6)$-Path Connectivity property. The dispersion is not achieved as per our algorithm, but the exploration has been achieved.
\end{remark}

\subsubsection{Correctness and Analysis of the Algorithm}
In this section, we prove that our algorithm solved the exploration problem. Let $T\geq 1$, and $l\geq 2$ be the number of holes be in $\mathcal{G}_r$, where $r$ (mod $T$)=0. 
\begin{lemma}\label{lm:exp-correc}
    If $l$ is not reduced by 1 between $t\in [r, \,r+T-1]$ and there is at least one multinode at the beginning of round $r$, where $r$ (mod $T$)=0, then the exploration has been achieved. 
\end{lemma}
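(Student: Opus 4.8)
The plan is to prove the statement directly, showing that if the hole count $l$ does not drop during the window $[r,\,r+T-1]$, then every node of $G$ is occupied in some round of that window, which is exactly what ``exploration has been achieved'' means. The first step is a monotonicity observation for Algorithm~\ref{algorithm_EXP1}: the only agent moves are (i) a $\mathcal{DISP}$ slide, which pushes one agent off a multinode (that still keeps $\ge 1$ agent) through occupied nodes into a hole, strictly decreasing the number of holes; and (ii) a single count-one agent roaming into a neighbouring hole, which keeps the number of holes unchanged, since the vacated node becomes the new hole. Hence the number of holes is non-increasing, so ``$l$ not reduced over $[r,\,r+T-1]$'' forces that \emph{no} $\mathcal{DISP}$ slide is executed in any round of the window and that the hole count stays exactly $l\ge 2$ throughout.

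The second step converts this into a structural property of each snapshot $\mathcal{G}_t$, $t\in[r,\,r+T-1]$. Under global communication, all agents in a connected component see each other's state, so by the case analysis of Algorithm~\ref{algorithm_EXP1} a $\mathcal{DISP}$ slide is triggered in a component precisely when that component contains both a multinode and a hole, and such a slide fills at least one hole (the basic guarantee of $\mathcal{DISP}$, cf.\ Theorem~\ref{thm:main_result}). Combining with Step~1: for every $t\in[r,\,r+T-1]$, no connected component of $\mathcal{G}_t$ contains both a multinode and a hole. Moreover, since roaming only relocates count-one agents, no agent is ever removed from a multinode during the window, so any multinode present at the start of round $r$ (one exists by hypothesis, and in any case by pigeonhole since $n-1$ agents sit on at most $n-2$ nodes when $l\ge 2$) stays a multinode for all of $[r,\,r+T-1]$; fix such a multinode $u$.

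The third step is the payoff via $T$-Path Connectivity. Let $v$ be an arbitrary node. Applying the definition of $T$-Path Connectivity to the interval starting at $r$, there is a round $t\in[r,\,r+T-1]$ in which $u$ and $v$ lie in the same connected component $C$ of $\mathcal{G}_t$. Since $u\in C$ is a multinode at round $t$, Step~2 says $C$ has no hole, hence $v$ is occupied at round $t$; in particular $v$ has been visited by the end of round $t\le r+T-1$. As $v$ was arbitrary, every node of $G$ has been visited by the end of the window, so exploration has been achieved.

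The routine parts are the per-move bookkeeping of Step~1 and reading off from Algorithm~\ref{algorithm_EXP1} that a slide occurs iff a multinode and a hole coexist in a component; these are immediate from the algorithm's branches and the stated behaviour of $\mathcal{DISP}$. The step needing the most care — and the one I expect to be the main obstacle — is the ``frozen multinode'' claim inside Step~2: one must argue that a multinode cannot silently vanish or migrate at \emph{any} intermediate round of the window under the adversary's reconfigurations. This rests on combining that (a) absent a $\mathcal{DISP}$ slide no agent leaves a multinode, and (b) the ``no component with a multinode and a hole'' condition must hold in every single round $t$ of the window (if it failed even once, a slide would fire and $l$ would drop), so in particular a multinode's component can never be merged with a hole during the window.
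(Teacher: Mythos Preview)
Your proof is correct and follows essentially the same approach as the paper's: fix a persisting multinode, use $T$-Path Connectivity to place every node in its component at some round of the window, and infer that component is hole-free since otherwise a $\mathcal{DISP}$ slide would have fired and reduced $l$. Your write-up is more careful than the paper's on the two supporting points---the monotonicity of the hole count under the ``roaming'' move (which the paper leaves implicit) and the persistence of the multinode across the window---and you correctly flag the latter as the step needing the most care; the paper simply asserts ``$v_1$ remains multinode'' without spelling out why.
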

\begin{proof}
    Let $v_1$, $v_2$,\ldots, $v_k$ be multinode in $\mathcal{G}_r$ at round $r$, $k\geq 1$, and $w_1$, $w_2$, \ldots, $w_{k'}$ be holes at round $r$, $r$ (mod $T$)=0 (some of these holes can be explored nodes). Let $G_1$ be a connected component of $\mathcal{G}_r$ with at least one hole. As per Algorithm \ref{algorithm_EXP1}, the number of holes can not be reduced in $G_1$, if there is no multinode in $G_1$.
    
    If $l$ is not reduced between round $r$ and $r+T-1$, then $v_1$ remains multinode between $r$ and $r+T-1$ rounds. As per the definition of $T$-Path Connectivity
    , there exists $t\in[r, \,r+T-1]$ such that there is a path between $v_1$ and $w_i$ for every $i \in [1, \, k']$. It implies that node $v_1$ and node $w_i$ are in the same connected component (say $G'$) at some round $t\in[r, r+T-1]$. At round $t$, if the number of holes is not reduced in $G'$, there is an agent at each node of $G'$. It implies an agent is at node $w_i$ at round $t$. This completes the proof. 
\end{proof}

\begin{theorem}\label{thm:exp-T-path-sol}
    Our algorithm solves exploration with $n-1$ agents in $\Theta(n\cdot T)$ rounds using $O(\log n)$ bits of memory per agent in the synchronous setting with global communication and 1-hop visibility.
\end{theorem}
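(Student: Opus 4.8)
The plan is to analyse Algorithm~\ref{algorithm_EXP1} in two phases and then invoke Theorem~\ref{thm:time_lower_exp1} for optimality. Throughout, note that with $n-1$ agents on $n$ nodes there is always at least one hole, and the configuration is dispersed precisely when there is exactly one hole and no multinode. The first phase lasts as long as some multinode exists; in such rounds the moves dictated by Algorithm~\ref{algorithm_EXP1} are exactly those of Algorithm~\ref{algorithm:T-Path-disp} (the two algorithms differ only in rounds with no multinode), so Observation~\ref{obs:multinodes} applies and the number of multinodes is non-increasing --- in particular, once it reaches $0$ it stays $0$. I would then use Lemma~\ref{lm:exp-correc}: for every $T$-aligned window $[r,r+T-1]$ that begins with a multinode present, either the number of holes drops by at least one during that window, or exploration is already complete. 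Since the hole count starts at most $n-1$ and equals $1$ in the dispersed configuration, at most $n-2$ such windows can occur, so after $O(n\cdot T)$ rounds we either have finished exploration or have reached (and thereafter keep) a dispersed configuration.

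For the second phase, first observe that because holes can only disappear while a multinode is present, the unique hole $v^{\star}$ remaining at the end of the first phase was a hole in every preceding round, whereas every other node either held an agent at round $0$ or acquired one when its hole was filled --- so all nodes except possibly $v^{\star}$ have already been visited. I would finish $v^{\star}$ using $T$-Path Connectivity applied to the pair $v^{\star}$ and any other node: within $T$ further rounds there is a round at which $v^{\star}$ lies in a component of size at least two, hence has a neighbour holding an agent, and then the $a_i.count=1$ branch of Algorithm~\ref{algorithm_EXP1} makes the minimum-ID such agent step onto $v^{\star}$, which visits it; the node that agent vacates (already visited) becomes the new, still unique, hole, and no multinode is created. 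Repeating this argument gives the perpetual guarantee: at every round past dispersion there is exactly one hole, and for any fixed node $w$ and any round $r$, choosing via $T$-Path Connectivity a round $t\in[r,r+T-1]$ at which $w$ lies in a component of size at least two, the node $w$ is visited at round $t$ --- it either already holds an agent or receives one there. Hence every node is visited within the first $O(n\cdot T)$ rounds, and from then on within every window of $T$ rounds.

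For optimality, the upper bound $O(n\cdot T)$ follows from the two phases, while the matching $\Omega(n\cdot T)$ lower bound is Theorem~\ref{thm:time_lower_exp1}, so the running time is $\Theta(n\cdot T)$. For memory, the only persistent state an agent needs is its identifier and the $O(\log k)$ bits used by the subroutine $\mathcal{DISP}$; the value $a_i.count$ and the received $1$-hop and global-communication messages are recomputed each round in temporary memory, so $O(\log n)$ bits per agent suffice. I expect the main obstacle to be the second phase: one must rule out that the adversary can keep the lone hole shuttling among already-visited nodes so that $v^{\star}$ (and, for the perpetual claim, some node) is starved forever, and the resolution is precisely that $T$-Path Connectivity forces the hole to have an agent-neighbour --- and every fixed node to enter a non-trivial component --- at least once in every $T$ consecutive rounds; a secondary point to verify is that this hole-moving step never re-creates a multinode.
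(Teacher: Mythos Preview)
Your proposal is correct and follows the same two-phase structure as the paper's proof: invoke Lemma~\ref{lm:exp-correc} to bound Phase~1 by $O(n\cdot T)$ rounds, then use $T$-Path Connectivity to finish the last hole in $T$ more rounds, and appeal to Theorem~\ref{thm:time_lower_exp1} for the matching lower bound; you are in fact more careful than the paper about the perpetual guarantee and about checking that the hole-shifting step never recreates a multinode.

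One small imprecision: your claim that ``the unique hole $v^{\star}$ remaining at the end of the first phase was a hole in every preceding round'' need not hold, because even while a multinode exists somewhere, Algorithm~\ref{algorithm_EXP1} performs the hole-shifting move in components that contain no multinode, so individual holes can migrate during Phase~1 (only the \emph{count} of holes is monotone). This does not affect your conclusion---at the end of Phase~1 every node other than $v^{\star}$ currently hosts an agent and is therefore visited---so you can simply drop the ``always a hole'' justification and state that directly.
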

\begin{proof}
    Due to Lemma \ref{lm:exp-correc}, we can say with the first $(n-1) \cdot T$ rounds, either the agents dispersed or the exploration is achieved. If agents achieve dispersion, then $n-1$ agents are at $n-1$ distinct nodes. In this case, there is one node (say $v$) which is a hole which might not be explored yet. Due to $T$-Path Connectivity within the next $T$ rounds, the node $v$ is connected to one node, which has one agent, and it gets explored by the agent as per our algorithm. The time complexity of the exploration is $O(n\cdot T)$, and the memory per agent is $O(\log n)$ bits. Due to Theorem \ref{thm:time_lower_exp}, our algorithm solves exploration with $n-1$ agents in $\Theta(n\cdot T)$ rounds using $O(\log n)$ bits of memory per agent in the synchronous setting with global communication and 1-hop visibility.
\end{proof}

\section{Conclusion}
In this work, we introduce a new connectivity model for dynamic graphs, namely, the $T$-Path Connectivity. We compare our model with existing connectivity models and discuss the status of the dispersion problem within those models, providing several impossibility results. Additionally, we study the exploration problem across all three connectivity models and provide impossibility results and optimal solutions in most cases. In the case of the Connectivity Time model, we show that it is impossible for \( k \leq n \) to solve the exploration problem. A future question to consider is: what is the value of \( k \) for which the exploration problem in Connectivity Time graphs becomes solvable?


\section{Acknowledgement}
Ashish Saxena would like to acknowledge the financial support from IIT Ropar. Kaushik Mondal would like to acknowledge the ISIRD grant provided by IIT Ropar.

\bibliography{bib}

\begin{thebibliography}{10}

\bibitem{Agarwalla_2018}
A.~Agarwalla, J.~Augustine, W.~K. Moses, S.~K. Madhav, and A.~K. Sridhar.
\newblock Deterministic dispersion of mobile robots in dynamic rings.
\newblock ICDCN '18, New York, NY, USA, 2018. Association for Computing Machinery.

\bibitem{Augustine_2018}
J~Augustine and W.~K. Moses.
\newblock Dispersion of mobile robots: A study of memory-time trade-offs.
\newblock ICDCN '18, 2018.

\bibitem{avin2008explore}
Chen Avin, Michal Kouck{\`y}, and Zvi Lotker.
\newblock How to explore a fast-changing world (cover time of a simple random walk on evolving graphs).
\newblock In {\em ICALP 2008}, pages 121--132. Springer, 2008.

\bibitem{bournat2016self}
Marjorie Bournat, Ajoy~K Datta, and Swan Dubois.
\newblock Self-stabilizing robots in highly dynamic environments.
\newblock In {\em SSS 2016}, pages 54--69. Springer, 2016.

\bibitem{bournat2017computability}
Marjorie Bournat, Swan Dubois, and Franck Petit.
\newblock Computability of perpetual exploration in highly dynamic rings.
\newblock In {\em ICDCS 2017}, pages 794--804. IEEE, 2017.

\bibitem{Das_2018}
S.~Das, D.~Dereniowski, and C.~Karousatou.
\newblock Collaborative exploration of trees by energy-constrained mobile robots.
\newblock {\em Theor. Comp. Sys.}, 62(5):1223–1240, jul 2018.

\bibitem{di2020distributed}
G~Di~Luna, Stefan Dobrev, Paola Flocchini, and Nicola Santoro.
\newblock Distributed exploration of dynamic rings.
\newblock {\em Distributed Computing}, 33:41--67, 2020.

\bibitem{Erlebach_2015}
Thomas Erlebach, Michael Hoffmann, and Frank Kammer.
\newblock On temporal graph exploration.
\newblock In Magn{\'u}s~M. Halld{\'o}rsson, Kazuo Iwama, Naoki Kobayashi, and Bettina Speckmann, editors, {\em Automata, Languages, and Programming}, pages 444--455, Berlin, Heidelberg, 2015. Springer Berlin Heidelberg.

\bibitem{erlebach2019two}
Thomas Erlebach, Frank Kammer, Kelin Luo, Andrej Sajenko, and Jakob~T Spooner.
\newblock Two moves per time step make a difference.
\newblock In {\em ICALP 2019}, page 141. Schloss Dagstuhl-Leibniz-Zentrum f{\"u}r Informatik, 2019.

\bibitem{erlebach2018faster}
Thomas Erlebach and Jakob~T Spooner.
\newblock Faster exploration of degree-bounded temporal graphs.
\newblock 2018.

\bibitem{flocchini2012searching}
Paola Flocchini, Matthew Kellett, Peter~C Mason, and Nicola Santoro.
\newblock Searching for black holes in subways.
\newblock {\em Theory of Computing Systems}, 50:158--184, 2012.

\bibitem{flocchini2013exploration}
Paola Flocchini, Bernard Mans, and Nicola Santoro.
\newblock On the exploration of time-varying networks.
\newblock {\em Theoretical Computer Science}, 469:53--68, 2013.

\bibitem{Pelc_2006}
P.~Fraigniaud, L.~Gasieniec, D.~R. Kowalski, and A.~Pelc.
\newblock Collective tree exploration.
\newblock {\em Networks}, 48(3):166--177, 2006.

\bibitem{Pelc_2004}
Pierre Fraigniaud, Leszek Gasieniec, Dariusz~R. Kowalski, and Andrzej Pelc.
\newblock Collective tree exploration.
\newblock In Mart{\'i}n Farach-Colton, editor, {\em LATIN 2004: Theoretical Informatics}, pages 141--151, 2004.

\bibitem{GOTOH2021}
Tsuyoshi Gotoh, Paola Flocchini, Toshimitsu Masuzawa, and Nicola Santoro.
\newblock Exploration of dynamic networks: Tight bounds on the number of agents.
\newblock {\em Journal of Computer and System Sciences}, 122:1--18, 2021.

\bibitem{gotoh2018group}
Tsuyoshi Gotoh, Yuichi Sudo, Fukuhito Ooshita, Hirotsugu Kakugawa, and Toshimitsu Masuzawa.
\newblock Group exploration of dynamic tori.
\newblock In {\em ICDCS 2018}, pages 775--785. IEEE, 2018.

\bibitem{gotoh2019exploration}
Tsuyoshi Gotoh, Yuichi Sudo, Fukuhito Ooshita, and Toshimitsu Masuzawa.
\newblock Exploration of dynamic ring networks by a single agent with the h-hops and s-time steps view.
\newblock In {\em SSS 2019}, pages 165--177. Springer.

\bibitem{ilcinkas2014exploration}
David Ilcinkas, Ralf Klasing, and Ahmed~Mouhamadou Wade.
\newblock Exploration of constantly connected dynamic graphs based on cactuses.
\newblock In {\em International Colloquium on Structural Information and Communication Complexity}, pages 250--262. Springer, 2014.

\bibitem{ilcinkas2018exploration}
David Ilcinkas and Ahmed~M Wade.
\newblock Exploration of the t-interval-connected dynamic graphs: the case of the ring.
\newblock {\em Theory of Computing Systems}, 62:1144--1160, 2018.

\bibitem{ilcinkas2011power}
David Ilcinkas and Ahmed~Mouhamadou Wade.
\newblock On the power of waiting when exploring public transportation systems.
\newblock In {\em OPODIS 2011}, pages 451--464. Springer, 2011.

\bibitem{Ajay_2019}
A.~D. Kshemkalyani and F.~Ali.
\newblock Efficient dispersion of mobile robots on graphs.
\newblock ICDCN '19, page 218–227, 2019.

\bibitem{Molla_2019}
A.~D. Kshemkalyani, A.~R. Molla, and G.~Sharma.
\newblock Fast dispersion of mobile robots on arbitrary graphs.
\newblock In Falko Dressler and Christian Scheideler, editors, {\em Algorithms for Sensor Systems}, pages 23--40, Cham, 2019. Springer International Publishing.

\bibitem{Molla_2020}
A.~D. Kshemkalyani, A.~R. Molla, and G.~Sharma.
\newblock Dispersion of mobile robots in the global communication model.
\newblock ICDCN '20, 2020.

\bibitem{Ajay_2020}
A.~D. Kshemkalyani, A.~R. Molla, and G.~Sharma.
\newblock Dispersion of mobile robots on grids.
\newblock page 183–197, Berlin, Heidelberg, 2020. Springer-Verlag.

\bibitem{Ajay_dynamicdisp}
A.~D. Kshemkalyani, A.~R. Molla, and G.~Sharma.
\newblock Efficient dispersion of mobile robots on dynamic graphs.
\newblock In {\em ICDCS 2020}, pages 732--742, 2020.

\bibitem{Kuhn_2010}
F.~Kuhn, N.~Lynch, and R.~Oshman.
\newblock Distributed computation in dynamic networks.
\newblock In {\em Proceedings of the Forty-Second ACM Symposium on Theory of Computing}, page 513–522, New York, NY, USA, 2010. Association for Computing Machinery.

\bibitem{Michail_2014}
O.~Michail, I.~Chatzigiannakis, and P.~G. Spirakis.
\newblock Causality, influence, and computation in possibly disconnected synchronous dynamic networks.
\newblock {\em Journal of Parallel and Distributed Computing}, 74(1):2016--2026, 2014.

\bibitem{Avery_2020}
A.~Miller and U.~Saha.
\newblock Fast byzantine gathering with visibility in graphs.
\newblock In Cristina~M. Pinotti, Alfredo Navarra, and Amitabha Bagchi, editors, {\em Algorithms for Sensor Systems}, pages 140--153, Cham, 2020. Springer International Publishing.

\bibitem{Ortolf_2012}
C.~Ortolf and C.~Schindelhauer.
\newblock Online multi-robot exploration of grid graphs with rectangular obstacles.
\newblock In {\em SPAA 2012}, page 27–36, New York, NY, USA, 2012. Association for Computing Machinery.

\bibitem{saxena_2025}
Ashish Saxena and Kaushik Mondal.
\newblock Path connected dynamic graphs with a study of efficient dispersion.
\newblock ICDCN '25, page 171–180, New York, NY, USA, 2025. Association for Computing Machinery.

\bibitem{shannon1993}
Claude~E Shannon.
\newblock Presentation of a maze-solving machine.
\newblock {\em Claude Elwood Shannon Collected Papers}, pages 681--687, 1993.

\bibitem{Shibata_2016}
M.~Shibata, T.~Mega, F.~Ooshita, H.~Kakugawa, and T.~Masuzawa.
\newblock Uniform deployment of mobile agents in asynchronous rings.
\newblock In {\em PODC 2016}, page 415–424, 2016.

\end{thebibliography}

\end{document}